\newcommand{\N}{\mathbb{N}}                     
\newcommand{\R}{\mathbb{R}}                     
\newcommand{\Z}{\mathbb{Z}}  			  	 
\newcommand{\poly}{\text{poly}}
\newcommand{\eps}{\varepsilon}
\newcommand{\wh}[1]{\widehat{#1}}
\newcommand{\E}{\mathbb{E}}
\newcommand{\pr}[1]{\text{\bf Pr}\normalfont\lbrack #1 \rbrack} 
\newcommand{\ex}[1]{\mathbb{E}\normalfont\lbrack #1 \rbrack}
\newcommand{\bex}[1]{\mathbb{E}\normalfont \Big[#1 \Big]}
\newtheorem{theorem}{Theorem}[section]
\newtheorem{lemma}[theorem]{Lemma}
\newtheorem{corollary}[theorem]{Corollary}
\newtheorem{proposition}[theorem]{Proposition}
\theoremstyle{definition}
\newtheorem{definition}[theorem]{Definition}
\newtheorem{remark}[theorem]{Remark}
\begin{document}

\title{A Framework for Adversarially Robust Streaming Algorithms}
\author{
	Omri Ben-Eliezer\thanks{Massachusetts Institute of Technology. Work partially conducted while the author was at Tel Aviv University and later at Harvard University. Email: \texttt{omrib@mit.edu}} 
	\and
Rajesh Jayaram\thanks{Google Research. Work partially conducted while the author was at Carnegie Mellon University, where he was supported by the Office of Naval Research (ONR) grant N00014-18-1-2562, and the National Science Foundation (NSF) under Grant No. CCF-1815840. Email: \texttt{rkjayaram@google.com}} 
\and
David P.~Woodruff\thanks{Carnegie Mellon University. Supported by the Office of Naval Research (ONR) grant N00014-18-1-2562, and the National Science Foundation (NSF) under Grant No. CCF-1815840. Email: \texttt{dwoodruf@cs.cmu.edu}}
\and
Eylon Yogev\thanks{Bar-Ilan University. Work partially conducted while the author was at Tel Aviv University. Email: \texttt{eylon.yogev@biu.ac.il}}}
\date{}
\maketitle

\begin{abstract}
We investigate the adversarial robustness of streaming algorithms. In this context, an algorithm is considered robust if its performance guarantees hold even if the stream is chosen adaptively by an adversary that observes the outputs of the algorithm along the stream and can react in an online manner. While deterministic streaming algorithms are inherently robust, many central problems in the streaming literature do not admit sublinear-space deterministic algorithms; on the other hand, classical space-efficient randomized algorithms for these problems are generally not adversarially robust. This raises the natural question of whether there exist efficient adversarially robust (randomized) streaming algorithms for these problems.

In this work, we show that the answer is positive for various important streaming problems in the insertion-only model, including distinct elements and more generally $F_p$-estimation, $F_p$-heavy hitters, entropy estimation, and others.
For all of these problems, we develop adversarially robust $(1+\varepsilon)$-approximation algorithms whose required space matches that of the best known non-robust algorithms up to a $\poly(\log n, 1/\varepsilon)$ 
multiplicative factor (and in some cases even up to a constant factor). Towards this end, we develop several generic tools allowing one to efficiently transform a non-robust streaming algorithm into a robust one in various scenarios.
\end{abstract}

\section{Introduction}
The streaming model of computation is a central and crucial tool for the analysis of massive datasets, where the sheer size of the input imposes stringent restrictions on the memory, computation time, and other resources available to the algorithms. Examples of theoretical and practical settings where streaming algorithms are in need are easy to encounter. These include internet routers and traffic logs, databases, sensor networks, financial transaction data, and scientific data streams. Given this wide range of applicability, there has been significant effort devoted to designing and analyzing extremely efficient one-pass algorithms. We recommend the survey of \cite{Muthu} for a comprehensive overview of streaming algorithms and their applications.

Many central problems in the streaming literature do not admit sublinear-space deterministic algorithms, and in these cases randomized solutions are necessary. In other cases, randomized solutions are more efficient and simpler to implement than their deterministic counterparts. While randomized streaming algorithms are well-studied, the vast majority of them are defined and analyzed in the {\em static} setting, where the stream is worst-case but fixed in advance, and only then the randomness of the algorithm is chosen. However, assuming that the stream sequence is independent of the chosen randomness, and in particular that future elements of the stream do not depend on previous outputs of the streaming algorithm, may not be realistic \cite{mironov2011sketching,gilbert2012recovering,gilbert2012reusable,HardtW13,NaorY15,BenEliezerY19,ABDMNY21}, even in non-adversarial settings. For example, suppose that a user sequentially makes updates in a database, and receives an immediate response about the current state of the data after each update. Naturally, future updates made by the user in such a setting may heavily depend on the responses given by the database to previous queries. In other words, the stream updates are chosen adaptively, and cannot be assumed to be fixed in advance.

A streaming algorithm that works even when the stream is adaptively chosen by an adversary (the precise definition given next) is said to be {\em adversarially robust}. Deterministic algorithms are inherently adversarially robust, since they are guaranteed to be correct on all possible inputs. However, the large gap in performance between deterministic and randomized streaming algorithms for many problems motivates the need for designing adversarially robust randomized algorithms, if they even exist. In particular, we would like to design adversarially robust randomized algorithms which are as space and time efficient as their static counterparts, and yet as robust as deterministic algorithms. The study of such algorithms is the main focus of our work.

\paragraph{The Adversarial Setting.}
There are several ways to define the adversarial setting, which may depend on the information the adversary (who chooses the stream) can observe from the streaming algorithm, as well as other restrictions imposed on the adversary. 
For the most part, we consider a general model, where the adversary is allowed unbounded computational power and resources, though we do discuss the case later when the adversary is computationally bounded. At each point in time, the streaming algorithm publishes its output to a query for the stream. 
The adversary observes these outputs one-by-one, and can choose the next update to the stream adaptively, depending on the full history of the outputs and stream updates. The goal of the adversary is to force the streaming algorithm to eventually produce an \textit{incorrect} output to the query, as defined by the specific streaming problem in question.\footnote{In the streaming literature, an algorithm is often required to be correct on a query made only \textit{once}, at the end of the stream. This is a \textit{one-shot} guarantee, as opposed to the \textit{tracking} 
	guarantee as defined here. However, the two settings are nearly equivalent. Indeed, for almost all streaming problems, a one-shot algorithm can be made into a tracking algorithm with at most an $O(\log n)$ blow-up in space, by simply setting the failure probability small enough to union bound over all points in the stream.}

Formally, a data stream of length $m$ over a domain $[n]$ is a sequence of updates of the form $(a_1,\Delta_1),(a_2,\Delta_2),\dots,(a_m,\Delta_m)$ where $a_t \in [n]$ is an index and $\Delta_t \in \Z$ is an increment or decrement to that index. The \textit{frequency vector} $f \in \R^n$ of the stream is the vector with $i^{\text{th}}$ coordinate $f_i = \sum_{t: a_t = i} \Delta_t$. We write $f^{(t)}$ to denote the frequency vector restricted to the first $t$ updates, namely $f_i^{(t)} = \sum_{j \leq t : a_j = i} \Delta_j$. It is assumed at all points $t$ that the maximum coordinate in absolute value, denoted $\|f^{(t)}\|_\infty$, is at most $M$ for some $M > 0$, and that $\log(mM) = O( \log n)$.  In the \textit{insertion-only} model, the updates are assumed to be positive, meaning $\Delta_t > 0$, whereas in the \textit{turnstile} model $\Delta_t$ can be positive or negative.

The general task in streaming is to respond to some query $\mathcal{Q}$ about the frequency vector $f^{(t)}$ at each point in time $t \in [m]$. Oftentimes, this query is to approximate\footnote{Ideally, one might wish to exactly compute the function $g$; however, in many cases, and in particular for the problems that we consider here, exact computation cannot be done with sublinear space.} some function $g:\R^n \to \R$ of $f^{(t)}$. For example, counting the number of distinct elements in a data stream is among the most fundamental problems in the streaming literature; here $g(f^{(t)})$ is the number of non-zero entries in $f^{(t)}$. Since exact computation cannot be done in sublinear space \cite{chakrabarti2016strong}, the goal is to approximate the value of $g(f^{(t)})$ to within a multiplicative factor of $(1 \pm \eps)$. Another important streaming problem (which is not directly an estimation task) is the \textit{Heavy-Hitters} problem, where the algorithm is tasked with finding all the coordinates in $f^{(t)}$ which are larger than some threshold $\tau$.


Formally, the adversarial setting is modeled by a two-player game between a (randomized) \textsc{StreamingAlgorithm} and an \textsc{Adversary}. At the beginning, a query $\mathcal{Q}$ is fixed, which the \textsc{StreamingAlgorithm}
must continually reply to. The game proceeds in rounds, where in the $t$-th round:
\begin{enumerate}
	\item \textsc{Adversary} chooses an update $u_t = (a_t ,\Delta_t)$ for the stream, which can depend, in particular, on all previous stream updates and outputs of \textsc{StreamingAlgorithm}.
	\item \textsc{StreamingAlgorithm} processes the new update $u_t$ and outputs its current response $R^t$ to the query $\mathcal{Q}$.
	\item \textsc{Adversary} observes $R^t$ (stores it) and proceeds to the next round.
\end{enumerate}
The goal of the \textsc{Adversary} is to make the \textsc{StreamingAlgorithm} output an incorrect response $R^t$ to $\mathcal{Q}$ at some point $t$ in the stream. For example, in the distinct elements problem, the adversary's goal is that at some step $t$, the estimate $R^t$ will fail to be a $(1+\eps)$-approximation of the true current number of distinct elements $|\{i \in [n] : f^{(t)}_i \neq 0 \}|$.

\paragraph{Streaming algorithms in the adversarial setting.}
It was shown by Hardt and Woodruff \cite{HardtW13} that linear sketches are inherently \emph{non-robust} in adversarial settings for a large family of problems, thus demonstrating a major limitation of such sketches. In particular, their results imply that no linear sketch can approximate the Euclidean norm of its input to within a polynomial multiplicative factor in the adversarial (turnstile) setting. Here, a linear sketch is an algorithm whose output depends only on values $A f$ and $A$, for some (usually randomized) sketching matrix $A \in \R^{k \times n}$. This is quite unfortunate, as the vast majority of turnstile streaming algorithms are in fact linear sketches.

On the positive side, a recent work of Ben-Eliezer and Yogev \cite{BenEliezerY19} (see also \cite{ABDMNY21, braverman2021adversarial}) showed that \textit{random sampling} is quite robust in the adaptive adversarial setting, albeit with a slightly larger sample size. While uniform sampling is a rather generic and important tool, it is not sufficient for solving many important streaming tasks, such as estimating frequency moments ($F_p$-estimation), finding $L_2$ heavy hitters, and various other central data analysis problems. This raises the natural question of whether there exist efficient adversarially robust randomized streaming algorithms for these problems and others, which is the main focus of this work. Perhaps even more importantly, we ask the following. 
\begin{quote}
	\begin{center}
		{\em
			Is there a generic technique to transform a static streaming algorithm\\into an adversarially robust streaming algorithm?
		}
	\end{center}
\end{quote}
This work answers the above questions affirmatively for a large class of algorithms.

\subsection{Our Results}
We devise adversarially robust algorithms for various fundamental insertion-only streaming problems, including distinct element estimation, $F_p$ moment estimation, heavy hitters, entropy estimation, and several others. In addition, we give adversarially robust streaming algorithms which can handle a bounded number of deletions as well. The required space of our adversarially robust algorithms matches that of the best known non-robust ones up to a small multiplicative factor. Our new algorithmic results are summarized in Table \ref{table:summary}.
In contrast, we demonstrate that some classical randomized algorithms for streaming problems in the static setting, such as the celebrated Alon-Matias-Szegedy (AMS) sketch \cite{AlonMS96} for $F_2$-estimation, are inherently non-robust to adaptive adversarial attacks in a strong sense, even against an insertion-only adaptive adversary. In comparison, the attack of Hardt and Woodruff on linear sketches \cite{HardtW13} requires both insertions and deletions.

Our adversarially robust algorithms make use of two generic robustification frameworks that we develop, allowing one to efficiently transform a non-robust streaming algorithm into a robust one in various settings. Both of the robustification methods rely on the fact that functions of interest do not drastically change their value too many times along the stream. Specifically, the transformed algorithms have space dependency on the \textit{flip-number} of the stream, which is a bound on the number of times the function $g(f^{(t)})$ can change by a factor of $(1\pm\eps)$ in the stream (see Section \ref{sec:framework}).

The first method, called \emph{sketch switching}, maintains multiple instances of the non-robust algorithm and switches between them in a way that cannot be exploited by the adversary. The second technique bounds the number of \emph{computation paths} possible in the two-player adversarial game. This technique maintains only one copy of a non-robust algorithm, albeit with an extremely small probability of error $\delta$. We show that a carefully rounded sequence of outputs generates only a small number of possible computation paths, which can then be used to ensure robustness by union bounding over these paths. The framework is described in Section \ref{sec:framework}.

The two above methods are incomparable: for some streaming problems the former is more efficient, while for others, the latter performs better, and we show examples of each. 
Specifically, sketch switching can exploit efficiency gains of \textit{strong-tracking}, resulting in particularly good performance for static algorithms that can respond correctly to queries at each step without having to union bound over all $m$ steps. In contrast, the computation paths technique can exploit an algorithm with good dependency on $\delta$ (the failure probability). Namely, algorithms that have small dependency in update-time or space on $\delta$ will benefit from the computation paths technique.

\begin{table}
\begin{center}
\begin{tabular}{|l|l|l|l|l|}
\hhline{-----}
\textbf{Problem}	& \textbf{Static Rand.} & \textbf{Deter.}\hspace{-1pt}		& \textbf{Adversarial} & \textbf{Comments}  \\ 
\hhline{=====}

Distinct elem. 
& \multirow{2}{*}{$\tilde{O}(\eps^{-2} + \log n)$
}
& \multirow{2}{*}{$\Omega(n)$\hspace{-1pt}
} 
& \small $\boldsymbol{\tilde{O}(\eps^{-3} + \eps^{-1} \log n)}$
&\\
\hhline{~~~--}
($F_0$ est.)
&&
& \small $\boldsymbol{\tilde{O}(\eps^{-2} + \log n)}$
& \small crypto/rand. oracle
\\ \hhline{-----}

$F_p$ estimation, 
& $O(\eps^{-2}\log n)$
& \multirow{2}{*}{$\tilde{\Omega}(c_p n)$\hspace{-1pt}}
& \small $\boldsymbol{\tilde{O}(\eps^{-3} \log n)}$ & 
\\ \hhline{~-~--}
$p \in (0,2] \setminus \{1\}$

 & $O(\eps^{-3}\log^2 n)$  
 & 
 & \small $\boldsymbol{\tilde{O}(\eps^{-3} \log^3 n)}$ & 
 \small $\delta = \Theta( n^{-\frac{1}{\eps} \log n} )$
\\  \hhline{-----}

$F_p$ estimation, &
\small $O(n^{1-\frac{2}{p}}(\eps^{-2}\log n$ \hspace{-20pt}&
\multirow{2}{*}{$\Omega(n)$\  
} &
\small $\boldsymbol{O(n^{1-\frac{2}{p}}(\eps^{-3}\log^2 n}$ &
\small \multirow{2}{*}{$\delta = \Theta( n^{-\frac{1}{\eps} \log n} )$} 
\\ \hhline{~~~~~}
$p>2$
& 
\small $+\eps^{-\frac{4}{p}} \log^{\frac{2}{p}+1} n))$ 
&
&
\small $\boldsymbol{+\eps^{-\frac{6}{p}} \log^{\frac{4}{p}+1} n))}$ &
\\ \hhline{-----}

{$\ell_2$ Heavy Hit.}  & 
$O(\eps^{-2} \log^2 n)$
& $\Omega(\sqrt{n})$
& \small $\boldsymbol{\tilde{O}(\eps^{-3} \log^2 n)}$ & 
\\ \hhline{-----}

Entropy &  
$O(\eps^{-2} \log^3 n)$  
& \multirow{2}{*}{$\tilde{\Omega}(n)$} & \small $\boldsymbol{\tilde{O}(\eps^{-4} \log^{6}n)}$ & \\
\hhline{~-~--}
estimation
&
$\tilde{O}(\eps^{-2} \log n)$ 
&
& \small $\boldsymbol{\tilde{O}(\eps^{-4} \log^{4}n)}$     
& \small crypto/rand. oracle
\\ \hhline{-----}

Turnstile $F_p$, 
& \multirow{2}{*}{$O(\eps^{-2}  \log^2 n)$\  
} & \multirow{2}{*}{$\Omega(n )$  
} & \small \multirow{2}{*}{$\boldsymbol{O(\eps^{-2} \lambda \log^2 n)}$} & 
\small $\lambda$-bounded $F_p$ flip \\
\hhline{~~~~~}
$p \in (0,2]$
&&&&
    \small num., $\delta = \Theta(n^{-\lambda})$
\\  \hhline{-----}

$F_p$, $p \in [1,2]$ &
$\tilde{O}(\log^2 n +$ 
& \multirow{2}{*}{$\tilde{\Omega}(c_p n)$} 
& \small \multirow{2}{*}{$\boldsymbol{O(\alpha \eps^{-(2+p)} \log^3 n)}$} & \small static only 
\\ \hhline{~~~~~}
$\alpha$-bounded del.
&
$\eps^{-2} \log \alpha \log n)$
&
\small 
&& \small for $p=1$
\\ \hhline{-----}
\end{tabular}
\end{center}
\caption{A summary of our adversarially robust algorithms (in bold), 
as compared to the best known upper bounds for randomized algorithms 
in the static setting and lower bounds for deterministic algorithms. The space bounds are given in bits.
Note that all stated algorithms provide tracking. 
All results except for the last two (which hold in 
restricted versions of the turnstile model) are for insertion-only streams.  
We write $\tilde{O},\tilde{\Omega}$ to hide $\log \eps^{-1}$ and $\log \log n$ factors.
The static randomized upper bounds are proved, respectively, in \cite{blasiok2018optimal}, \cite{blasiok2017continuous}, \cite{kane2010exact}, \cite{ganguly2018high}, \cite{braverman2017bptree}, \cite{clifford2013simple}, \cite{jayaram2019towards}, \cite{kane2010exact}, and \cite{jayaram2018data}. All lower bounds for $F_p$-estimation are proved in \cite{chakrabarti2016strong}, except for the turnstile bound, proved in \cite{AlonMS96}; the lower bound for heavy hitters is from \cite{kpw20}. Finally,
the lower bound for deterministic entropy estimation follows from a 
reduction from estimating $F_p$ for $p = 1+ \tilde{\Theta}(\eps/\log^2 n)$ to 
entropy estimation \cite{harvey2008sketching}.}
\label{table:summary}
\end{table}

For each of the problems we consider, we show how to use the framework combined with some additional techniques, to solve it.
Interestingly, we also demonstrate how cryptographic assumptions (which were not commonly used before in the streaming context) can be applied to obtain an adversarially robust algorithm against computationally bounded adversaries for the distinct elements problem at essentially no extra cost (compared to the space-optimal non-robust algorithm). See Table \ref{table:summary} for a summary of our results in the adversarial setting compared to the state-of-the-art in the static setting, as well as to deterministic algorithms.

\paragraph{Distinct elements and $F_p$-estimation}
Our first suite of results provides robust streaming algorithms for estimating $F_p$, the $p^{\text{th}}$ frequency moment of the frequency vector, defined as $F_p = \|f\|_p^p = \sum_{i=1}^n |f_i|^p$, where we interpret $0^0 = 0$. 
Estimating frequency moments has a myriad of applications in databases, computer networks, data mining, and other contexts. Efficient algorithms for estimating distinct elements (i.e., estimating $F_0$) are important for databases, since query optimizers can use them to find the number of unique values of an attribute without having to perform an expensive sort on the values. Efficient algorithms for $F_2$ are useful for determining the output size of self-joins in databases, and for computing the surprise index
of a data sequence \cite{good1989c332}. Higher frequency moments are used to determine data skewness, which is important in parallel database applications \cite{dewitt1992practical}. 

We remark that for any fixed $p \neq 1$,\footnote{Note that there is a trivial $O(\log n)$-bit insertion-only $F_1$ estimation algorithm: keeping a counter for $\sum_t \Delta_t$.} including $p=0$, any deterministic insertion-only algorithm for $F_p$-estimation requires $\Omega(n)$ space \cite{AlonMS96,chakrabarti2016strong}. In contrast, we will show that randomized adversarially robust algorithms exist for all $p$, whose space complexity either matches or has a small multiplicative overhead over the best static randomized algorithms. 

We begin with several results on the problem of estimating distinct elements, or $F_0$ estimation. The first of them utilizes an optimized version of the sketch switching method to derive an upper bound. The result is an adversarially robust $F_0$ estimation algorithm whose complexity is only a $\Theta(\frac{1}{\eps} \log \eps^{-1} )$ factor larger than that of the optimal static (non-robust) algorithm \cite{blasiok2018optimal}.

\begin{theorem}[\small Robust Distinct Elements by Sketch Switch; see Theorem \ref{thm:distinct_elements_sketch_switching}]
	There is an algorithm which, when run on an adversarial insertion-only stream, with probability at least $1-\delta$ produces in every step $t \in [m]$ an estimate $R^t$ such that $R^t = (1 \pm \eps)\|f^{(t)}\|_0$ . The space used by the algorithm is 
	\[ O\left(\frac{\log(1/\eps)}{\eps} \left(\frac{\log \eps^{-1} + \log \delta^{-1} + \log \log n}{\eps^2} + \log n\right) \right).\]
\end{theorem}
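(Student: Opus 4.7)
My plan is to instantiate the sketch switching framework (Section \ref{sec:framework}) on top of a static space-optimal tracking $F_0$ estimator, exploiting the monotonicity of $F_0$ in insertion-only streams to keep the number of simultaneously-maintained copies small.

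First, I would bound the $(1+\eps)$-flip number. Since $F_0(f^{(t)}) = \|f^{(t)}\|_0$ is non-decreasing and bounded by $n$ in an insertion-only stream, it can undergo at most $\lambda_{\mathrm{tot}} = O(\eps^{-1}\log n)$ multiplicative $(1+\eps)$ jumps over the entire stream. As the base (non-robust) sketch I would take a known static $F_0$ tracking algorithm (e.g.\ the one of \cite{blasiok2018optimal}) with per-instance failure probability $\delta' = \delta/\lambda_{\mathrm{tot}}$; this requires $O(\eps^{-2}(\log\eps^{-1} + \log\delta^{-1} + \log\log n) + \log n)$ bits per copy, which exactly matches the inner parenthesized expression in the theorem.

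Second, I would apply sketch switching, together with a dyadic optimization that is needed to obtain the leading factor $\frac{\log\eps^{-1}}{\eps}$ rather than the naive $\frac{\log n}{\eps}$. At any given time one copy is designated ``active'' and the algorithm publishes only the rounded value $(1+\eps)^{\lceil \log_{1+\eps} R \rceil}$ of that copy's estimate; a fresh copy is activated whenever the rounded value would change. The key black-box property of the framework is that the published output is constant throughout any single copy's active lifetime, so the adversary learns nothing about that copy's randomness while it is in use, and hence its static tracking guarantee continues to hold in the adversarial setting. By partitioning the possible range of $F_0$ into dyadic bins $[2^k,2^{k+1})$ and exploiting monotonicity to discard copies assigned to bins that $F_0$ has already passed, the number of copies held simultaneously drops to $O(\eps^{-1}\log\eps^{-1})$. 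Multiplying this by the per-copy space above yields the claimed bound.

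The main obstacle I anticipate is the formal safety argument for the switching rule: one must show that when a new copy becomes active, its internal randomness is independent of the stream it is about to process. This is the core lemma of the sketch-switching framework and is established by a conditioning/filtration argument on the adversary's transcript --- the transcript prior to activating copy $i$ depends only on the random bits of copies $1,\ldots,i-1$ and on the adversary's (fixed) strategy, so conditioning on it leaves copy $i$'s randomness uniform, reducing its behavior to the standard oblivious-stream setting. Once this independence is invoked black-box, the remaining work is routine: union-bound over the at most $\lambda_{\mathrm{tot}}$ copies ever consumed, set $\delta'$ accordingly, and sum the per-copy space to verify the stated expression.
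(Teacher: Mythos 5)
Your overall skeleton is right and matches the paper: bound the $(\eps,m)$-flip number of $F_0$ by $O(\eps^{-1}\log n)$ via Corollary~\ref{cor:fpflip}, feed the static strong-tracking $F_0$ algorithm of \cite{blasiok2018optimal} into the sketch-switching machinery (Lemma~\ref{lem:sketchswitch}), set the per-copy failure probability to $\delta/\lambda$, and union-bound. The conditioning/filtration argument you sketch for the safety of switching is exactly the paper's proof of Lemma~\ref{lem:sketchswitch}, and the per-copy space accounting is correct.

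The genuine gap is in the step that drops the multiplicative factor from $\eps^{-1}\log n$ to $\eps^{-1}\log\eps^{-1}$. Your ``partition the range of $F_0$ into dyadic bins and discard copies for bins already passed'' does not reduce the number of copies: to cover all $\Theta(\log n)$ bins you would still need to have instantiated $\Omega(\eps^{-1}\log n)$ copies from the start of the stream (a copy that is first created mid-stream only sees a suffix and hence cannot estimate the $F_0$ of the full prefix). Discarding consumed copies saves nothing, since a fresh replacement is cold. The paper instead uses the \emph{restart trick} (proof of Theorem~\ref{thm:fpswitch}): maintain only $K = \Theta(\eps^{-1}\log\eps^{-1})$ copies, cycle through them modulo $K$ when switching, and \emph{restart} the copy you are leaving with fresh randomness on the remaining suffix. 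The crucial, non-black-box piece you are missing is the error analysis justifying why a restarted copy is still usable: by the time the cyclic pointer returns to it, the algorithm has performed $K$ switches, so $F_0$ has grown by a factor of roughly $(1+\eps)^K = \Theta(1/\eps)$; hence the prefix the restarted copy never saw contributes only an $O(\eps)$-fraction of the current number of distinct elements, and a $(1\pm\Theta(\eps))$-estimate of the suffix is a $(1\pm\Theta(\eps))$-estimate of the whole. Without this argument the reduction in the number of copies is unjustified. (A more minor discrepancy: you publish a rounded power of $(1+\eps)$, whereas Algorithm~\ref{alg:sample} freezes the previously published value until it deviates from the active copy's current estimate by more than $\eps/2$; both achieve a small $0$-flip number of the output sequence, but the correctness proof of Lemma~\ref{lem:sketchswitch} is written for the freezing rule and its interplay with Lemma~\ref{lemma:flip_number_for_sketch_switching}, so you would need to re-derive that part for your variant.)
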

The second result utilizes a different approach, by applying the computation paths method. The space complexity is slightly worse, which is a result of setting the failure probability $\delta < n^{-\frac{1}{\eps} \log n}$ for any given static algorithm. However, we introduce a new \textit{static} algorithm for $F_0$ estimation which has very small update-time dependency on $\delta$, and nearly optimal space complexity. As a result, by applying our computation paths method to this new static algorithm, we obtain an adversarially robust $F_0$ estimation algorithm with extremely fast update time (note that the update time of the above sketch switching algorithm would be $O(\eps^{-1} \log n)$ to obtain the same result, even for constant $\delta$).  

\begin{theorem}[\small Fast Robust Distinct Elements; see Theorem \ref{thm:fast}]
	There exists a streaming algorithm which, with probability $1-n^{-(C/\eps) \log n}$ for any constant $C \geq 1$, when run on an adversarially chosen insertion-only data stream, returns a $(1 \pm \eps)$ multiplicative estimate of the number of distinct elements in every step of the stream. The space required is $O(\frac{1}{\eps^3} \log^3  n )$, and the algorithm runs in  $O\left( \left(\log^2 \frac{\log n}{\eps} \right) \cdot \left(\log \log \frac{\log n}{\eps} \right)\right)$  worst case time per update.
\end{theorem}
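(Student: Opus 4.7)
The plan is to instantiate the computation-paths framework of Section~\ref{sec:framework} on a custom low-failure-probability static $F_0$-estimator whose update cost depends only polylogarithmically on $\log(1/\delta)$. The key combinatorial fact driving the approach is that on insertion-only streams $F_0$ is monotone non-decreasing and bounded above by $n$, so the $(\eps,m)$-flip number $\lambda$ is at most $O(\eps^{-1}\log n)$: the true value $\|f^{(t)}\|_0$ can cross a $(1+\eps)$-multiplicative threshold at most $\log_{1+\eps} n = O(\eps^{-1}\log n)$ times along the stream.

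Next I would bound the number of computation paths that can arise in the adversarial interaction. Following the framework, the streaming algorithm rounds each output to the nearest power of $(1+\eps/C_0)$ for a sufficiently large constant $C_0$; when it is correct, this rounded output changes at most $O(\lambda)$ times and each value belongs to a fixed set of size $O(\eps^{-1}\log n)$. Assuming (without loss of generality) that the adversary is deterministic, the entire stream is a function of the transcript of rounded outputs, so each possible transcript yields exactly one interaction. The number of such transcripts is at most $\binom{m}{O(\lambda)}\cdot (O(\eps^{-1}\log n))^{O(\lambda)+1} \le n^{O(\eps^{-1}\log n)}$, using $\log m = O(\log n)$. Hence, if the underlying static algorithm, on any fixed (oblivious) stream, strongly tracks $F_0$ to a $(1\pm \eps/C_0)$ factor at all prefixes with failure probability at most $\delta = n^{-C'\eps^{-1}\log n}$ for a sufficiently large constant $C' > C$, a union bound over all possible computation paths yields total failure probability $n^{-(C/\eps)\log n}$ in the adversarial game.

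The main obstacle, and the technical crux of the theorem, is constructing a static $F_0$-estimator that simultaneously achieves (i) failure probability as small as $\delta = n^{-\Omega(\eps^{-1}\log n)}$, (ii) space $O(\eps^{-3}\log^3 n)$ bits, and (iii) update time only polylogarithmic in $\log(1/\delta)$. A naive median of $\Theta(\log(1/\delta))$ repetitions of a space-optimal $F_0$-sketch is too slow, since the requisite hash evaluations on $[n]$-valued items already cost $\Omega(\log(1/\delta))$ per update. I would instead use a two-level design: first apply $O(\log n)$ geometrically-decreasing subsampling levels via a single pairwise-independent hash on $[n]$, so that at the ``right'' level the number of surviving distinct items is $\Theta(\eps^{-2})$; then estimate the count of surviving elements at each level using a high-independence but small-domain fast hash, noting that the surviving universe has size $\poly(\eps^{-1}\log n)$, so that hash evaluations take only $O\bigl(\log^2 \tfrac{\log n}{\eps} \cdot \log\log\tfrac{\log n}{\eps}\bigr)$ time via fast polynomial evaluation. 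Amplification to the required tiny $\delta$ is then performed by running $O(\eps^{-1}\log^2 n)$ independent parallel copies that share the subsampling structure and taking the median of their estimates, so that per-update cost scales with the number of \emph{essential} hash evaluations (at most $O(1)$ nontrivial operations per level per update) rather than the naive $O(\log(1/\delta))$ factor. A direct accounting gives the claimed $O(\eps^{-3}\log^3 n)$ space and update-time bound; combining this static algorithm with the computation-paths union bound above establishes the theorem.
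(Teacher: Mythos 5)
Your high-level plan matches the paper: bound the flip number of $F_0$ by $O(\eps^{-1}\log n)$, union-bound over computation paths (Lemma~\ref{lem:computationpaths}), and plug in a custom static $F_0$-estimator whose update cost depends only polylogarithmically on $\log(1/\delta)$. The gap is entirely in the construction of that static estimator, and the gap is real.

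You amplify the failure probability to $\delta = n^{-\Theta(\eps^{-1}\log n)}$ by running $\Theta(\eps^{-1}\log^2 n)$ independent parallel copies and taking the median, and you assert that sharing the subsampling structure makes the per-update cost scale with the ``essential'' hash evaluations rather than the number of copies. This does not hold: every element survives the coarsest subsampling level, and on such an update each of the $\Theta(\eps^{-1}\log^2 n)$ copies must perform at least one operation on its own state (its own inner hash or counter), so the worst-case update time is $\Omega(\eps^{-1}\log^2 n)$, far exceeding the claimed $O(\log^2\frac{\log n}{\eps}\cdot\log\log\frac{\log n}{\eps})$. The paper avoids repetition entirely: Algorithm~\ref{alg:f0fast} uses a \emph{single} $d$-wise independent hash with $d = \Theta(\log\log n + \log\delta^{-1})$, and obtains the needed concentration from Chernoff-style tail bounds for limited independence (Theorem 5 of \cite{schmidt1995chernoff}). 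That is the idea that lets $\delta$ be tiny without multiplying the update time by $\log(1/\delta)$; it is missing from your proposal.

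A second, smaller gap: fast multipoint polynomial evaluation (Proposition~\ref{prop:fastmulti}) gives $O(d\log^2 d\log\log d)$ time for a \emph{batch} of $d$ evaluations of a degree-$d$ polynomial, not a speedup for a single evaluation. The paper therefore buffers $d$ consecutive updates, evaluates $H$ on all of them at once, spreads the work over the following $d$ steps, and handles the resulting $d$-step reporting delay by exactly (deterministically) tracking the first $O(\eps^{-1}d)$ distinct items. Your write-up invokes fast polynomial evaluation but does not explain how a single hash evaluation becomes cheap, nor does it address the induced output delay. Without these pieces the update-time claim is not established.
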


The third result takes a different approach: it shows that under certain standard cryptographic assumptions, there exists an adversarially robust algorithm which asymptotically matches the space complexity of the best non-robust tracking algorithm for distinct elements. The cryptographic assumption is that an exponentially secure pseudorandom function exists (in practice one can take, for instance, AES as such a function).
While our other algorithms in this paper hold even against an adversary which is unbounded computationally, in this particular result we assume that the adversary runs in polynomial time. See Section \ref{sec:crypto} for more details. 

\begin{theorem}[\small Distinct Elements by Crypto Assumptions; see Theorem \ref{thm:distinct_elements_crypto}]\label{thm:cryptointro}
	In the random oracle model, there is an $F_0$-estimation (tracking) streaming algorithm in the adversarial setting, that for an approximation parameter $\varepsilon$ uses $O(\varepsilon^{-2}(\log 1/\varepsilon + \log \log n) + 	\log n)$ bits of memory, and succeeds with probability $3/4$.
	
	Moreover, given an exponentially secure pseudorandom function, and assuming the adversary has bounded running time of $n^c$, where $c$ is a constant, the random oracle can be replaced with a concrete function and the total memory is $O(\varepsilon^{-2}(\log 1/\varepsilon + \log \log n) + c\log n)$.
\end{theorem}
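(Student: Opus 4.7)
The plan is to take a space-optimal static tracking algorithm for $F_0$ that uses its randomness only through a single hash function $h:[n]\to U$, in such a way that its state after processing any prefix of the stream is a deterministic function of the multiset $\{h(i) : i \in S_t\}$, where $S_t=\{i: f^{(t)}_i\neq 0\}$ is the set of distinct indices seen so far. The algorithm of Bła\-siok~\cite{blasiok2018optimal} fits this template (it essentially operates on a hash-induced sample of the smallest hash values), and in the static random-oracle model already achieves the stated space bound and $3/4$ tracking guarantee. I would then argue that this structural property automatically yields adversarial robustness in the random-oracle model, and finally derandomize the oracle via an exponentially secure PRF.

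For adversarial robustness in the random-oracle model, I would prove by induction on $t$ that, conditioned on the adversary's transcript $(u_1,R^1,\ldots,u_t,R^t)$, the restriction $h|_{S_t}$ is distributed as $|S_t|$ i.i.d.\ uniform samples from $U$ and $h|_{[n]\setminus S_t}$ is an independent uniform random function. The inductive step is immediate: since the adversary's strategy is deterministic given the transcript, the next update $u_{t+1}=(a_{t+1},\Delta_{t+1})$ is fixed; if $a_{t+1}\in S_t$ the state and output do not change, while if $a_{t+1}\notin S_t$ then $h(a_{t+1})$ is revealed as a fresh uniform sample independent of the prior view. Consequently, the joint distribution of the output sequence $(R^1,\ldots,R^m)$ in the adversarial game coincides with that produced by the static algorithm on the canonical stream listing the elements of $S_m$ in the order they first appeared, so the static $3/4$ tracking guarantee transfers verbatim.

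To remove the random oracle, I would instantiate $h$ by a pseudorandom function $h_K:[n]\to U$ keyed by $K\in\{0,1\}^{|K|}$ with $|K|=\Theta(c\log n)$, storing $K$ inside the algorithm. A standard hybrid argument shows that any time-$n^c$ adversary succeeding against the PRF-based algorithm with probability exceeding $1/4 + \delta$ yields a distinguisher for $h_K$ of advantage $\delta$ running in time $n^{O(c)}$: the distinguisher simply simulates the adversary and the streaming algorithm in tandem while maintaining an exact $F_0$ counter on the side to flag a bad output. Exponential security gives $\delta \leq 2^{-\Omega(|K|)} = n^{-\Omega(c)} = o(1)$, so the overall success probability stays at least $3/4$ after adjusting constants. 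The total state consists of the $O(\varepsilon^{-2}(\log 1/\varepsilon+\log\log n))$-bit sketch plus the key (and implicit hash seed), giving the claimed $O(\varepsilon^{-2}(\log 1/\varepsilon+\log\log n)+c\log n)$ bits.

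The main obstacle I expect is verifying the structural property of the underlying static algorithm in a fully rigorous way: one must check that \emph{all} randomness it consumes can be funnelled through the single oracle $h$ in a way that makes the state a symmetric, duplication-insensitive function of the hashes of $S_t$. If Bła\-siok's algorithm uses auxiliary random bits tied to insertion order (e.g.\ for tie-breaking or subsampling), those bits must be refactored into $h$ by enlarging the range $U$, a cost absorbed into the $\log\log n$ factor; alternatively, one can process elements in the canonical order induced by their hashes. A smaller issue in the PRF reduction is the efficiency of the distinguisher, which is straightforward since the stream length is at most $n^c$ and an exact $F_0$ on an explicitly maintained set of indices can be updated in polynomial time per step.
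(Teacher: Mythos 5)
Your proposal takes essentially the same route as the paper: both rest on the structural fact that the underlying static $F_0$ tracker does not change state on a repeated element, use a random oracle/permutation to argue that the adaptive game is distributionally identical to running the static tracking algorithm on the canonical stream $1,2,\ldots,|S_m|$ (so the static tracking guarantee transfers), and then replace the oracle by an exponentially secure PRF against a time-$n^c$ adversary via a standard distinguishing reduction. One small imprecision to fix: conditioned on the transcript, $h|_{S_t}$ is \emph{not} distributed as $|S_t|$ i.i.d.\ uniform samples (the transcript constrains those values); the invariant you actually need, and the one your ``consequently'' sentence correctly invokes, is that $h$ restricted to the not-yet-queried domain $[n]\setminus S_t$ remains fresh and independent, so that the next new element's hash is a fresh uniform draw and the whole process couples to the static game -- exactly the paper's claim that $D_t\equiv D'_t$.
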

Here, the random oracle model means that the algorithm is given read access to an arbitrarily long string of random bits.

Our next set of results provides adversarially robust algorithms for $F_p$-estimation with $p > 0$. The following result concerns the case $0 < p \leq 2$. It was previously shown that for $p$ bounded away from one, $\Omega(n)$ space is required to deterministically estimate $\|f\|_p^p$, even in the insertion-only model \cite{AlonMS96,chakrabarti2016strong}. On the other hand, space-efficient non-robust randomized algorithms for $F_p$-estimation exist. 
We leverage these, along with an optimized version of the sketch switching technique to save a $\log n$ factor, and obtain the following. 

\begin{theorem}[\small Robust $F_p$-estimation for $0<p \leq 2$; see Theorem~\ref{thm:fpswitch}]\label{thm:introfp}
	Fix $0 < \eps,\delta \leq 1$ and $0 < p \leq 2$. There is a streaming algorithm in the insertion-only adversarial model which outputs in each step a value $R^t$ such that $R^t= (1 \pm \eps) \|f^{(t)}\|_p$ at every step $t \in [m]$, and succeeds with probability $1-\delta$. The algorithm uses $O( \eps^{-3} \log n \log \eps^{-1} (\log \eps^{-1} + 
	\log \delta^{-1} + \log \log n) )$ bits of space.
\end{theorem}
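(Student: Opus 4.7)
The plan is to instantiate the generic sketch switching framework of Section~\ref{sec:framework} with a standard static $F_p$-estimator as the base non-robust algorithm, and then optimize the application to save a $\log n$ factor. The first step is to bound the $\eps$-flip number of the function $\|f^{(t)}\|_p$ on insertion-only streams. Since $\|f^{(t)}\|_p^p$ is monotonically non-decreasing in $t$ for $p > 0$ on insertion-only streams and lies in $[0, nM^p]$, the number of times $\|f^{(t)}\|_p$ can change by a multiplicative $(1+\eps)$-factor is at most $\lambda = O(\eps^{-1}\log(nM^p)) = O(\eps^{-1}\log n)$, using the standing assumption $\log(mM) = O(\log n)$.

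Next, I would maintain $\lambda$ independent copies of a static $F_p$-estimator (e.g., the $p$-stable sketch of Indyk for $0 < p \leq 2$), each calibrated with one-shot failure probability $\delta' = \delta/\lambda$. Every update is fed into all copies. The algorithm maintains an ``active'' copy; the output at each step is the active copy's estimate rounded to the nearest power of $(1+\eps/3)$, and the algorithm switches to a fresh unused copy precisely when this rounded output changes. Robustness follows from the generic sketch switching argument: at the (stopping) time when copy $A_i$ becomes active, the stream produced so far was generated by an adversary whose view depends only on the outputs of $A_1,\dots,A_{i-1}$, which are themselves functions of those copies' internal randomness and the stream; by independence, $A_i$'s randomness is independent of its input, so its one-shot correctness guarantee applies. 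A union bound over the at most $\lambda$ copies ever made active yields total failure probability $\delta$.

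To reach the claimed space, the key optimization is that each copy is essentially queried only at a single moment — when it becomes active — so only a one-shot (not a full tracking) correctness guarantee is required from the base estimator. This shaves the extra $\log n$ factor one would otherwise pay for a per-copy union bound over all $m$ stream steps, reducing per-copy space to $O(\eps^{-2}\log n \cdot \log(1/\delta'))$ via median-of-means amplification. Multiplying by $\lambda$ and plugging in $\delta' = \delta/\lambda$ gives the stated bound up to the $\log\eps^{-1}$ factor, which tracks a small amount of slack in the rounding/switching rule. The main obstacle I anticipate is making the ``single query per copy'' claim rigorous in the adaptive setting: one must ensure that the adversary cannot, by driving $\|f^{(t)}\|_p$ arbitrarily close to a rounding threshold, implicitly extract correlated information about the active copy at many stream positions. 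Handling this cleanly likely requires a small buffer in the rounding (e.g., holding the displayed estimate at $(1+\eps)^k$ until the internal estimate comfortably exceeds $(1+\eps)^{k+1}$), together with the stopping-time independence guarantee of the framework; once that is in place, the space accounting is routine.
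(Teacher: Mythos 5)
Your proposal correctly identifies the flip-number bound and the sketch-switching framework, but the optimization you rely on to reach the stated space bound is different from the paper's and, as described, does not work, while the paper's actual optimization (the ``restart trick'') is missing.

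First, your claimed savings -- that each copy ``is essentially queried only at a single moment'' and hence a one-shot guarantee suffices -- does not hold for this algorithm. The moment at which copy $A_\rho$ becomes active is a \emph{random stopping time} $t_\rho$ determined by the randomness of $A_1,\dots,A_{\rho-1}$; a one-shot guarantee applies only to a fixed, pre-specified query time, so invoking it at a random time would require a union bound over all $m$ positions, reintroducing a $\log m = \Theta(\log n)$ factor. More importantly, the active copy's internal estimate is compared against the displayed value at \emph{every} step it is active (this comparison is precisely what decides whether to switch), so one genuinely needs the strong-tracking guarantee throughout the active interval -- otherwise a spurious intermediate estimate could trigger a premature switch and exhaust the pool, or fail to trigger a needed one. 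The paper handles this by plugging in the strong-tracking algorithm of Lemma~\ref{lem:fpstrong} (\cite{blasiok2017continuous}), which already achieves $O(\eps^{-2}\log n(\log\eps^{-1}+\log\delta_0^{-1}+\log\log n))$ bits, i.e., tracking costs no extra $\log n$ over one-shot to begin with.

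Second, even granting your one-shot claim, the arithmetic gives $\lambda \cdot O(\eps^{-2}\log n \log(1/\delta'))$ with $\lambda=\Theta(\eps^{-1}\log n)$ and $\delta'=\delta/\lambda$, which is $O(\eps^{-3}\log^2 n(\log\delta^{-1}+\log\eps^{-1}+\log\log n))$ -- off from the theorem by a factor of $\log n/\log\eps^{-1}$, not by the ``small rounding slack'' you mention. The true source of the $\log\eps^{-1}$ in the stated bound is a different idea: rather than keep $\Theta(\eps^{-1}\log n)$ copies, the paper cyclically reuses only $\Theta(\eps^{-1}\log\eps^{-1})$ copies, \emph{restarting} each one with fresh randomness when it is retired and running it on the suffix of the stream. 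By the time a restarted copy is next consulted, the $F_p$ value has grown by a $(1+\eps)^{\Theta(\eps^{-1}\log\eps^{-1})}=100/\eps$ factor, so the mass the copy missed is an $O(\eps)$-fraction and the suffix estimate is still $(1\pm O(\eps))$-accurate. This uses insertion-only monotonicity crucially and is exactly what replaces the $\log n$ with $\log\eps^{-1}$ in the copy count.
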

We remark that the space complexity of Theorem \ref{thm:introfp} is within a $\Theta(\eps^{-1} \log \eps^{-1})$ factor of the best known static (non-robust) algorithm \cite{blasiok2017continuous} . While for most values of $\delta$, the above theorem using sketch switching has better space complexity than the computation paths reduction, for the regime of very small failure probability $\delta$ it is actually preferable to use the latter, as we now state. 
\begin{theorem}[\small Robust $F_p$-estimation for small $\delta$; see Theorem \ref{thm:fpcompdelta}]
	Fix any $0 < \eps < 1$, $0 < p \leq 2$, and $\delta < n^{-C \frac{1}{\eps}\log n}$ for a sufficiently large constant $C>1$. There is a streaming algorithm for the insertion-only adversarial model which, with probability $1-\delta$,  successfully outputs in each step $t \in [m]$ a value $R^t$ such that $R^t= (1 \pm \eps) \|f^{(t)}\|_p$. The space used by the algorithm is $O\left( \frac{1}{\eps^2} \log n \log \delta^{-1} \right)$ bits.
\end{theorem}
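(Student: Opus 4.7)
The plan is to instantiate the \emph{computation paths} framework (described earlier in Section \ref{sec:framework}) on top of a standard static $F_p$-estimator for $p \in (0,2]$, such as Indyk's $p$-stable sketch with median boosting, whose space is $O(\eps^{-2} \log n \log \delta_s^{-1})$ bits to achieve $(1 \pm \eps/3)$-approximation with failure probability $\delta_s$ at a fixed query. We run one copy of this sketch with a carefully chosen $\delta_s \ll \delta$, round the emitted estimate at every step to the nearest power of $(1+\eps/3)$, and output the rounded value. Correctness of all rounded outputs up to $(1\pm \eps)$ approximation will follow by union bounding the static guarantee across the set of all possible rounded output \emph{trajectories} that the interaction with the adversary can produce.

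Next I would bound the flip number $\lambda$ of $\|f^{(t)}\|_p$ in the insertion-only model. Since $\Delta_t > 0$, the quantity $\|f^{(t)}\|_p$ is non-decreasing; it lies between $1$ (after the first nonzero insertion) and $\|f\|_1 \leq mM \leq \poly(n)$, so the number of $(1+\eps/3)$ multiplicative increases is $\lambda = O(\eps^{-1} \log n)$. The rounded output sequence $\tilde R^1 \leq \tilde R^2 \leq \cdots \leq \tilde R^m$ is therefore a weakly increasing sequence of length $m$ taking at most $\lambda + 1$ values, and the number of such sequences is
\[
N \;\leq\; \binom{m + \lambda}{\lambda} \;\leq\; \left(\frac{e(m+\lambda)}{\lambda}\right)^\lambda \;\leq\; n^{O(\eps^{-1}\log n)}.
\]
Crucially, once the internal randomness of the static sketch is fixed, the entire interaction with the adversary (who is deterministic given the history, WLOG) is determined by the algorithm's outputs, so the \emph{set} of possible stream/output transcripts is indexed by the at most $N$ possible rounded trajectories.

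I would then take the static algorithm with failure probability $\delta_s := \delta / N$, so that $\log \delta_s^{-1} = \log \delta^{-1} + O(\eps^{-1} \log^2 n)$. For each of the $N$ trajectories, the static guarantee says that conditioned on the transcript matching that trajectory, the sketch's estimate is within $(1\pm\eps/3)$ of the true $\|f^{(t)}\|_p$ at every step with probability at least $1-\delta_s$; union bounding over the $N$ trajectories gives failure probability at most $\delta$, and then the $(1\pm\eps/3)$ estimate combined with the $(1+\eps/3)$ rounding produces a $(1\pm\eps)$ estimate. Plugging into the static space bound gives
\[
O\!\left(\eps^{-2} \log n \cdot \log \delta_s^{-1}\right) \;=\; O\!\left(\eps^{-2} \log n \cdot \left(\log \delta^{-1} + \eps^{-1}\log^2 n\right)\right).
\]
In the hypothesized regime $\delta < n^{-C\eps^{-1}\log n}$ with $C$ chosen larger than the implicit constant in the $\eps^{-1}\log^2 n$ term, the $\log \delta^{-1}$ term dominates and the space collapses to $O(\eps^{-2} \log n \log \delta^{-1})$ as claimed.

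The main obstacle I anticipate is making the computation paths union bound fully rigorous: one must justify that conditioning on a trajectory still gives the static algorithm's guarantee at every point of that trajectory, even though the stream updates in a given transcript depend on the algorithm's randomness through earlier outputs. Presumably the Section \ref{sec:framework} framework already handles this by arguing that, before any randomness is sampled, one can fix any trajectory $\tau$, ask the static sketch to be $(1\pm\eps/3)$-accurate at the (at most $\lambda$) endpoints of constancy intervals of $\tau$, and then verify the trajectory matches $\tau$ only on the event of accuracy. If one additionally wants accuracy at \emph{all} $m$ steps (not only the transition points), one either uses strong tracking for the static sketch or absorbs an extra $\log m$ factor into $\log \delta_s^{-1}$, which does not affect the final bound.
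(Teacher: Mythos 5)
Your proposal follows the same route as the paper: instantiate the computation-paths reduction (Lemma \ref{lem:computationpaths}) on a static $F_p$-estimator with space $O(\eps^{-2}\log n\log\delta_s^{-1})$, bound the flip number by $O(\eps^{-1}\log n)$ via Corollary \ref{cor:fpflip}, union bound over the $n^{O(\eps^{-1}\log n)}$ possible transcripts, and observe that $\log\delta^{-1}$ dominates the extra $\log N$ cost in the hypothesized regime $\delta<n^{-C\eps^{-1}\log n}$. The space calculation and the final collapse to $O(\eps^{-2}\log n\log\delta^{-1})$ are exactly as in the paper, and the conditioning concern you raise at the end is indeed precisely what Lemma \ref{lem:computationpaths} resolves (fix a deterministic adversary, enumerate output trajectories, let each trajectory determine a fixed stream, union bound over those fixed streams).

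One small technical point worth flagging: rounding the raw estimate $v_i$ to the nearest power of $(1+\eps/3)$ does \emph{not} by itself guarantee the rounded sequence is weakly increasing. Even though $\|f^{(t)}\|_p$ is monotone in insertion-only streams, the sketch output $v_i=(1\pm\eps/3)\|f^{(i)}\|_p$ can decrease by a factor as large as $\approx(1-2\eps/3)$ from one step to the next, which is enough to cross a rounding boundary downward. So the bound $N\le\binom{m+\lambda}{\lambda}$ via counting non-decreasing sequences is not quite justified as written. The paper sidesteps this by using a ``sticky'' rounding in Lemma \ref{lem:computationpaths} (keep the previous output unless it is off by more than a $(1\pm\eps/2)$ factor), which via Lemma \ref{lemma:flip_number_for_sketch_switching} yields a $0$-flip number at most $\lambda_{\eps/8,m}$ and hence a count of $\binom{m}{\lambda}T^{O(\lambda)}$. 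Your count and the paper's are both $n^{O(\eps^{-1}\log n)}$ in the end, so the final bound is unaffected, but you should either switch to the sticky rounding or replace the ``weakly increasing'' claim with an argument bounding the number of trajectories with at most $\lambda$ sign changes in the appropriate sense.
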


In addition, we show that for turnstile streams with bounded $F_p$ \textit{flip number} (defined formally in Section \ref{sec:framework}), efficient adversarially robust algorithms exist. Roughly speaking, the $F_p$ flip number is the number of times that the $F_p$ moment changes by a factor of $(1+\eps)$. Our algorithms have extremely small failure probability of $\delta =n^{-\lambda}$, and have optimal space among turnstile algorithms with this value of $\delta$ \cite{jayram2013optimal}. 

\begin{theorem}[\small Robust $F_p$-Estimation in 
turnstile streams; See Theorem \ref{thm:fpturn}]
	Fix $0 < p \leq 2$ and let $\mathcal{S}_\lambda$ be the set of all turnstile streams with $F_p$ flip number at most $\lambda$. Then there is an adversarially robust streaming algorithm for the class $\mathcal{S}_\lambda$ of streams that, with probability $1-n^{-C\lambda}$ for any constant $C>0$, outputs in each step a value $R^t$ such that $R^t = (1 \pm \eps)\|f\|_p^p$. The space used by the algorithm is $O(\eps^{-2} \lambda \log^2 n )$.
\end{theorem}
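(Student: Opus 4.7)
The plan is to apply the computation paths method of Section~\ref{sec:framework} on top of a standard static turnstile $F_p$-estimation sketch (Indyk's $p$-stable sketch for $p \in (0,2)$, or AMS for $p = 2$). Running $O(\log \delta^{-1})$ independent copies and taking the median gives a sketch that, on any fixed stream, tracks $\|f^{(t)}\|_p^p$ to within $(1 \pm \eps/4)$ at every step with probability $1 - \delta$, using $O(\eps^{-2} \log n \log \delta^{-1})$ bits (the extra $\log n$ for tracking can be folded into $\log \delta^{-1}$). Setting $\delta = n^{-C'\lambda}$ for a sufficiently large constant $C'$ then yields the claimed space bound $O(\lambda \eps^{-2} \log^2 n)$.

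The robust algorithm runs this sketch and, at each step $t$, outputs $R^t$ equal to the current estimate rounded to the nearest power of $(1+\eps/8)$; whenever the sketch is accurate, $R^t = (1 \pm \eps)\|f^{(t)}\|_p^p$. The purpose of the rounding is that $R^t$ takes values in a fixed ladder of size $L = O(\eps^{-1} \log n)$, and an output change can only occur when the rounded true value changes. Since the stream lies in $\mathcal{S}_\lambda$, each $(1+\eps)$-flip of $\|f^{(t)}\|_p^p$ encompasses only $O(1)$ many $(1+\eps/8)$-threshold crossings, so on any run in which the sketch is always accurate, the output trajectory has at most $K = c\lambda$ change-points for a suitable constant $c$.

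Let $\mathcal{T}$ be the set of all length-$m$ output trajectories over these $L$ levels with at most $K$ changes; then $|\mathcal{T}| \leq \binom{m}{K} L^K \leq (mL)^K = n^{O(\lambda)}$. Fixing the adversary's strategy to be deterministic without loss of generality, every $\mathbf{R} \in \mathcal{T}$ induces a \emph{canonical stream} $u(\mathbf{R})$, in which the $t$-th update is the adversary's scheduled reply to the previous outputs $R^1, \ldots, R^{t-1}$. Because $u(\mathbf{R})$ is oblivious to the sketch's internal randomness, the static tracking guarantee says that the sketch is correct on $u(\mathbf{R})$ with probability $\geq 1 - \delta$; a union bound over $\mathcal{T}$ gives simultaneous correctness on every canonical stream with probability $\geq 1 - n^{O(\lambda)} \cdot \delta = 1 - n^{-C\lambda}$ for any desired $C$, provided $C'$ is chosen large enough.

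It remains to argue that on this good event the actual execution is correct. Let $\tau$ be the first time at which the actual output trajectory has more than $K$ changes, or $\tau = m+1$ if no such time exists. For every $t < \tau$, the prefix $(\hat R^1, \ldots, \hat R^t)$ extends to some $\mathbf{R} \in \mathcal{T}$; since the adversary's moves depend only on past outputs, the first $t$ updates coincide with those of $u(\mathbf{R})$, and thus the actual sketch state at step $t$ equals the canonical one. Hence $\hat R^t$ is the rounded value of $\|f^{(t)}\|_p^p$ for every $t < \tau$, and by the flip number bound the number of changes up to step $\tau$ is at most $K$, forcing $\tau = m+1$. The main bookkeeping subtlety is exactly this reconciliation---matching the $(1+\eps)$-flip count permitted by $\mathcal{S}_\lambda$ with the $(1+\eps/8)$ scale of the output and the $(1 \pm \eps/4)$ accuracy of the base sketch---which is handled by absorbing the resulting constants into $K = O(\lambda)$.
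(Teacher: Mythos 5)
Your high-level plan matches the paper's: the paper proves this by a one-line application of Lemma~\ref{lem:computationpaths} (computation paths) with the turnstile $F_p$ sketch of Kane--Nelson--Woodruff, which is exactly the black box you are re-deriving. However, your re-derivation introduces a real gap in the choice of rounding scheme, and that gap is fatal precisely because this theorem is stated for \emph{turnstile} streams.

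You round the sketch's estimate to the nearest power of $(1+\eps/8)$, and then assert that because the stream has $F_p$ flip number at most $\lambda$, the rounded output trajectory has at most $K = O(\lambda)$ change-points. This implication is false. The flip number controls the length of the longest subsequence whose consecutive terms differ by more than a $(1\pm\eps)$ factor; it does \emph{not} control how many times the value crosses a fixed multiplicative grid. In a turnstile stream the adversary can cause $\|f^{(t)}\|_p^p$ to oscillate back and forth across a single $(1+\eps/8)^k$ boundary arbitrarily many times while staying inside a band of width $(1+\eps/8)$, so the $\eps$-flip number remains $O(1)$ yet your rounded output changes $\Theta(m)$ times. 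Then $|\mathcal{T}|$ is not $n^{O(\lambda)}$, the union bound blows up, and the ``reconciliation'' step at the end of your argument (concluding $\tau = m+1$) does not go through, since the premise ``the number of changes up to step $\tau$ is at most $K$'' is exactly what fails. (Your argument would be fine in the insertion-only model, where the monotonicity of $\|f^{(t)}\|_p^p$ makes grid crossings monotone; but the whole point of this theorem is the turnstile setting.)

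The paper avoids this by not using a fixed ladder at all. Lemma~\ref{lemma:flip_number_for_sketch_switching} defines a \emph{lazy} output: keep the previously displayed value $w_{i-1}$ as long as it remains within $(1\pm\eps/2)$ of the current estimate $v_i$, and only update $w_i \leftarrow v_i$ when that fails. With this rule, each change of $w$ certifies that the underlying true value moved by more than a $(1\pm\eps/8)$ factor, so the number of output changes is genuinely bounded by $\lambda_{\eps/8}(\bar u)$ --- even when the true values wander up and down. Substituting this hysteretic rounding for your fixed-ladder rounding, and otherwise keeping your canonical-stream union bound, recovers the correct proof; equivalently, you can simply cite Lemma~\ref{lem:computationpaths} and Corollary~\ref{cor:fpflip} together with the $O(\eps^{-2}\log n \log \delta^{-1})$-bit turnstile $F_p$ sketch, as the paper does.
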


The next result concerns $F_p$-estimation for $p > 2$. Here again, we provide an adversarially robust algorithm which is optimal up to a small multiplicative factor. This result applies the computation paths robustification method as a black box. Notably, a classic lower bound of \cite{bar2004information} shows that for $p > 2$, $\Omega(n^{1-2/p})$ space is required to estimate $\|f\|_p^p$ up to a constant factor (improved lower bounds have been provided since, e.g., \cite{li2013tight,ganguly2018high}).

\begin{theorem}[\small Robust $F_p$-estimation for $p > 2$; see Theorem \ref{thm:F_p_estim_largep}]
	Fix any $\eps>0$, and fix any $p > 2$. There is a streaming algorithm for the insertion-only adversarial model which, with probability $1-n^{-(c \log n) /\eps}$ for any constant $c>1$, successfully outputs, at each step $t \in [m]$, a value $R^t$ such that $R^t= (1 \pm \eps) \|f^{(t)}\|_p$. The space used by the algorithm is
	\[O\left(n^{1-2/p}  \left( \eps^{-3} \log^2 n + \eps^{-6/p} \left( \log^2 n \right)^{2/p} \log n \right) \right).\]
\end{theorem}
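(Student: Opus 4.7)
The plan is to apply the computation-paths robustification framework from Section~\ref{sec:framework} as a black box to the near-optimal static (non-robust) $F_p$-tracking algorithm for $p>2$ from \cite{ganguly2018high}. That algorithm, call it $\mathcal{A}$, produces at every step $t$ a $(1\pm\eps)$ multiplicative estimate of $\|f^{(t)}\|_p^p$ with failure probability at most $\delta$, using space of the form
\[
S(n,\eps,\delta) \;=\; O\!\left(n^{1-2/p}\left(\eps^{-2}\log(n/\delta) + \eps^{-4/p}\log(n/\delta)^{2/p}\log n\right)\right).
\]
Our goal is to instantiate $\mathcal{A}$ with a sufficiently tiny $\delta$ and then verify that the resulting bound collapses to the claim.

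The first step is to bound the relevant flip number. On an insertion-only stream, $\|f^{(t)}\|_p^p$ is monotonically non-decreasing in $t$, and once nonzero lies between $1$ and $mM^p=\poly(n)$; hence it can jump by a multiplicative $(1+\eps)$ factor at most $\lambda = O(\eps^{-1}\log n)$ times over the life of the stream. This is exactly the $(1,\eps)$-flip number associated with the $F_p$ problem in insertion-only streams, which is the quantity that drives the computation-paths framework.

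The second step is to invoke the framework itself. The method rounds $\mathcal{A}$'s output to powers of $(1+\eps)$; the bound on the flip number implies that any rounded-output sequence experienced during the adversarial game changes values at most $O(\lambda)$ times. Because the adversary's future choices depend on past outputs only through this rounded sequence, the total number of distinct \emph{computation paths} that can arise is at most $m^{O(\lambda)} \le n^{O(\eps^{-1}\log n)}$. Setting $\delta = n^{-c\log n/\eps}$ for a sufficiently large constant $c>1$ lets us union-bound the failure of $\mathcal{A}$ over all such paths simultaneously, yielding correctness against the adaptive adversary with the claimed failure probability.

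Finally, plugging $\log(1/\delta) = O(\eps^{-1}\log^2 n)$ into $S(n,\eps,\delta)$ gives, after routine simplification, $O\!\left(n^{1-2/p}\left(\eps^{-3}\log^2 n + \eps^{-6/p}(\log^2 n)^{2/p}\log n\right)\right)$, which is exactly the stated bound. The main technical obstacle is making sure that the static algorithm we cite really exhibits the required $\log(1/\delta)$-dependence in \emph{both} space terms (a linear dependence in the first and a $\log(1/\delta)^{2/p}$ dependence in the second); if \cite{ganguly2018high} states its bound only for constant $\delta$, we would obtain the desired form either by a direct inspection of its internal hashing/CountSketch parameters or by a standard median-of-$\log(1/\delta)$ boosting argument applied to the appropriate subroutines, ensuring that the extra $\log(1/\delta)$ factor enters only in the two terms indicated above.
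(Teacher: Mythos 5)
Your proposal is correct and follows essentially the same route as the paper: invoke the static $F_p$-tracking algorithm of \cite{ganguly2018high}, bound the insertion-only flip number of $F_p$ by $O(\eps^{-1}\log n)$ (constant $p$), and apply the computation-paths lemma (Lemma \ref{lem:computationpaths}) with $\delta = n^{-\Theta(\eps^{-1}\log n)}$ so that $\log\delta^{-1} = O(\eps^{-1}\log^2 n)$, which gives exactly the claimed bound. One small caveat worth flagging: your fallback of ``standard median-of-$\log(1/\delta)$ boosting'' would multiply \emph{both} terms of the \cite{ganguly2018high} bound by a linear $\log\delta^{-1}$ factor, yielding $\eps^{-4/p-1}\log^3 n$ rather than the $\eps^{-6/p}\log^{4/p+1} n$ in the theorem; that would prove a weaker statement. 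The proof as written requires the sharper $\log^{2/p}\delta^{-1}$ dependence in the second term, which the paper takes directly from \cite{ganguly2018high} and cannot be recovered by generic median boosting.
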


\paragraph{Attack on AMS.}
On the negative side, we demonstrate that the classic Alon-Matias-Szegedy sketch (AMS sketch) \cite{AlonMS96}, the first and perhaps most well-known $F_2$ estimation algorithm (which uses sub-polynomial space), is \textit{not} adversarially robust, even in the insertion-only setting.
Specifically, we demonstrate an adversary which, when run against the AMS sketch, fools the sketch into outputting a value which is not a $(1 \pm \eps)$ estimate of the $F_2$.  The non-robustness of standard static streaming algorithms, even under simple attacks, is a further motivation to design adversarially robust algorithms. 

In what follows, recall that the AMS sketch computes $S\cdot f$ throughout the stream, where $S \in \R^{t \times n}$ is a matrix of uniform $\{t^{-1/2},-t^{-1/2}\}$ random variables. The estimate of the $F_2$ is then the value $\|Sf\|_2^2$. 
\begin{theorem}[\small Attack on AMS sketch; see Theorem \ref{thm:AMS}]
	Let $S \in \R^{t \times n}$ be the AMS sketch with $1 \leq  t \leq n/c$ for some constant $c > 1$. There is an adversary which, with probability $99/100$, succeeds in forcing the estimate $\|Sf\|_2^2$ of the AMS sketch to not be a $(1 \pm 1/2)$ approximation of the true norm $\|f\|_2^2$. Moreover, the adversary needs to only make $O(t)$ stream updates before this occurs.
\end{theorem}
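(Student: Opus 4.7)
The plan is to exhibit a concrete adaptive adversary operating in two phases. In the \emph{probing} phase it inserts the singletons $e_1, e_2, \ldots, e_N$ one at a time for $N = Ct$, where $C$ is a sufficiently large absolute constant. Writing $v_i := Sf^{(i)} = \sum_{j \le i} Se_j$ for the internal sketch state and $E_i = \|v_i\|_2^2$ for the observed estimate, and using $\|Se_j\|_2^2 = t \cdot (1/\sqrt{t})^2 = 1$, the increment after step $i$ equals $\Delta_i = E_i - E_{i-1} = 1 + 2\langle Se_i, v_{i-1}\rangle$. Thus the sign of $X_i := \langle Se_i, v_{i-1}\rangle$ is revealed to the adversary. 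In the \emph{amplification} phase it forms the ``good'' set $G = \{i \in [N] : \Delta_i > 1\}$ (of expected size $\approx N/2$) and inserts one additional copy of $e_i$ for each $i \in G$.

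For the analysis, the final frequency vector has $f_i \in \{1,2\}$ with $\|f\|_2^2 = N + 3|G| \approx 5N/2$, and the final sketch equals $Sf = v_N + u_G$ where $u_G := \sum_{i \in G} Se_i$. Expanding $\|Sf\|_2^2 = \|v_N\|_2^2 + \|u_G\|_2^2 + 2\langle v_N, u_G\rangle$, the first two terms concentrate around $N$ and $|G|$. For the cross term, using $\langle Se_i, v_N\rangle = X_i + 1 + \sum_{j>i}\langle Se_i, Se_j\rangle$ and summing over $i \in G$ gives $\langle v_N, u_G\rangle = \sum_{i=1}^N X_i^+ + |G| + \eta$, where $\eta$ is mean-zero noise of standard deviation $O(N/\sqrt{t})$. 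Conditionally on the prior columns, $X_i$ is a Rademacher sum with variance $\|v_{i-1}\|_2^2/t \approx (i-1)/t$, whence $\E[X_i^+ \mid v_{i-1}] = \Theta(\sqrt{(i-1)/t})$ by a Khintchine-type bound, and $\E\bigl[\sum_{i=1}^N X_i^+\bigr] = \Theta(N^{3/2}/\sqrt{t}) = \Theta(C^{3/2}\, t)$. Collecting the pieces, $\|Sf\|_2^2 \ge (5/2)N + \Theta(C^{3/2} t)$ while $\|f\|_2^2 = \Theta(Ct)$; choosing $C$ a large enough constant forces the ratio above $3/2$ with probability $\ge 99/100$. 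The total number of insertions is $N + |G| \le 2N = O(t)$, as required. Note that for this to be feasible we need $N = Ct \le n$, which is ensured by taking the constant in $t \le n/c$ to satisfy $c \ge C$.

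The main technical obstacle is proving concentration for the positive-parts sum $\sum_i X_i^+$, since the $X_i$ are not independent but instead form a martingale-difference sequence whose conditional variances themselves fluctuate with the history. I would handle this by first conditioning on the high-probability event that $\|v_{i-1}\|_2^2 = (1 \pm o(1))(i-1)$ uniformly in $i$ (a Doob-type martingale bound on the partial norms suffices), and then applying Freedman's inequality, or equivalently Azuma--Hoeffding applied to the sub-Gaussian differences $X_i^+ - \E[X_i^+ \mid v_{i-1}]$, to bound the deviation of $\sum_i X_i^+$ from its conditional mean by $O(\sqrt{\sum_i (i-1)/t}) = O(C\sqrt{t})$, which is subdominant to the signal $\Theta(C^{3/2} t)$ once $C$ is a large enough constant. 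Justifying the Khintchine-type estimate $\E[X_i^+ \mid v_{i-1}] = \Theta(\sqrt{\|v_{i-1}\|_2^2/t})$ for the Rademacher sums, and handling the easier concentration of $\|v_N\|_2^2$, $\|u_G\|_2^2$, and the noise term $\eta$ by standard variance computations, are routine secondary steps.
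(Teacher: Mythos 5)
Your attack is genuinely different from the paper's, and it is aimed in the opposite direction: you probe for one round, record the set $G$ of indices with positive increment, then re-insert those indices so that the cross term $2\langle v_N, u_G\rangle$ inflates $\|Sf\|_2^2$ far \emph{above} $\|f\|_2^2$. The paper's adversary (Theorem~\ref{thm:AMS}) instead starts with a heavy item $C\sqrt{t}\,e_1$, and then at each step $i$ decides on-the-fly whether to insert $e_i$ once or twice, choosing whichever option decreases the estimate, so that $\|Sf^{(m)}\|_2^2$ is driven \emph{below} $\tfrac12\|f^{(m)}\|_2^2$. Both attacks rely on the same underlying asymmetry that Khintchine's inequality exposes (inserting once increments the estimate by $1+2X_i$, inserting twice by $4+4X_i$), but they exploit it quite differently. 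The paper's version has a much lighter analytic footprint: the conditional expectation $\E[s_{i+1}\mid s_i]\le s_i + 5/2 - \sqrt{s_i/2t}$ yields a supermartingale-type drift, and the argument closes with a single Markov bound and a proof-by-contradiction -- no concentration inequality for correlated sums is ever needed. Your approach buys a conceptually pretty ``probe then amplify'' structure, but at the price of having to control three correlated, history-dependent quantities simultaneously.

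On that last point there is a real inaccuracy in your sketch, though a benign one. You assert that $\|u_G\|_2^2$ ``concentrates around $|G|$.'' It does not: conditioning on $j\in G$ (i.e.\ on $\langle Se_j, v_{j-1}\rangle > 0$) positively biases $\langle Se_i, Se_j\rangle$ toward the sign of $\langle Se_i, v_{j-1}\rangle$, and summing this bias over pairs in $G$ produces an extra $\Theta(N^{3/2}/\sqrt{t}) = \Theta(C^{3/2} t)$, the same order as your main signal $\sum_i X_i^+$. This does not break your proof -- the bias points in the helpful direction, so $\|Sf\|_2^2$ is even larger than you claim -- but the ``routine variance computation'' you defer is not routine, and the same cascading-correlation structure (membership in $G$ correlated with the very inner products you are summing) is what makes the Freedman/Azuma step delicate. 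In particular, the worst-case magnitude $|X_i| \le \|v_{i-1}\|_2 \approx \sqrt{i}$ is too large for a raw Azuma bound, so you really do need a conditional-variance (Freedman-style) inequality together with a preliminary truncation of the rare large increments. Executed carefully this should all go through, but it is substantially heavier machinery than the paper uses. If you want a shorter path from your two-phase idea, you could lower-bound $\|Sf\|_2^2 \ge \|v_N\|_2^2 + 2\langle v_N, u_G\rangle$ and drop $\|u_G\|_2^2 \ge 0$ entirely, which removes the one genuinely mis-analyzed term, leaving only the (still nontrivial) concentration of $\sum_i X_i^+$ and of the mean-zero noise $\eta$.
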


\paragraph{Heavy Hitters.}
We also show how our techniques can be used to solve the popular \textit{heavy-hitters} problem. Recall that the heavy-hitters problem tasks the streaming algorithm with returning a set $S$ containing all coordinates $i$ such that $|f_i| \geq \tau$, and containing no coordinates $j$ such that $|f_j| < \tau/2$, for some threshold $\tau$. Generally, the threshold $\tau$ is set to $\tau = \eps \|f\|_p$, which is known as the $L_p$ heavy hitters guarantee.

For $L_1$ heavy hitters in insertion-only streams, a deterministic $O(\frac{1}{\eps} \log n)$ space algorithm exists \cite{misra1982finding}. However, for $p>1$, specifically for the highly popular $p=2$, things become more complicated. Note that since we can have $\|f\|_2 \ll \|f\|_1$, the $L_2$ guarantee is substantially stronger.  For sketching-based turnstile algorithms, an $\Omega(n)$ lower bound for deterministic algorithms was previously known \cite{ganguly9}. Since $\|f\|_1 \leq \sqrt{n} \|f\|_2$, by setting $\eps = n^{-1/2}$, one can obtain a deterministic $O(\sqrt{n}\log n)$ space insertion-only $L_2$ heavy hitters algorithm. Recently, a lower bound of $\Omega(\sqrt{n})$ for deterministic insertion-only algorithms was given, demonstrating the near tightness of this result \cite{kpw20}. Thus, to develop a more efficient adversarially robust $L_2$ heavy hitters algorithm, we must employ randomness. 

Indeed, by utilizing our sketch switching techniques, we demonstrate an adversarially robust $L_2$ heavy hitters (tracking) algorithm which uses only an $O(\eps^{-1}\log \eps^{-1})$ factor more space than the best known static $L_2$ heavy hitters tracking algorithm \cite{braverman2017bptree}. Note that here the adversary sees the estimated set $S$ in every step.

\begin{theorem}[\small Robust $L_2$ heavy hitters: see Theorem  \ref{thm:HH}]
	Fix any $\eps > 0$. 
	There is a streaming algorithm in the adversarial insertion-only model which solves the $L_2$ heavy hitters problem in every step $t \in [m]$ with probability $1-n^{-C}$ (for any constant $C>1$). The algorithm uses $O(\frac{\log \eps^{-1}}{\eps^3}\log^2 n)$ bits of space. 
\end{theorem}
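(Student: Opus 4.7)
The plan is to apply the sketch switching framework from Section~\ref{sec:framework} to a state-of-the-art static $L_2$ heavy hitters tracking algorithm such as BPTree \cite{braverman2017bptree}, which achieves $O(\eps^{-2} \log^2 n)$ space. The sketch switching reduction turns any such tracking algorithm into an adversarially robust one, with a space overhead governed by the relevant flip number.

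The first step is to identify the right flip number. Since $\|f^{(t)}\|_2$ is monotone non-decreasing in the insertion-only model and bounded by $\poly(n)$, it can grow by a factor of $(1+\eps)$ only $\lambda = O(\eps^{-1} \log n)$ times; I would track these transitions using an adversarially robust $L_2$ estimator, already available from Theorem~\ref{thm:introfp}. This partitions the stream into at most $\lambda$ epochs inside which the heavy-hitter threshold $\eps \|f^{(t)}\|_2$ is stable up to a $(1+\eps)$ factor. I would then run fresh independent copies of BPTree, each with tightened threshold parameter $\eps/C$ for a small constant $C$ and failure probability $n^{-C'}$. At any moment the reported set of heavy hitters is read off the currently active copy; when the robust $L_2$ estimator signals that the norm has crossed into a new level, the active copy is retired and a fresh one is activated.

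The correctness argument follows the standard sketch switching template: inside a single epoch, the active copy's randomness has not been exposed to the adversary before, so the updates during the epoch are effectively non-adaptive with respect to it, and the static tracking guarantee of BPTree implies that the emitted set is correct at every step of the epoch. A union bound over the $\lambda$ epochs and over the $L_2$ estimator's correctness yields the overall failure probability $n^{-C}$. The tightened $\eps/C$ threshold provides the safety margin required so that the relaxed must-include/must-exclude heavy-hitter guarantee is preserved as $\|f^{(t)}\|_2$ drifts within a single epoch.

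The main obstacle is trimming the naive space bound of $\lambda \cdot O(\eps^{-2} \log^2 n) = O(\eps^{-3} \log^3 n)$ down to the claimed $O(\eps^{-3} \log \eps^{-1} \log^2 n)$. This requires the same optimization used in the $F_0$ sketch-switching result (Theorem~\ref{thm:distinct_elements_sketch_switching}): rather than pre-allocating one copy per epoch for the entire stream, one observes that only $O(\eps^{-1} \log \eps^{-1})$ fresh copies need to coexist at any single time, because the adversary's view between successive switches contains only a bounded number of distinguishable heavy-hitter transitions. Old copies are discarded once their epoch closes and new ones are spawned on demand, so peak memory is dominated by the $O(\eps^{-1} \log \eps^{-1})$ concurrently live copies. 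Setting the per-copy failure probability to absorb the union bound over all $\lambda$ spawned copies then delivers the final $n^{-C}$ success probability, completing the argument.
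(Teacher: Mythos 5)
Your plan correctly identifies the static building block (an $L_2$ heavy-hitters tracking sketch), the flip-number bound $\lambda = O(\eps^{-1}\log n)$, the restart trick for trimming from $\lambda$ to $O(\eps^{-1}\log\eps^{-1})$ live copies, and a separate robust $F_2$ estimator to demarcate epochs. However, the central step has a genuine gap: you propose to read the heavy-hitter \emph{set} off the active copy ``at any moment'' during its epoch. This breaks the sketch-switching invariant. The whole point of Lemma~\ref{lem:sketchswitch} is that during one epoch the adversary sees a \emph{fixed} output, so that the stream fed to the silently running instance (and to the instance whose correctness you need at the switch point) is a fixed, non-adaptive sequence. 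If the active copy's reported set is re-read every step, the adversary's updates during the epoch become adaptive functions of the active copy's random state, and the static tracking guarantee of BPTree simply does not apply to that copy during that interval. Your sentence ``the updates during the epoch are effectively non-adaptive with respect to it'' asserts exactly what is not true once the output is re-read continuously.

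The natural repair --- freeze the set at the epoch's start --- also fails, for a reason the paper has to work around explicitly. Within a single epoch (where $\|f\|_2$ grows by a $(1+\Theta(\eps))$ factor), a coordinate can go from frequency $0$ to frequency $\eps\|f^{(t)}\|_2$: inserting $\eps\|f^{(t_i)}\|_2$ copies of a fresh item increases $\|f\|_2^2$ by only a $(1+\eps^2)$ factor, so such an item crosses the heavy-hitter threshold mid-epoch, yet a frozen set computed at time $t_i$ cannot contain it. No amount of constant tightening of the threshold repairs this, since the must-include rule at later times conflicts with the must-exclude rule at $t_i$. This is precisely why the paper does \emph{not} sketch-switch a heavy-hitters oracle: instead it sketch-switches an $(\eps,\delta)$-\emph{point query} oracle (Count-Sketch, Lemma~\ref{lem:countsketch}), freezing a point-query vector $\wh{f}^i$ per epoch, and then forms the output set at every step $t$ by re-thresholding $\{j : \wh{f}^i_j > (3/4)\eps R^t\}$ against the adversarially robust estimate $R^t$. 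The vector $\wh{f}^i$ never changes inside an epoch, so the sketch-switching invariant holds, while Proposition~\ref{prop:HH} shows that this frozen estimate remains $O(\eps)$-correct as $f^{(t)}$ drifts within the epoch, and the re-thresholding with $R^t$ lets the reported set legitimately evolve step by step. This decoupling --- frozen point-query vector plus a separately robustified threshold --- is the missing ingredient in your proposal.
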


\paragraph{Entropy Estimation.}
Additionally, we demonstrate how our sketch switching techniques can be used to obtain robust algorithms for \textit{empirical Shannon Entropy estimation}. Here, the Shannon Entropy $H(f)$ of the stream is defined via $H(f) = -\sum_i \frac{|f_i|}{\|f\|_1} \log\left( \frac{|f_i|}{\|f\|_1}\right)$. Our results follow from an analysis of the exponential of $\alpha$-Renyi Entropy, which closely approximates the Shannon entropy, showing that the former cannot rapidly change too often within the stream. Our result is an  adversarially robust algorithm with space complexity only a small polylogarithmic factor larger than the best known static algorithms \cite{clifford2013simple,jayaram2019towards}.

\begin{theorem}[\small Robust Entropy Estimation; see Theorem \ref{thm:entropy}]\label{thm:entropyintro}
	There is an algorithm for $\eps$-additive approximation of the Shannon entropy in the insertion-only adversarial streaming model using $O(\frac{1}{\eps^4} \log^4 n (\log \log n + \log \eps^{-1}))$-bits of space in the random oracle model, and $O(\frac{1}{\eps^4} \log^6 n(\log \log n + \log \eps^{-1}))$-bits of space in the general insertion-only model. 
\end{theorem}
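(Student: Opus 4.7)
The plan is to apply the sketch-switching framework of Section \ref{sec:framework} to a static tracking estimator of Renyi entropy, after establishing a bounded flip number for the relevant entropy-like functional along insertion-only streams. Write $H_\alpha(f) = \frac{1}{1-\alpha}\log\bigl(\sum_i (|f_i|/\|f\|_1)^\alpha\bigr)$ for the order-$\alpha$ Renyi entropy.

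First, I would pass from an additive $\eps$ approximation of Shannon entropy to a multiplicative approximation of a smoother quantity. By a standard estimate (cf.\ \cite{harvey2008sketching}), choosing $\alpha = 1 - \gamma$ with $\gamma = \Theta(\eps/\log n)$ guarantees $|H(f) - H_\alpha(f)| \leq \eps/2$ for every stream $f$ with $\|f\|_1 \leq \poly(n)$. It therefore suffices to obtain a tracking $(1 \pm \gamma)$-multiplicative approximation of $Q(f) := \exp(H_\alpha(f)) = F_\alpha(f)^{1/\gamma}\cdot \|f\|_1^{-(1-\gamma)/\gamma}$, since then $\log Q$ recovers $H$ to additive error $\eps$.

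Second, I would bound the $(1 \pm \gamma)$-flip number of $Q$ in any insertion-only stream. Since $Q(f^{(t)}) \in [1,n]$ at all times and both $F_\alpha$ and $\|f\|_1$ are monotone non-decreasing under insertions, one can argue that each $(1 \pm \gamma)$-multiplicative change of $Q$ forces an additive advancement of $\log F_\alpha$ or of $\log \|f\|_1$ by at least $\Omega(\gamma)$; telescoping then yields a flip number bound of $\lambda = \tilde{O}(\gamma^{-1} \log n) = \tilde{O}(\eps^{-1} \log^2 n)$, up to polylogarithmic slack. Given this bound, the sketch-switching meta-theorem of Section \ref{sec:framework} converts any static tracking $(1 \pm \gamma)$-estimator of $H_\alpha$ with failure probability $O(\delta/\lambda)$ into an adversarially robust one, at the price of an $\tilde{O}(\lambda)$ multiplicative overhead in space.

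Finally, to obtain the two stated bounds I would plug in an appropriate inner static algorithm for $H_\alpha$. In the random oracle model, the $\tilde{O}(\gamma^{-2} \log n)$-space estimator of \cite{jayaram2019towards} yields total robust space $\tilde{O}(\eps^{-4} \log^4 n)$ after substituting $\gamma = \Theta(\eps/\log n)$ and multiplying by $\lambda$. In the general insertion-only model, I would instead use the $\tilde{O}(\gamma^{-2} \log^3 n)$-space algorithm of \cite{clifford2013simple}, which absorbs the extra $\log^2 n$ factor arising from Nisan-style derandomization of the inner sketch, yielding the claimed $\tilde{O}(\eps^{-4} \log^6 n)$ bound. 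The main obstacle is the flip-number analysis of Step~2: unlike $F_p$, the quantity $Q$ is not itself monotone along the stream, so one cannot appeal to the standard range-only argument, and any slack in the joint-monotonicity argument for $F_\alpha$ and $\|f\|_1$ multiplies directly into the final space bound; matching the stated exponents therefore requires careful bookkeeping of the logarithmic factors in both the flip-number bound and the parameters handed to the sketch-switching reduction.
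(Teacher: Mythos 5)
Your high-level strategy matches the paper's: use a Renyi-entropy surrogate to bound the flip number of $\exp(H)$ in insertion-only streams, then apply sketch switching with the static entropy estimators of \cite{jayaram2019towards} (random oracle) and \cite{clifford2013simple} (general). But the quantitative details of the flip-number argument — which you rightly flag as the crux — contain errors that propagate to the wrong exponents.

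First, the Renyi parameter is off by a $\log n$ factor. To guarantee $|H(f) - H_\beta(f)| \leq \eps$ one must take $|\beta - 1| = \Theta\bigl(\eps/(\log^2 n \cdot (\log\log n + \log\eps^{-1}))\bigr)$, as in the paper's Proposition~\ref{prop:H} (Harvey--Nelson--Onak). Your choice $\gamma = \Theta(\eps/\log n)$ is the parameter that achieves only a \emph{multiplicative} $(1+\eps)$-approximation of $H$; since $H$ can be as large as $\log n$, this translates to additive error up to $\eps\log n$, not $\eps$.

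Second, and more fundamentally, your flip-number bound is not correct and does not follow from the stated mechanism. Writing $Q = \exp(H_\beta) = (\|f\|_1/\|f\|_\beta)^{\beta/(\beta-1)}$, a $(1+\eps)$ multiplicative change in $Q$ requires $\log(\|f\|_1/\|f\|_\beta)$ to move by $\Theta(\eps\cdot|\beta-1|)$, because the exponent $\beta/(\beta-1) \approx 1/|\beta-1|$ amplifies by a factor of $1/|\beta-1|$. Hence one of the two monotone quantities $\log\|f\|_1$ or $\log\|f\|_\beta$ must advance by $\Omega(\eps|\beta-1|)$ — not $\Omega(\gamma)$ as you claim. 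Telescoping over the range $O(\log n)$ gives $\lambda = O\bigl(\log n/(\eps|\beta-1|)\bigr) = \tilde{O}(\eps^{-2}\log^3 n)$, matching the paper's Proposition~\ref{prop:entropyflip}. Your claimed $\lambda = \tilde{O}(\eps^{-1}\log^2 n)$ is smaller by an $\eps^{-1}\log n$ factor. Note also that you should be tracking the $\eps$-flip number of $Q$ (since the final guarantee is $\eps$-additive in $H$, i.e.\ $(1\pm\eps)$-multiplicative in $\exp(H)$), not the $\gamma$-flip number; the two coincide here only by accident of your incorrect exponent.

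Third, there is no need to run the inner static sketch at precision $\gamma$. The Renyi entropy serves purely as an analysis device to bound the flip number of $g = 2^{H}$; the static algorithm of \cite{jayaram2019towards} or \cite{clifford2013simple} is invoked with approximation parameter $\Theta(\eps)$ (additive in $H$, which is the same as $(1\pm\Theta(\eps))$-multiplicative in $g$), with failure probability $\delta/\lambda$ to union-bound over the $\lambda$ sketch switches. Calling it at precision $\gamma \ll \eps$ would waste space and, combined with your flip-number bound, does not even reproduce the stated exponents: with your $\gamma=\eps/\log n$ and $\lambda = \tilde{O}(\eps^{-1}\log^2 n)$, the product $\tilde{O}(\gamma^{-2}\log n)\cdot\lambda = \tilde{O}(\eps^{-3}\log^5 n)$, not $\tilde{O}(\eps^{-4}\log^4 n)$. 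With the corrected $\lambda = \tilde{O}(\eps^{-2}\log^3 n)$ and inner precision $\eps$ (static space $\tilde{O}(\eps^{-2}\log n)$ for \cite{jayaram2019towards}, $\tilde{O}(\eps^{-2}\log^3 n)$ for \cite{clifford2013simple} after setting $\delta = 1/\poly(n)$), the arithmetic gives exactly $\tilde{O}(\eps^{-4}\log^4 n)$ and $\tilde{O}(\eps^{-4}\log^6 n)$ as claimed.
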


We remark that by making the same cryptographic assumption as in Theorem \ref{thm:cryptointro},  we can remove the random oracle assumption in \cite{jayaram2019towards} for correctness of the entropy algorithm in the static case. Then, by applying the same techniques which resulted in  Theorem \ref{thm:entropyintro}, we can obtain the same stated bound for entropy with a cryptographic assumption instead of a random oracle assumption. 

\paragraph{Bounded Deletion Streams.}
Lastly, we show that our techniques for $F_p$ moment estimation can be extended to data streams with a bounded number of deletions (negative updates). Specifically, we consider the bounded deletion model of \cite{jayaram2018data}. Formally, given some $\alpha \geq 1$, the model enforces the restriction that at all points $t \in [m]$ in the stream, we have $\|f^{(t)}\|_p^p \geq \frac{1}{\alpha} \|h^{(t)}\|_p^p$, where $h$ is the frequency vector of the stream with updates $u_i' = (a_i ,\Delta_i')$ where $\Delta_i' = |\Delta_i|$ (i.e., the absolute value stream). In other words,  the stream does not delete off an arbitrary amount of the $F_p$ weight that it adds over the course of the stream. 

We demonstrate that bounded deletion streams have the desirable property of having a small flip number, which, as noted earlier, is a measurement of how often the $F_p$ can change substantially (see Section \ref{sec:framework} for a formal definition). Using this property and our sketch switching technique, we obtain the following. 

\begin{theorem}[$F_p$-estimation for bounded deletion; see Theorem \ref{thm:bd}]
Fix $p \in [1,2]$, $\alpha \geq 1$, and any constant $C>1$. Then there is an adversarially robust $F_p$ estimation algorithm for $\alpha$-bounded deletion streams which, with probability $1-n^{-C}$, returns at each step $t \in [m]$ an estimate $R^t$ such that $R^t = (1 \pm \eps) \|f^{(t)}\|_p^p$. The space used by the algorithm is $O( \alpha \eps^{-(2+p)} \log^3 n)$.
\end{theorem}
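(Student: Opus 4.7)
The plan is to apply the sketch switching framework of Section \ref{sec:framework} to a non-robust static $F_p$-estimation algorithm for bounded deletion streams (such as the tracking algorithm of \cite{jayaram2018data}). The crux of the argument will be establishing a strong bound on the $\eps$-flip number of $F_p$ on $\alpha$-bounded deletion streams.

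First, I will bound the $\eps$-flip number $\lambda$ of $F_p$ for $\alpha$-bounded deletion streams. Let $h$ denote the absolute value stream, whose frequency vector satisfies $\|h^{(t)}\|_p^p \in [\|f^{(t)}\|_p^p, \alpha \|f^{(t)}\|_p^p]$ at every step $t$ by the bounded deletion assumption. Since $h$ is insertion-only, $\|h^{(t)}\|_p^p$ is monotone non-decreasing in $t$, and being polynomially bounded in $n$ and $M$, it can increase by a factor of $(1+\eps)$ at most $O(\eps^{-1}\log n)$ times. I will partition the stream into phases during which $\|h^{(t)}\|_p^p$ stays within a $(1+\eps)$ factor of its initial value in the phase, giving $O(\eps^{-1}\log n)$ phases total. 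Within any single phase, $\|f^{(t)}\|_p^p$ lies in an interval of multiplicative size at most $\alpha(1+\eps)$; a careful charging argument---exploiting that each $(1+\eps)$ flip of $\|f^{(t)}\|_p^p$ must be driven either by fresh insertions (bounded by the phase's growth budget) or by deletions that reduce an existing coordinate's $p$-th power contribution---will show that the number of such flips per phase is $O(\alpha\eps^{-(p-1)})$, yielding an overall flip number $\lambda = O(\alpha\eps^{-p}\log n)$.

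Next, I will instantiate the sketch switching framework of Section \ref{sec:framework} with $\lambda$ parallel copies of the static $F_p$ tracking algorithm for bounded deletion streams, each run with parameter $\eps/c$ for a small constant $c$ and with failure probability set to $\delta' = n^{-C}/(2\lambda)$, so that a union bound over all $\lambda$ instances yields overall failure probability $n^{-C}$. Since each static instance uses $\tilde{O}(\eps^{-2}\log n \cdot \log(1/\delta')) = \tilde{O}(\eps^{-2}\log^2 n)$ bits, the final space is $\lambda \cdot \tilde{O}(\eps^{-2}\log^2 n) = O(\alpha\eps^{-(2+p)}\log^3 n)$, matching the claimed bound. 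Correctness of the robustification follows directly from the framework: at each step we output the estimate of the currently active sketch, and we switch to a fresh, unexposed sketch whenever the output would need to move by a factor of $(1+\eps/c)$; this guarantees that the adversary's view reveals information about at most $\lambda$ independent sketches during the entire interaction.

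The main obstacle is the per-phase flip count within Step 1: while the insertion-only analysis gives the $\eps^{-1}\log n$ factor essentially for free via monotonicity, deletions allow $F_p$ to oscillate within the interval $[\|h\|_p^p/\alpha,\|h\|_p^p]$ in ways not controlled by a simple monotone argument. The key will be to charge each $(1+\eps)$ flip of $\|f^{(t)}\|_p^p$ to either (i) a corresponding relative change in $\|h^{(t)}\|_p^p$, or (ii) a dedicated portion of the $p$-th power mass that is deleted and then re-inserted; combined with the cap of $\alpha$ on total mass discrepancy within a phase, this should produce the required $\eps^{-(p-1)}$ phase bound, though making the argument tight for the full range $p\in[1,2]$ will likely require separate treatment of the endpoints $p=1$ (where the flip count is geometric-series-bounded) and $p=2$ (where an additional $\eps^{-1}$ factor naturally arises from the quadratic sensitivity).
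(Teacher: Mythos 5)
Your overall plan---bound the $\eps$-flip number of $F_p$ on $\alpha$-bounded deletion streams by $O(\alpha\eps^{-p}\log n)$, then invoke a generic robustification with a static turnstile $F_p$ estimator---matches the paper's strategy, and the final space arithmetic is correct. But the centerpiece, the flip-number bound, is not actually proven in your proposal, and this is where the real content lies.

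Your phase-based charging argument is not carried out: you decompose the stream into $O(\eps^{-1}\log n)$ phases in which $\|h^{(t)}\|_p^p$ is nearly constant and then assert a per-phase flip count of $O(\alpha\eps^{-(p-1)})$ via "a dedicated portion of the $p$-th power mass that is deleted and then re-inserted." You acknowledge that this is the main obstacle, and it is: within a single phase $\|f^{(t)}\|_p^p$ can oscillate in $[\|h^{(t_0)}\|_p^p/\alpha, (1+\eps)\|h^{(t_0)}\|_p^p]$ and it is not clear how to bound the oscillation count without a new structural idea. The paper bypasses phases entirely. Its proof of Lemma \ref{lem:flipbd} observes that if $\|f^{(t_{i+1})}\|_p \notin (1\pm\eps)\|f^{(t_i)}\|_p$, the reverse triangle inequality gives $\|f^{(t_{i+1})}-f^{(t_i)}\|_p \geq \eps\|f^{(t_{i+1})}\|_p \geq \frac{\eps}{\alpha^{1/p}}\|h^{(t_i)}\|_p$; since updates to $h$ are entrywise the absolute values of updates to $f$, $\|h^{(t_{i+1})}-h^{(t_i)}\|_p^p \geq \|f^{(t_{i+1})}-f^{(t_i)}\|_p^p \geq \frac{\eps^p}{\alpha}\|h^{(t_i)}\|_p^p$; and then superadditivity $\|X+Y\|_p^p \geq \|X\|_p^p + \|Y\|_p^p$ for nonnegative integer vectors and $p\geq 1$ gives $\|h^{(t_{i+1})}\|_p^p \geq (1+\eps^p/\alpha)\|h^{(t_i)}\|_p^p$. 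Since $\|h\|_p^p$ is monotone and polynomially bounded, the flip number is $O(p\,\alpha\eps^{-p}\log n)$. This single chain of inequalities is the key insight your proposal is missing; without it (or something equivalent) the flip number claim does not follow.

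Two secondary differences. First, you apply sketch switching while the paper applies the computation paths technique; both yield $O(\alpha\eps^{-(2+p)}\log^3 n)$ here, but note that the "restart trick" used to shave a $\log n$ factor in the insertion-only $F_p$ results cannot be applied for bounded-deletion streams (the function is no longer monotone, so an old sketch may have missed an arbitrary fraction of the mass), so you genuinely need all $\Theta(\lambda)$ copies alive. Second, the static base algorithm you suggest, the bounded-deletion tracking algorithm of \cite{jayaram2018data}, is only available as stated for $p=1$ (see Table \ref{table:summary}); the paper instead uses the general turnstile $F_p$ estimator of \cite{kane2010exact} with the failure probability reduced so that a union bound over time steps yields strong tracking. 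This is easily fixed, but as written your instantiation does not cover the full range $p\in[1,2]$.
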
   

\subsection{Other Previous Work}
\label{subsec:related}
The need for studying adversarially robust streaming and sketching algorithms has been noted before in the literature. In particular, Gilbert et al.~\cite{gilbert2012recovering,gilbert2012reusable} motivated the adversarial model by giving applications and settings where it is impossible to assume that the queries made to a sketching algorithm are independent of the prior outputs of the algorithm, and the randomness used by the algorithm. One particularly important setting noted in \cite{gilbert2012reusable} is when the \textit{privacy} of the underlying data-set is a concern. 

In response to this, in \cite{HardtW13} the notion of adversarial robustness for \textit{linear} sketching algorithms was studied. Namely, it is shown how any function $g:\R^n \to \R$, defined by $g(x) = f(Ax)$ for some $A \in \R^{k \times n}$ and arbitrary $f: \R^k \to \R$ cannot approximate the $F_2$ moment $\|x\|_2^2$ of its input to an arbitrary polynomial factor in the presence of an adversary who is allowed to query $g(x_i)$ at polynomial many points (unless $k$ is large). Since one can insert and delete off each $x_i$ in a turnstile stream, this demonstrates a strong lower bound for adversarially robust turnstile linear sketching algorithms, at least when the stream updates are allowed to be real numbers.  Moreover, under certain conditions it has been demonstrated that all turnstile algorithms can be transformed into linear sketches \cite{li2014turnstile,ai2016new,KallaugherPrice20}. We point out, however, that this equivalence holds only for classes of \textit{static} streams, and therefore does not immediately have any consequence for adversarial streams. The work of \cite{HardtW13} also points out a connection to differential privacy. 

We remark that other work has observed the danger inherent in allowing adversarial queries to a randomized sketch with only a static guarantee, see Ahn et al.~\cite{ahn2012analyzing,ahn2012graph}. However, the motivation of these works is slightly different, and their setting not fully adversarial. Mironov et al.~\cite{mironov2011sketching} considered adversarial robustness of sketching in a distributed, \textit{multi-player} model, which is incomparable to the centralized streaming problem considered in this work. Finally, Goldwasser et al.~\cite{goldwasser2019pseudo} asked if there are randomized streaming algorithms whose output is independent of its randomness, making such algorithms natural candidates for adversarial robustness; unfortunately a number of their results are negative, while their
upper bounds do not apply to the problems studied here.  

\subsection{Subsequent Work and Open Questions}
\label{subsec:future}
Based on this paper, multiple very recent follow-up works have improved upon the space efficiency of our robustification techniques for different settings. Hassidim et al.~\cite{HassidimKMMS20} used techniques from differential privacy to obtain a generic robustification framework in the same mold as ours, where the dependency on the flip number is the improved $\sqrt{\lambda}$ as opposed to linear in $\lambda$ -- the exact bound includes other $\poly((\log n) / \epsilon)$ factors. Similar to our construction, they run multiple independent copies of the static algorithm $A$ with independent randomness and feed the input stream to all of the copies. Unlike our construction, when a query comes, they aggregate the responses from the copies in a way that protects the internal randomness of each of the copies using differential privacy. Using their framework, one may construct an adversarially robust algorithm for $F_p$-moment estimation that uses $\widetilde{O}(\frac{\log^4 n}{\epsilon^{2.5}})$ bits of memory for any $p \in [0,2]$. This improves over our $\widetilde{O}(\frac{\log n}{\epsilon^{3}})$ bound for interesting parameter regimes.

Woodruff and Zhou \cite{WoodruffZ20} obtained further improvements for a number of streaming problems (such as $F_p$-estimation, entropy, heavy hitters) which in some cases are nearly optimal even for the static case. For example, they give an adversarially robust algorithm for $F_p$-moment estimation that uses $\widetilde{O}(\frac{\log n}{\epsilon^2})$ bits of memory for any $p \in [0,2]$. This improves upon both our work and \cite{HassidimKMMS20}. Interestingly, the way they achieve this leads them to a new class of (classical) streaming algorithms they call difference estimators, which turn out to be useful also in the sliding window (classical) model. Subsequently, Attias and el.~\cite{ACSS21} 
combined the differential privacy based techniques of \cite{HassidimKMMS20} with the difference estimators of \cite{WoodruffZ20} to obtain a ``best of both worlds'' result with improved bounds for turnstile streams. 

It was shown by Kaplan et al.~\cite{KMNS21} that the $\sqrt{\lambda}$-type space overhead is tight for some streaming problems; they proved this for a streaming variant of the Adaptive Data Analysis problem, showing also that its space complexity is polylogarithmic in the static setting and polynomial in the adversarially robust setting. This is the first known example of such a large separation between static and adversarially robust streaming. Another interesting work by Menuhin and Naor \cite{MN21} shows that card guessing performance with memory constraints may be exponentially worse  against an adaptive adversarial dealer versus a static one.

For core problems in the streaming literature like $F_p$-estimation in the turnstile model (allowing insertions and deletions), it is not known whether such a separation exists.
However, there is a substantial gap between the space complexity of the static case and the best known algorithms for the adversarially robust case. For static turnstile $F_p$-estimation, the space complexity is polylogarithmic in $n$ when $p \leq 2$. 
In the adversarially robust setting, the best known results are much weaker, and involve polynomial dependence in the stream length $m$. As the above robustification techniques induce a $\sqrt{\lambda}$ overhead in the space complexity, and $\lambda=m$ for turnstile $F_p$-estimation, these techniques cannot obtain space bounds better than some $O(\sqrt{m})$ in general. Recently, Ben-Eliezer et al.~\cite{BEO21} used a hybrid approach combining the differential privacy based framework of \cite{HassidimKMMS20} with classical results in sparse recovery to obtain improved space bounds for this problem; the dependence in $m$ is for example $\tilde{O}(m^{1/3})$ when $p=0$ and $\tilde{O}(m^{2/5})$ when $p=2$.
This large gap in the best known space requirements, despite the fact that no space complexity separations between static and robust algorithms are known, leads to the following natural question (see \cite{jayaram2021thesis}, \cite{BEO21}):
\begin{quotation}
\begin{center}
\emph{What is the space complexity of adversarially robust\ \ \ \ \\\ \ \ \  $F_p$-estimation under the turnstile streaming model?}
\end{center}
\end{quotation}

Many problems remain open, mainly for achieving optimal bounds for all known streaming problems in the adversarial setting. It is also interesting to determine which types of existing algorithms are inherently adversarially robust. Remarkably, Braverman et al.~\cite{braverman2021adversarial} showed that popular techniques such as merge and reduce and row sampling can be robust ``for free'', implying robustness guarantees for many types of existing algorithms for streaming, regression, low rank approximation, and various other problems. Unlike our setting, which considers algorithms with a scalar output (i.e., an output which is typically a single real number), many of these problems produce a higher-dimensional vector output. It will be interesting to investigate what sorts of extensions of our flip number definition may be relevant in high dimensions, and to find suitable applications for such a generalized flip number notion.

A first result in this flavor has very recently been established by Chakrabarti et al.~\cite{RobustColoringCGS2021}, who considered the problem of coloring a graph in the semi-streaming model. They proved that coloring with few colors requires substantially more space in the adversarial model compared to the static one; for example, $O(\Delta)$ colors require $\Omega(n \Delta)$ space in the robust setting but only $O(n)$ space in the static setting (see \cite{ACKcoloring19}). They then provided adversarially robust algorithms for this problem, including one algorithm based on our main technique, sketch switching.

\section{Preliminaries}
For $p >0$, the $L_p$ norm\footnote{Note that this is only truly a norm for $p\geq 1$.} of a vector $f \in \R^n$ is given by $\|f\|_p = \left( \sum_{i=1}^n |f_i|^p \right)^{1/p}$. The $p$-th moment, denoted by $F_p$, is given by $F_p = L_p^{p}$, or $F_p = \sum_i |f_i|^p$. For $p=0$, we define $F_0$ to be the number of non-zero coordinates in $f$, namely $F_0 = \|f\|_0 = |\{i \; : \; f_i \neq 0 \}|$. Notice that this coincides with defining $0^0 = 0$ in the prior definition of $F_p$. The $F_0$ moment is also known as the number of distinct elements. 
For reals $a,b \in \R$ and $\eps > 0$, we write $a = (1 \pm \eps) b$ or $a \in (1 \pm \eps) b$ to denote the containment $a \in [(1-\eps) b, (1+\eps)b]$. Throughout, we will often assume that our error parameter $\eps>0$ is smaller than some absolute constant $\eps_0$ which does not depend on any of the other parameters of the problem. 

A  stream of length $m$ over a domain $[n]$ is a sequence of updates $(a_1,\Delta_1),(a_2,\Delta_2)\dots,(a_m,\Delta_m)$ where $a_t \in [n]$ and $\Delta_t \in \Z$. 
The \textit{frequency vector} $f \in \R^n$ of the stream is the vector with $i^{\text{th}}$ coordinate $f_i = \sum_{t: a_t = i} \Delta_t$. Let $f^{(j)}$ be the frequency vector restricted to the first $j$ updates, namely $f_i^{(j)} = \sum_{t \leq j : a_t = i} \Delta_t$. It is assumed at all intermediate points $t \in [m]$ in the stream that $\|f^{(t)}\|_\infty \leq M$, and $\log(mM) = \Theta(\log n )$. Notice in particular that this bounds $|\Delta_t| \leq 2M$ for each $t$. 

The general model as defined above is known as the \textit{turnstile} model of streaming. Another commonly studied model of streaming is the \textit{insertion-only} model, where it is assumed that $\Delta_t > 0$ for each $t =1,\dots,m$. The insertion-only model is often presented with the following equivalent and simplified definition: an insertion-only stream is given by a sequence $a_1,a_2,\dots,a_m \in [n]$, and the frequency vector $f \in \R^n$ is given by $f_i = |\{ j \in [m] : a_j = i\}|$. Since we will sometimes consider data streams with deletions (negative updates), in this work, we will use the former definition, where updates are pairs $(a_t,\Delta_t) \in [n]\times \Z$. In this paper, the space of a streaming algorithm is measured in bits, and the update time of a streaming algorithm is measured in the RAM model, where arithmetic operations on $O(\log n)$-bit integers can be done in $O(1)$ time. Throughout the paper we will almost always assume that the output (at any time) of the algorithms we discuss is represented by $O(\log n)$ bits; since we are generally interested in $(1+\eps)$-approximation where $\log(1/\eps) = O(\log n)$, any algorithm with higher bit precision can be replaced by one that only outputs the most significant $O(\log n)$ bits at any step, without majorly affecting any of the results. The only exception where the output requires more than $O(\log n)$ bits is for $F_2$-heavy hitters; here, a total of $O(\log n/\eps^2)$ bits are generally required to store all heavy hitters. 

The \textit{random-oracle} model of streaming is the model where the streaming algorithm is allowed random (read-only) access to an arbitrarily long string of random bits. In other words, the space complexity of the algorithm is not charged for storing random bits. We remark that while nearly all lower bounds for streaming algorithms hold even in the random oracle model, most of our results (except for one of our results for entropy estimation and part of our cryptographic results) do not require a random oracle.


Finally, given a vector $x \in \R^n$, the \textit{empirical Shannon Entropy} $H(x)$ is defined via $H(x) = -\sum_i |x_i|/\|x\|_1 \log\left( |x_i|/\|x\|_1\right)$. For $\alpha > 0$, the $\alpha$-Renyi Entropy $H_\alpha(x)$ of $x$ is given by the value $H_\alpha(x) = \log( \|x\|_\alpha^\alpha/\|x\|_1^\alpha)/(1-\alpha)$.

\subsection{Tracking Algorithms}\label{sec:tracking}
The robust streaming algorithms we design in this paper satisfy the \textit{tracking} guarantee. Namely, they must output a response to a query at every step in time $t \in [m]$. For the case of estimation queries, this tracking guarantee is known as strong tracking.

\begin{definition}[Strong tracking]
Let $f^{(1)},f^{(2)},\dots,f^{(m)}$ be the  frequency vectors of a stream $(a_1, \Delta_1), \ldots, (a_m, \Delta_m)$, and let $g:\R^n \to \R$ be a function on frequency vectors. A randomized algorithm $\mathcal{A}$ is said to provide $(\eps,\delta)$-\textit{strong $g$-tracking} if at each step $t \in [m]$ it outputs an estimate $R_t$ such that 
\[  |R_t - g(f^{(t)}) | \leq \eps |g(f^{(t)})| \]    for all $t \in [m]$ with probability at least $1-\delta$.
\end{definition}
In contrast, \textit{weak tracking} replaces the error term $\eps |g(f^{(t)})|$ by $\max_{t' \in [m]} \eps \cdot |g(f^{(t')})|$. However, for the purposes of this paper, we will not need to consider weak tracking. We now state two results for strong tracking of $F_p$ moments for $p \in [0,2]$. Both results are for the static setting, i.e., for a stream fixed in advance (and not for the adaptive adversarial setting that we consider).
\begin{lemma}[\cite{blasiok2017continuous}]\label{lem:fpstrong}
For $0<p\leq 2$, there is an insertion-only streaming algorithm which provides $(\eps,\delta)$-strong $F_p$-tracking using $O(\frac{\log n}{\eps^2} (\log \eps^{-1} + \log \delta^{-1} + \log \log n))$ bits of space. 
\end{lemma}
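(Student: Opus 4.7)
The plan is to follow the strategy of Błasiok \cite{blasiok2017continuous}: start from a pointwise $F_p$-sketch and upgrade its guarantee from a single query to strong tracking via a chaining argument, carefully exploiting the monotonicity of $F_p$ in insertion-only streams.

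First, I would invoke a standard pointwise sketch. For $p\in(0,2)$ one uses Indyk's $p$-stable sketch with $O(\eps^{-2}(\log\delta^{-1}+\log\log n))$ bits per row plus $O(\log n)$ bits to store the hash-seeds/index, and for $p=2$ the AMS sketch suffices. Concretely, for a single fixed time $t$, the estimator $\hat Z_t$ satisfies $\Pr[|\hat Z_t - \|f^{(t)}\|_p^p|>\eps \|f^{(t)}\|_p^p]\le \delta'$, and the whole sketch fits in the stated space bound for $\delta'=\delta$. A naive union bound over all $m$ stream positions would force $\delta'=\delta/m$ and lose a $\log m$ factor, which is exactly what we must avoid.

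To save that factor, I would set up a two-level chaining scheme. Since the stream is insertion-only, $F_p^{(t)}=\|f^{(t)}\|_p^p$ is non-decreasing in $t$. Split the stream into $O(\log n)$ dyadic \emph{epochs} $\mathcal E_j=\{t : F_p^{(t)}\in[2^j,2^{j+1})\}$. Inside each epoch, define a dyadic sequence of \emph{checkpoints}: for $k=0,1,\ldots,K$ with $K=\lceil\log(1/\eps)\rceil$, let $\mathcal T_k\subseteq\mathcal E_j$ be the $\le 2^k$ times at which $F_p$ has just crossed one of the thresholds $2^j(1+i\cdot 2^{-k})$ for $i=0,1,\ldots,2^k$. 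Each $t\in\mathcal E_j$ is sandwiched between a chain of approximating checkpoints $t_K(t)\preceq t_{K-1}(t)\preceq\cdots\preceq t_0(t)$, and writing $\hat Z_t - F_p^{(t)}$ as a telescoping sum of \emph{increments} $\hat Z_{t_{k+1}(t)}-\hat Z_{t_k(t)}$ plus a tail, the error decomposes along this tree.

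The key analytical step, and the main obstacle, is controlling these increments uniformly with the right tail bound. One shows that the increment between two neighbouring checkpoints, corresponding to an $F_p$-gap of size $\approx 2^j\cdot 2^{-k}$, concentrates at scale $\eps\cdot 2^j\cdot 2^{-k/2}$ with Gaussian-type tails (for $p=2$ this is a direct AMS calculation; for general $p\in(0,2)$ one truncates the $p$-stable variables and applies a Bernstein-style bound, which is the technically delicate part). Setting the per-increment failure probability to $\delta/(2^k\cdot\log n\cdot K)$ costs only $\log(1/\delta')=O(\log\delta^{-1}+k+\log\log n+\log\log\eps^{-1})$ bits per row; summing the per-level error budgets $\eps\cdot 2^{-k/2}\sqrt{k+\log\delta^{-1}}$ over $k=0,\ldots,K$ yields a geometric series bounded by $O(\eps)$. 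A union bound over the $O(\log n)$ epochs and the $2^k$ checkpoints at level $k$ then gives strong tracking with probability $1-\delta$, while the space blows up only by the additive $\log\eps^{-1}+\log\delta^{-1}+\log\log n$ inside the $O(\log n/\eps^2)$ factor, matching the claim. The tightest parts of this argument are the tail bound for the increments in the general $p$-stable case and the careful bookkeeping of the per-level failure probabilities so that no extra $\log m$ appears.
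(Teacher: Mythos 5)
This lemma is stated in the paper as a citation to \cite{blasiok2017continuous} with no accompanying proof, so there is no in-paper argument to compare against; the question is only whether your sketch is a faithful reconstruction of the cited work. At a high level it is: the approach there really is to start from a pointwise $\ell_p$ sketch and upgrade it to strong tracking by a chaining argument that exploits the monotonicity of $\|f^{(t)}\|_p^p$ in insertion-only streams, bucketing time by the dyadic level of $F_p$ and then placing a finer net inside each bucket, exactly as you outline.

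There are, however, two places where your sketch would not compile into a proof as written. First, the entire technical heart of \cite{blasiok2017continuous} is precisely the increment concentration you name as ``the technically delicate part'' and then assert. For $p<2$ the $p$-stable marginals are heavy-tailed; even after truncation the increments between checkpoints concentrate at best sub-exponentially, and getting a usable tail at scale $\eps\, 2^{j}2^{-k/2}$ requires a nontrivial level-dependent truncation/moment argument, not a generic ``Gaussian-type tails'' claim. Asserting this is the step that needs to be proved, not assumed. Second, the bookkeeping ``summing $\eps\cdot 2^{-k/2}\sqrt{k+\log\delta^{-1}}$ over $k$ is a geometric series bounded by $O(\eps)$'' is not correct as stated: that sum is $O(\eps\sqrt{\log\delta^{-1}})$, not $O(\eps)$, since the $\sqrt{\log\delta^{-1}}$ term is constant in $k$. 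To actually land at $O(\eps)$ one must carry the $1/\sqrt{r}$ normalization through, with $r = \Theta(\eps^{-2}(\log\delta^{-1}+\log\log n+\log\eps^{-1}))$ rows, so that the per-level scale is $2^{j}2^{-k/2}\sqrt{k+\log\delta^{-1}+\log\log n}/\sqrt{r}$ and the $k$-sum then contributes $O(\eps\,2^{j})$; the claimed space bound emerges from this choice of $r$, and the way it is currently written does not track that dependence.
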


\begin{lemma}[\cite{blasiok2018optimal}]\label{lem:f0strong}
There is an insertion-only streaming algorithm which provides $(\eps,\delta)$-strong $F_0$-tracking using $O(\frac{\log \log n + \log \delta^{-1}}{\eps^2}  + \log n)$ bits of space. 
\end{lemma}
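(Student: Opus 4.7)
} The plan is to combine a space-optimal one-shot $F_0$ estimator in the spirit of the Kane--Nelson--Woodruff sketch with an argument that exploits the monotone structure of $F_0$ on an insertion-only stream. The base estimator uses a single shared hash function $h : [n] \to [K]$ (with $K = \poly(n)$) drawn from a sufficiently independent family, and for each geometric ``subsampling level'' $\ell = 0, 1, \dots, O(\log n)$ implicitly tracks the set $S_\ell^{(t)} = \{ i : f_i^{(t)} \neq 0, \text{ and } h(i) \text{ has at least } \ell \text{ trailing zeros} \}$. At query time $t$, one picks the level $\ell^*(t)$ at which $|S_{\ell^*}^{(t)}|$ lies in a target window of size $\Theta(1/\eps^2)$, and outputs $2^{\ell^*} \cdot |S_{\ell^*}^{(t)}|$ as the estimate. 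Because each $|S_\ell^{(t)}|$ is a sum of $0/1$ indicators with expectation $\|f^{(t)}\|_0 / 2^\ell$, pointwise correctness reduces to a standard Chernoff-type concentration bound for the relevant level.

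To promote pointwise correctness into the strong-tracking guarantee, first observe that in an insertion-only stream, $t \mapsto \|f^{(t)}\|_0$ is monotone non-decreasing and bounded by $n$, so there are at most $\lambda = O(\eps^{-1} \log n)$ ``epoch boundaries'' $t_1 < t_2 < \dots < t_\lambda$ at which $\|f^{(t)}\|_0$ crosses a new multiplicative $(1+\eps)$ threshold. Within each epoch the true $F_0$ varies by a factor of at most $(1+\eps)$, so a $(1 \pm O(\eps))$ estimate at every epoch boundary yields $(1 \pm O(\eps))$ strong tracking throughout the stream (after rescaling $\eps$ by a constant). Therefore it suffices to certify correctness simultaneously at these $\lambda$ canonical times. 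A naive union bound over the $\lambda$ epochs would degrade the failure probability budget to $\delta / \lambda$ and multiply the counter precision by $\log \lambda = O(\log \eps^{-1} + \log \log n)$, which is exactly the $O(\log \log n)$ factor we wish to retain but we must avoid any $\log n$ factor (which naive union bounding would incur if we bounded over all $m$ steps).

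The key step, and the main technical obstacle, is to achieve simultaneous concentration at all $\lambda$ epoch boundaries without paying more than the claimed $O(\epsilon^{-2}(\log \log n + \log \delta^{-1}))$ bits. The essential idea is that the sketch at every relevant level $\ell$ is driven by the \emph{same} hash function, so one may apply a chaining/Bernstein-type argument directly to the monotone random process $t \mapsto |S_\ell^{(t)}|$ rather than union bounding over independent snapshots. Concretely, at each level $\ell$ the sequence $(|S_\ell^{(t_j)}|)_{j=1}^{\lambda}$ is a monotone sum of Bernoulli indicators whose successive increments are independent across disjoint time intervals, so one can use a peeling argument across levels: for levels where the expectation is $\gg \eps^{-2}$, multiplicative Chernoff plus a Doob-type supremum bound yields simultaneous $(1 \pm \eps)$ concentration with failure probability $\exp(-\Omega(\eps^{2} \cdot \E[|S_\ell^{(t_\lambda)}|]))$, which can be made $\leq \delta / O(\log n)$ using only counters of width $\log(1/\eps) + \log \log n + \log \delta^{-1}$. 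The total storage is $O(\log n)$ for the hash function, plus $O(\eps^{-2})$ counters of the above width, giving the claimed bound $O(\eps^{-2}(\log \log n + \log \delta^{-1}) + \log n)$. The delicate part is calibrating the family of levels so that for every time $t$ some level $\ell^*(t)$ concentrates tightly, while avoiding a $\log n$ overhead: this is handled by using only the $O(\log n)$ levels where a concentration failure is actually possible and by applying the Doob/chaining bound per level before union bounding over levels, rather than over times.
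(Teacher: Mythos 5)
This lemma is not proved in the paper: it is quoted verbatim as the strong-tracking result of \cite{blasiok2018optimal}, so there is no in-paper proof to compare your attempt against. Your reconstruction does capture the right skeleton — a single shared hash with geometric subsampling levels, the monotonicity of $F_0$ on insertion-only streams, and the reduction to $\lambda = O(\eps^{-1}\log n)$ epoch boundaries where $F_0$ crosses a $(1+\eps)$-threshold — and these are indeed the right ingredients. But the step you yourself flag as ``the key step, and the main technical obstacle'' is precisely where the proof must actually happen, and your sketch asserts rather than supplies it. Concretely, ``multiplicative Chernoff plus a Doob-type supremum bound'' is not available off the shelf here: the process $t \mapsto |S_\ell^{(t)}|$ is not a martingale with respect to the natural filtration, its increments are not conditionally independent of its past once one fixes the hash function, and the different levels $\ell$ are driven by the \emph{same} hash and are therefore strongly correlated across levels. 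Producing a simultaneous $(1\pm\eps)$ guarantee over all epoch boundaries and all relevant levels without paying a $\log n$ overhead is exactly the chaining argument that constitutes the bulk of \cite{blasiok2018optimal}, and it cannot be dispatched by invoking Doob and ``peeling'' without carrying out the metric-entropy bookkeeping for the level-coupled process and the counter-rounding error. Separately, even the space accounting as written overshoots: counters of width $\log(1/\eps) + \log\log n + \log\delta^{-1}$, times $O(\eps^{-2})$ of them, yields an extra $\eps^{-2}\log(1/\eps)$ term that is not present in the claimed $O\big(\eps^{-2}(\log\log n + \log\delta^{-1}) + \log n\big)$ bound (and $\log\eps^{-1}$ is not in general dominated by $\log\log n$), so even granting the concentration step the arithmetic as sketched does not reach the stated bound. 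In short, the approach points in the right direction, but the heart of the argument is a genuine gap.
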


\subsection{Roadmap}
In Section \ref{sec:framework}, we introduce our two general techniques for transforming static streaming algorithms into adversarially robust algorithms. In Section \ref{sec:Fp}, we give our results on estimation of $F_p$ moments, and in Section \ref{sec:F0} we give our algorithms for adversarially robust distinct elements estimation. Next, in Section \ref{sec:HH}, we introduce our robust $L_2$ heavy hitters algorithm, and in Section \ref{sec:Entropy} we give our entropy estimation algorithm. In Section \ref{sec:BoundedDel}, we provide our algorithms for $F_p$ moment estimation in the bounded deletion model. In Section \ref{sec:AMS}, we give our adversarial attack on the AMS sketch. Finally, in Section \ref{sec:crypto}, we give our algorithm for optimal space distinct elements estimation under cryptographic assumptions.

\section{Tools for Robustness}
\label{sec:framework}
In this section, we establish two methods, \emph{sketch switching} and \emph{computation paths}, allowing one to convert an approximation algorithm for any sufficiently well-behaved streaming problem to an adversarially robust one for the same problem. 
The central definition of a \emph{flip number}, bounds the number of major (multiplicative) changes in the algorithm's output along the stream.
As we shall see, a small flip number allows for efficient transformation of non-robust algorithms into robust ones.\footnote{The notion of flip number we define here also plays a central role in subsequent works (\cite{HassidimKMMS20}, \cite{WoodruffZ20}); for example, the main contribution of the former is a generic robustification technique with an improved (square root type instead of linear) dependence in the flip number. The latter improves the $\poly(1/\epsilon)$ dependence on the flip number.}

\subsection{Flip Number}
\begin{definition}[flip number]
\label{def:flip}
Let $\eps \geq 0$ and $m \in \N$, and let $\bar y = (y_0, y_1, \ldots, y_m)$ be any sequence of real numbers. The \emph{$\eps$-flip number} $\lambda_\eps(\bar y)$ of $\bar y$ is the maximum $k \in \N$ for which there exist $0 \leq i_1 < \ldots < i_k \leq m$ so that $y_{i_{j-1}} \notin (1 \pm \eps) y_{i_{j}}$ for every $j=2,3,\ldots,k$. 

Fix a function $g \colon \R^n \to \R$ and a class $\mathcal{C} \subseteq ([n] \times \Z)^m$ of stream updates. The $(\eps, m)$-flip number $\lambda_{\eps, m}(g)$ of $g$ over $\mathcal{C}$ is the maximum, over all sequences $((a_1, \Delta_1), \ldots, (a_m, \Delta_m)) \in \mathcal{C}$, of the $\eps$-flip number of the sequence $\bar{y} = (y_0,y_1, \ldots, y_m)$ defined by $y_i = g(f^{(i)})$ for any $0 \leq i \leq m$, where as usual $f^{(i)}$ is the frequency vector after stream updates $(a_1, \Delta_1), \ldots, (a_i, \Delta_i)$ (and $f^{(0)}$ is the $n$-dimensional zeros vector).
\end{definition}
The class $\mathcal{C}$ may represent, for instance, the subset of all insertion-only streams, or bounded-deletion streams. For the rest of this section, we shall assume $\cal C$ to be fixed, and consider the flip number of $g$ with respect to this choice of $\cal C$.\footnote{A somewhat reminiscent definition, of an \emph{unvarying algorithm}, was studied by \cite{DworkNPR10} (see Definition 5.2 there) in the context of differential privacy. While their definition also refers to a situation where the output undergoes major changes only a few times, both the motivation and the precise technical details of their definition are different from ours.}

Note that the flip number is clearly monotone in $\eps$: namely $\lambda_{\eps',m}(g) \geq \lambda_{\eps,m}(g)$ if $\eps' < \eps$. 
One useful property of the flip number is that it is nicely preserved under approximations. As we show, this can be used to effectively construct approximating sequences whose \emph{$0$-flip number} is bounded as a function of the $\eps$-flip number of the original sequence. This is summarized in the following lemma.
\begin{lemma}
\label{lemma:flip_number_for_sketch_switching}
Fix $0 < \eps < 1$. Suppose that $\bar u = (u_0, \ldots, u_m)$, $\bar v = (v_0, \ldots, v_m)$, $\bar w = (w_0, \ldots, w_m)$ are three sequences of real numbers, satisfying the following: 
\begin{itemize}
    \item For any $0 \leq i \leq m$, $v_i = (1 \pm \eps/8) u_i$.
    \item $w_0 = v_0$, and for any $i > 0$, if $w_{i-1} = (1 \pm \eps/2) v_i$ then $w_i = w_{i-1}$, and otherwise $w_i = v_i$.
\end{itemize}
Then $w_i = (1 \pm \eps) u_i$ for any $0 \leq i \leq m$, and moreover, $\lambda_0(\bar w) \leq \lambda_{\eps/8} (\bar u)$.

In particular, if (in the language of Definition \ref{def:flip}) $u_0 = g(f^{(0)}), u_1 = g(f^{(1)}),\ldots, u_m=g(f^{(m)})$ for a sequence of updates $((a_1, \Delta_1), \ldots, (a_m, \Delta_m)) \in \mathcal{C}$, then $\lambda_0(\bar w) \leq \lambda_{\eps/8, m}(g)$.
\end{lemma}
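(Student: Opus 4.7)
My plan is to handle the two claims separately: the pointwise approximation by case analysis on the update rule, and the flip-number bound by exhibiting a witness subsequence of length $\lambda_0(\bar w)$ inside $\bar u$.

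For the first claim, I would split on which branch of the update rule defines $w_i$. In Case A ($w_i = v_i$), we have $w_i = (1 \pm \eps/8) u_i \subseteq (1 \pm \eps) u_i$ immediately. In Case B ($w_i = w_{i-1}$ with $w_{i-1} \in (1 \pm \eps/2) v_i$), chaining with $v_i = (1 \pm \eps/8) u_i$ gives $w_i \in (1 \pm \eps/2)(1 \pm \eps/8) u_i$, and elementary arithmetic confirms $(1 + \eps/2)(1 + \eps/8) \leq 1 + \eps$ and $(1 - \eps/2)(1 - \eps/8) \geq 1 - \eps$ for $\eps \leq 1$, closing the case.

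For the second claim, let $i_1 = 0 < i_2 < \ldots < i_\ell$ be the starts of the maximal constant runs of $\bar w$; these form a valid $\lambda_0$-witness (consecutive values differ by construction), and no witness can contain two indices in the same run, so $\ell = \lambda_0(\bar w)$. I will show these same indices witness $\lambda_{\eps/8}(\bar u) \geq \ell$. For each $j \geq 2$, since $w_{i_j} \neq w_{i_j - 1}$, Case A must have fired at step $i_j$, giving $w_{i_j} = v_{i_j}$ and $w_{i_j - 1} \notin (1 \pm \eps/2) v_{i_j}$. Constancy of $\bar w$ within runs together with a one-line induction from $w_0 = v_0$ (where Case A at change points gives $w_{i_{j-1}} = v_{i_{j-1}}$) yields $w_{i_j - 1} = w_{i_{j-1}} = v_{i_{j-1}}$, and therefore $v_{i_{j-1}} \notin (1 \pm \eps/2) v_{i_j}$.

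The heart of the argument is transferring this non-approximation back to $\bar u$. I would prove the contrapositive: if $u_{i_{j-1}} \in (1 \pm \eps/8) u_{i_j}$, then combining with $v_{i_{j-1}} = (1 \pm \eps/8) u_{i_{j-1}}$ and $u_{i_j} = v_{i_j}/(1 \pm \eps/8)$ bounds $v_{i_{j-1}}/v_{i_j}$ inside $[(1-\eps/8)^2/(1+\eps/8),\ (1+\eps/8)^2/(1-\eps/8)]$, and a direct numerical check shows this interval lies inside $[1-\eps/2,\, 1+\eps/2]$ for $\eps \leq 1$. Hence $u_{i_{j-1}} \notin (1 \pm \eps/8) u_{i_j}$ for every $j \geq 2$, so the chosen $i_j$'s witness $\lambda_{\eps/8}(\bar u) \geq \ell = \lambda_0(\bar w)$. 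The ``in particular'' clause is then immediate from $\lambda_{\eps/8}(\bar u) \leq \lambda_{\eps/8, m}(g)$ by definition. The only obstacles are routine bookkeeping: maintaining the induction identity $w_{i_{j-1}} = v_{i_{j-1}}$, and verifying that the composed $(1 \pm \eps/8)^2/(1 \mp \eps/8)$ factors stay inside $1 \pm \eps/2$ with a bit of slack.
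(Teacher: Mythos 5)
Your proposal is correct and tracks the paper's own proof essentially step by step: the same case analysis for the pointwise bound, the same choice of witness indices (the starts of the maximal constant runs of $\bar w$, equivalently the change points together with $0$), the same identity $w_{i_j-1}=w_{i_{j-1}}=v_{i_{j-1}}$ leading to $v_{i_{j-1}}\notin(1\pm\eps/2)v_{i_j}$, and the same $(1\pm\eps/8)$ ratio arithmetic. The only cosmetic difference is that you phrase the key transfer as a contrapositive (``if $u_{i_{j-1}}\in(1\pm\eps/8)u_{i_j}$ then $v_{i_{j-1}}\in(1\pm\eps/2)v_{i_j}$'') whereas the paper argues the implication directly; the two are logically equivalent and the numerical checks match.
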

\begin{proof}
The first statement, that $w_i = (1 \pm \eps) u_i$ for any $i$, follows immediately since $v_i = (1 \pm \eps/8)u_i$ and $w_i = (1 \pm \eps/2)v_i$ and since $\eps < 1$. The third statement follows by definition from the second one. It thus remains to prove that $\lambda_0(\bar w) \leq \lambda_{\eps/8} (\bar u)$.

Let $i_1 = 0$ and let $i_2, i_3, \ldots, i_k$ be the collection of all values $i \in [m]$ for which $w_{i-1} \neq w_{i}$. Note that $k = \lambda_0(\bar w)$ and that $v_{i_{j-1}} = w_{i_{j-1}} = w_{i_{j-1} + 1} = \cdots = w_{i_{j} - 1} \neq v_{i_j}$ for any $j=2,\ldots,k$. 
We now claim that for every $j$ in this range, $u_{i_{j-1}} \notin (1 \pm \eps/8) u_{i_j}$. 
This would show that $k \leq \lambda_{\eps/8}(\bar u)$ and conclude the proof.

Indeed, fixing any such $j$, we either have $v_{i_{j-1}} > (1+\eps/2)v_{i_j}$, or $v_{i_{j-1}} < (1-\eps/2)v_{i_j}$. In the first case (assuming $u_{i_j} \neq 0$, as the case $u_{i_{j}} = 0$ is trivial),
$$
\frac{u_{i_{j-1}}}{u_{i_j}} \geq \frac{v_{i_{j-1}}/(1+\frac{\eps}{8})}{v_{i_{j}}/(1-\frac{\eps}{8})} \geq \left(1+\frac{\eps}{2}\right) \cdot \frac{1-\frac{\eps}{8}}{1+\frac{\eps}{8}} > 1+\frac{\eps}{8}\,.
$$
In the second case, an analogous computation gives $u_{i_{j-1}}/u_{i_j} < 1-\eps/8$.
\end{proof}

Note that the flip number of a function $g$ critically depends on the model in which we work, as the maximum is taken over all sequences of \emph{possible stream updates}; for insertion-only streams, the set of all such sequences is more limited than in the general turnstile model, and correspondingly many streaming problems have much smaller flip number when restricted to the insertion-only model.
We now give an example of a class of functions with bounded flip number. 

\begin{proposition}\label{prop:monfpflip}
Let $g: \R^n \to \R$ be any monotone function, meaning that $g(x) \geq g(y)$ if $x_i \geq y_i$ for each $i \in [n]$. Assume further that 
$g(x) \geq T^{-1}$ for all $x > 0$, and $g(M \cdot \vec{1}) \leq T$, where $M$ is a bound on the entries of the frequency vector and $\vec{1}$ is the all $1$'s vector. Then the flip number of $g$ 
in the insertion-only streaming model is $\lambda_{\eps,m}(g) = O(\frac{1}{\eps}\log T)$.
\end{proposition}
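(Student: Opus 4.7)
The plan is to exploit the fact that, in the insertion-only model, the frequency vector evolves monotonically in each coordinate, and combine this with the coordinate-wise monotonicity of $g$ to force the sequence $y_t = g(f^{(t)})$ to be non-decreasing. Once the sequence is monotone and bounded in a range of multiplicative width $T^2$, any $\eps$-flip subsequence must grow by a factor of at least $(1+\eps)$ at each consecutive pair, so the length is bounded by a geometric-sum argument.

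More concretely, first I would note that in the insertion-only model we have $f^{(0)} \leq f^{(1)} \leq \dots \leq f^{(m)}$ coordinate-wise (every update is nonnegative). Applying the assumed monotonicity of $g$, this yields $y_0 \leq y_1 \leq \dots \leq y_m$. Because every entry of $f^{(t)}$ is bounded by $M$ in absolute value, monotonicity of $g$ gives the upper bound $y_t \leq g(M \cdot \vec{1}) \leq T$ for all $t$. Moreover, once the stream has received at least one insertion, $f^{(t)}$ has a positive coordinate, so the hypothesis $g(x) \geq T^{-1}$ for $x > 0$ gives $y_t \in [T^{-1}, T]$ from that point on.

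Next I would fix any sequence $0 \leq i_1 < i_2 < \dots < i_k \leq m$ witnessing the $\eps$-flip number. Because $y_t$ is non-decreasing, the condition $y_{i_{j-1}} \notin (1 \pm \eps)\, y_{i_j}$ can only be realized as $y_{i_j} > y_{i_{j-1}}/(1-\eps) \geq (1+\eps)\, y_{i_{j-1}}$. Hence $y_{i_1}, y_{i_2}, \dots, y_{i_k}$ is a strictly increasing sequence in which each term exceeds the previous by a factor of at least $1+\eps$. Excluding at most one initial flip (needed only if $y_{i_1} = 0$, i.e.\ the stream is still empty at time $i_1$), all remaining terms lie in $[T^{-1}, T]$, so there can be at most
\[
\log_{1+\eps}(T / T^{-1}) \;=\; \frac{2 \log T}{\log(1+\eps)} \;=\; O\!\left(\tfrac{\log T}{\eps}\right)
\]
such multiplicative jumps. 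Adding the single possible flip out of the zero vector, we conclude $\lambda_{\eps,m}(g) = O(\eps^{-1} \log T)$, as claimed.

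There is no real obstacle here; the argument is essentially a monotonicity-plus-pigeonhole observation. The only mildly delicate point to handle cleanly is the boundary case $f^{(t)} = \vec{0}$, where the hypothesis $g(x) \geq T^{-1}$ does not apply; this is absorbed by the at-most-one extra flip that transitions out of the empty-stream regime.
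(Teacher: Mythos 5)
Your proof is correct and follows essentially the same route as the paper's: exploit coordinate-wise monotonicity of $f^{(t)}$ in the insertion-only model together with monotonicity of $g$ to make $y_t = g(f^{(t)})$ non-decreasing, set aside the single possible flip out of the empty-stream state $f^{(0)} = \vec{0}$, and bound the remaining flips by counting multiplicative $(1+\eps)$-jumps (equivalently, powers of $1/(1-\eps)$) in the range $[T^{-1}, T]$. The only cosmetic difference is that the paper phrases the geometric count as a pigeonhole argument over powers of $1/(1-\eps)$ rather than a $\log_{1+\eps}$ calculation, but the content is identical.
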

\begin{proof}
To see this, note that $g(f^{(1)}) \geq T^{-1}$, and $g(f^{(m)}) \leq g(\vec{1} \cdot M) \leq T$. Since the stream has only positive updates, $g(f^{(0)}) \leq g(f^{(1)}) \leq \dots \leq g(f^{(m)})$. 
Let $1 \leq y_1< y_2 < \dots< y_k \in [m]$ be any maximal increasing sequence of time steps such that $g(f^{(y_i)}) < (1 - \eps)g(f^{(y_{i+1})})$ for each $i \in [k-1]$. Note that restricting to $y_1 \geq 1$ only excludes the $0$-th step, so the flip number is at most $k+1$.  Then the value of $g$ increases by a $\frac{1}{1- \eps}$ factor after each step $y_i$. Since there are at most $O(\frac{1}{\eps}\log T)$ powers of $\frac{1}{1- \eps}$ between $T^{-1}$ and $T$,  by the pigeonhole principle if $k > \frac{C}{\eps}\log(T)$ for a sufficiently large constant $C$, then at least two values must satisfy $(\frac{1}{1- \eps})^{j} \leq g(f^{(y_i)}) \leq g(f^{(y_{i+1})}) \leq (\frac{1}{1- \eps})^{j+1}$ for some $j$, which is a contradiction. 
\end{proof}
Note that a special case of the above are the $F_p$ moments of a data stream. Recall here $\|x\|_0 = |\{ i : x_i \neq 0 \} |$ is the number of non-zero elements in a vector $x$. For what follows, recall that the stream length is $m=O(\poly(n))$.

\begin{corollary}\label{cor:fpflip}
Let $p \geq 0$. The $(\eps,m)$-flip number of $\|x\|_p^p$ in the insertion-only streaming model is $\lambda_{\eps,m}(\|\cdot \|_p^p) = O(\frac{1}{\eps}\log n)$ for $p \leq 2$, and $\lambda_{\eps,m}(\|\cdot \|_p^p) = O(\frac{p}{\eps}\log n)$ for $p>2$. For $p=0$, we also have $\lambda_{\eps,m}(\|\cdot \|_0) = O(\frac{1}{\eps}\log m)$.
\end{corollary}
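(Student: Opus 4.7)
The plan is to invoke Proposition \ref{prop:monfpflip} directly, treating each regime of $p$ by choosing an appropriate bound $T$. The first step is to verify that $g(x) = \|x\|_p^p$ satisfies the monotonicity hypothesis in the insertion-only model. This is immediate: along an insertion-only stream all updates $\Delta_t$ are positive, so the coordinates of $f^{(t)}$ are nonnegative and nondecreasing in $t$; meanwhile each summand $|x_i|^p$ (interpreted as $\mathbf{1}[x_i \neq 0]$ when $p = 0$) is nondecreasing in $|x_i| \geq 0$. So it only remains to identify a suitable $T$ bounding the value of $g$ from above at $M \cdot \vec{1}$ and bounding it from below on any nonzero frequency vector produced by the stream.

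For $p > 0$, since the updates $\Delta_t$ are integers, every nonzero frequency vector $f$ reachable in the stream has some coordinate $|f_i| \geq 1$, hence $\|f\|_p^p \geq 1$. At the other extreme, $\|M \cdot \vec{1}\|_p^p = n M^p$, so the choice $T = n M^p$ satisfies both hypotheses of Proposition \ref{prop:monfpflip}. This gives $\log T = \log n + p \log M$, and the standing assumption $\log(mM) = O(\log n)$ yields $\log M = O(\log n)$. Hence $\log T = O(\log n)$ whenever $p \leq 2$ (with $p$ treated as a constant absorbed in the $O$), and $\log T = O(p \log n)$ for $p > 2$. Plugging into the proposition gives the two stated bounds for $p > 0$.

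For $p = 0$, the minimum positive value of $\|f\|_0$ is still $1$, and $\|f^{(t)}\|_0 \leq \min(n, m)$, since each distinct element must be introduced by at least one update and there are at most $m$ updates in total. Taking $T = \min(n, m)$ gives simultaneously $\log T \leq \log n$ and $\log T \leq \log m$, so Proposition \ref{prop:monfpflip} yields the $p=0$ bound $O(\frac{1}{\eps} \log n)$ already implied by the $p \leq 2$ case as well as the sharper $O(\frac{1}{\eps} \log m)$ bound separately recorded in the corollary.

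There is no real obstacle in this argument; the only mildly subtle point is the observation for $p=0$ that the number of distinct elements is bounded by the number of updates, which allows the $\log m$ bound to replace $\log n$. Everything else is a direct, mechanical application of the monotone flip-number estimate of Proposition \ref{prop:monfpflip}.
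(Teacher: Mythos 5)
Your proposal is correct and follows essentially the same route as the paper: both invoke Proposition~\ref{prop:monfpflip} with the lower bound $\|f\|_p^p \geq 1$ for nonzero integer vectors and the upper bound $\|M\cdot\vec{1}\|_p^p = nM^p$, yielding $\log T = O(\max\{p,1\}\log n)$, and both obtain the sharper $p=0$ bound from the observation that $\|f^{(t)}\|_0$ cannot exceed the number $m$ of updates. The only cosmetic difference is that you package the $p=0$ case as a second application of the proposition with $T=\min(n,m)$ rather than stating the monotone-increment observation directly, which is a fine and equivalent presentation.
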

\begin{proof}
We have $\|\vec{0}\|_p^p = 0$, $\|z\|_p^p \geq 1$ for any non-zero $z \in \Z$, and $\|f^{(m)}\|_p^p \leq M^pn \leq n^{1+cp}$ for some constant $c$, where the second to last inequality holds because $\|f\|_\infty \leq M$ for some $M = \poly(n)$ is assumed at all points in the streaming model. The result then follows from applying Proposition \ref{prop:monfpflip} with $T = n^{c \cdot \max\{p,1\}}$. The last statement for $p=0$ follows since $\|f^{(m)}\|_0$ either remains unchanged or increases by one after any single insertion.
\end{proof}
Another special case of Proposition \ref{prop:monfpflip} concerns the \textit{cascaded norms} of insertion-only data streams \cite{jayram2009data}. Here, the frequency vector $f$ is replaced with a matrix $A \in \Z^{n \times d}$, which receives coordinate-wise updates in the same fashion, and the $(p,k)$ cascaded norm of $A$ is given by $\|A\|_{(p,k)} =(\sum_{i}(\sum_j |A_{i,j}|^{k} )^{p/k} )^{1/p} $. In other words, $\|A\|_{(p,k)}$ is the result of first taking the $L_k$ norm of the rows of $A$, and then taking the $L_p$ norm of the result. Proposition \ref{prop:monfpflip} similarly holds with $T = \poly(n)$ in the insertion-only model, and therefore the black-box reduction techniques introduced in the following sections are also applicable to these norms (using e.g., the cascaded algorithms of \cite{jayram2009data}). 

Having a small flip number is very useful for robustness, as our next two robustification techniques demonstrate.

\subsection{The Sketch Switching Technique}

Our first technique is called \emph{sketch switching}, and is described in Algorithm \ref{alg:sample}. The technique maintains multiple instances of a static strong tracking algorithm, where at any given time only one of the instances is ``active''. The idea is to change the current output of the algorithm very rarely. Specifically, as long as the current output is a good enough multiplicative approximation of the estimate of the active instance, the estimate we give to the adversary does not change, and the current instance remains active. As soon as this approximation guarantee is not satisfied, we update the output given to the adversary, deactivate our current instance, and activate the next one in line. By carefully exposing the randomness of our multiple instances, we show that the strong tracking guarantee (which a priori holds only in the static setting) can be carried into the robust setting. 
By Lemma \ref{lemma:flip_number_for_sketch_switching}, the required number of instances, which corresponds to the $0$-flip number of the outputs provided to the adversary, is controlled by the $(\Theta(\eps), m)$-flip number of the problem.

\begin{algorithm}[!ht]
	$\lambda \leftarrow \lambda_{\eps/8,m}(g)$ \\
	Initialize independent instances $A_1, \ldots, A_\lambda$ of $(\frac{\eps}{8}, \frac{\delta}{\lambda})$-strong $g$-tracking algorithm \\
	$\rho \leftarrow 1$\\
	$\tilde{g} \leftarrow g(\vec{0})$\\ 
	\While{new stream update $(a_k,\Delta_k)$}{
	Insert update $(a_k,\Delta_k)$ into each algorithm $A_1,\dots,A_\lambda$ \\
	$y \leftarrow$ current output of $A_\rho$ \\
	\uIf{$\tilde{g} \notin (1 \pm \eps/2)y$}{
	$\tilde{g} \leftarrow y$ \\ 
	$\rho \leftarrow \rho+1$
	}
Output estimate $\tilde{g}$
	}
\caption{Adversarially Robust $g$-estimation by Sketch Switching}
\label{alg:sample}
\end{algorithm}

\begin{lemma}[Sketch Switching]\label{lem:sketchswitch}
Fix any function $g: \R^n \to \R$ and let $A$ be a streaming algorithm that for any $0 < \eps < 1$ and $\delta > 0$ uses space $L(\eps,\delta)$, and satisfies the $(\eps,\delta)$-strong $g$-tracking property on the frequency vectors $f^{(1)}, \ldots, f^{(m)}$ of any particular fixed stream. Then Algorithm \ref{alg:sample} is an adversarially robust algorithm for $(1+\eps)$-approximating $g(f^{(t)})$ at every step $t \in [m]$ with success probability $1-\delta$, whose space is $O\left(L(\eps/8,\delta/\lambda) \cdot \lambda\right)$, where $\lambda = \lambda_{\eps/8, m}(g)$.
\end{lemma}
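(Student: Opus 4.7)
The plan is to establish two claims: a space bound which is essentially immediate from inspection of Algorithm~\ref{alg:sample}, and an adversarial correctness guarantee which relies on a coupling argument showing that each instance $A_\rho$ effectively sees a stream that is \emph{static} with respect to its own internal randomness. The space bound is straightforward: Algorithm~\ref{alg:sample} maintains $\lambda$ independent copies of $A$ each using $L(\eps/8,\delta/\lambda)$ bits, together with the index $\rho$ and the current answer $\tilde g$, for a total of $O(L(\eps/8,\delta/\lambda)\cdot\lambda)$ bits.

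For correctness, the heart of the argument is a per-instance coupling. Fix $\rho\in[\lambda]$ and consider the following \emph{phantom execution}: run the true interaction between Algorithm~\ref{alg:sample} and the adversary up to the step $t_{\rho-1}$ at which phase $\rho$ begins, and thereafter hold the public output $\tilde g$ fixed at its value at that time (so the switch away from $\rho$ never occurs), while continuing to feed the adversary's updates into every $A_j$ as usual. Let $S^{(\rho)}$ denote the sequence of updates produced in this phantom execution. The key point is that, during the true phase $\rho$, the output $\tilde g$ actually is constant (it was set to $A_{\rho-1}$'s estimate at the moment of the switch into $\rho$), so the adversary's view, and therefore its next move, is identical in the real and phantom executions at every step of phase $\rho$. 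In particular, the updates $A_\rho$ processes during phase $\rho$ coincide in the two executions. Moreover, the phantom execution never exposes any estimate of $A_\rho$ to the adversary, so $S^{(\rho)}$ is a random variable independent of $A_\rho$'s internal coins. The static $(\eps/8,\delta/\lambda)$-strong $g$-tracking guarantee then applies to $A_\rho$ on $S^{(\rho)}$: with probability $\geq 1-\delta/\lambda$, the output $y_t$ of $A_\rho$ satisfies $y_t=(1\pm\eps/8)g(f^{(t)})$ at every step of $S^{(\rho)}$, and hence at every step of the true phase $\rho$.

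Union bounding over $\rho=1,\ldots,\lambda$ yields probability $\geq 1-\delta$ that, in the true execution, the currently active instance's estimate satisfies $y_t=(1\pm\eps/8)g(f^{(t)})$ at every step $t$. Condition on this joint event. The switching rule ensures that at step $t$ either $\tilde g$ was not updated and $\tilde g\in(1\pm\eps/2)y_t$ by the failed branching condition, or $\tilde g\leftarrow y_t$ was just assigned; in either case one obtains $\tilde g=(1\pm\eps)g(f^{(t)})$, giving the approximation guarantee. To verify that $\lambda$ instances always suffice, apply Lemma~\ref{lemma:flip_number_for_sketch_switching} with $u_i=g(f^{(i)})$, $v_i=y_i$, and $w_i$ equal to the value of $\tilde g$ at step $i$: the hypotheses are exactly what we have already verified, and the conclusion $\lambda_0(\bar w)\leq\lambda_{\eps/8,m}(g)=\lambda$ says that at most $\lambda$ switches occur, so $\rho$ never exceeds $\lambda$.

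The main subtlety, and the place I would spend most care in the write-up, is in formally justifying the coupling: one needs to observe that the stopping time $t_{\rho-1}$ (and more generally the entire history through the end of phase $\rho-1$) is measurable with respect to the coins of $A_1,\ldots,A_{\rho-1}$ and the adversary only, so that $S^{(\rho)}$ is indeed a legitimate ``static'' stream from $A_\rho$'s viewpoint; this requires an implicit induction on $\rho$ along the chain of phases, since the definition of $S^{(\rho)}$ presupposes that phase $\rho-1$ ended correctly, which one ensures by processing the instances in order and invoking the good event for smaller indices first.
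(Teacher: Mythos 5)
Your proposal is correct and follows essentially the same route as the paper's proof: the "phantom execution" you define for instance $A_\rho$ is exactly the fixed stream the paper constructs by freezing the public output at the start of phase $\rho$, the independence of $S^{(\rho)}$ from $A_\rho$'s coins is the same observation the paper makes by "revealing randomness gradually," and both arguments conclude via a union bound followed by Lemma~\ref{lemma:flip_number_for_sketch_switching}. The only cosmetic difference is that the paper first invokes an averaging argument to assume the adversary is deterministic, whereas you handle a possibly randomized adversary directly by conditioning on coins other than $A_\rho$'s; both are fine, and your closing caveat about measurability of $t_{\rho-1}$ is slightly over-cautious since the phantom stream is well-defined (and independent of $A_\rho$'s coins) regardless of whether the earlier tracking events succeeded.
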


The proof is by induction and we start by giving its main intuition. By Yao's minimax principle, one may assume the adversary is deterministic (but adaptive). 
Consider the point in time $t_\rho$ where the output $y_\rho$ of the $\rho$-th instance, $A_\rho$, is first sent to the adversary. From this point on, the output displayed to the adversary is $y_\rho$, whereas the next instance $A_{\rho + 1}$ continues to run and update its output internally (without displaying it to the adversary). 
Let $t_{\rho+1}$ be the first point in time where the (internal) output of $A_{\rho+1}$ substantially differs from $y_\rho$; denote this output by $y_{\rho+1}$, and set the value displayed to the adversary to $y_{\rho+1}$. 
The crucial observation is that we only need to apply the static tracking guarantee for a single specific input sequence in order to ensure that $y_{\rho}$ is a good approximation of our function $f$ at any time between $t_\rho$ and $t_{\rho+1}-1$.
The said input sequence consists of all inputs provided by the adversary until time $t_\rho$, concatenated with the sequence of inputs that the adversary would send if it were to see the fixed output $y_\rho$ for $m-t_\rho$ times afterward.

Now, how many times will the active instance change during this process? Our choice of parameters in the algorithm ensures that each such change can happen only if the value of the function $f$ itself has changed by some $1 \pm \eps/8$. Thus, the number of instances required is bounded by $\lambda_{\eps/8,m}(f)$.

\begin{proof}
Note that for a fixed randomized algorithm $\mathcal{A}$ we can assume the adversary against $\mathcal{A}$ is deterministic without loss of generality (in our case, $\cal A$ refers to Algorithm \ref{alg:sample}). This is because given a randomized adversary and algorithm, if the adversary succeeds with probability greater than $\delta$ in fooling the algorithm, then by a simple averaging argument, there must exist a fixing of the random bits of the adversary which fools $\mathcal{A}$ with probability greater than $\delta$ over the coin flips of $\mathcal{A}$. Note also here that conditioned on a fixing of the randomness for both the algorithm and adversary, the entire stream and behavior of both parties is fixed.

We thus start by fixing such a string of randomness for the adversary, which makes it deterministic. As a result, suppose that $y_i$ is the output of the streaming algorithm
in step $i$. Then given $y_1,y_2,\dots,y_k$ and the stream updates $(a_1,\Delta_1),\dots,(a_k,\Delta_k)$ so far, the next stream update $(a_{k+1},\Delta_{k+1})$ is deterministically fixed. We stress that the randomness of the algorithm is not fixed at this point; we will gradually reveal it along the proof. 

Let $\lambda = \lambda_{\eps/8, m}(g)$ and let $A_1, \ldots, A_{\lambda}$ be the $\lambda$ independent instances of an $(\eps/8, \delta/\lambda)$-strong tracking algorithm for $g$. Since $\delta_0 = \delta/\lambda$, later on we will be able to union bound over the assumption that for all $\rho \in [\lambda]$, $A_i$ satisfies strong tracking on some fixed stream (to be revealed along the proof); the stream corresponding to $A_\rho$ will generally be different than that corresponding to $\rho'$ for $\rho \neq \rho'$. 

First, let us fix the randomness of the first instance, $A_1$. Let $u_1^1,u_2^1,\dots,u_m^1$ be the updates $u^1_j = (a_j, \Delta_j)$ that the adversary would make if $\cal A$ were to output $y_0 = g(\vec{0})$ at every time step, and let $f^{(t),1}$ be the stream vector after updates $u_1^1,\dots,u_t^1$. Let $A_1(t)$ be the output of algorithm $A_1$ at time $t$ of the stream $u^1_1,u^1_2,\dots,u_t^1$. Let $t_1 \in [m]$ be the first time step such that $y_0 \notin (1 \pm \eps/2) A_1(t_1)$, if exists (if not we can set, say, $t_1 = m+1$).  At time $t = t_1$, we change our output to $y_1 = A_1(t_1)$. Assuming that $A_{1}$ satisfies strong tracking for $g$ with approximation parameter $\eps/8$ with respect to the fixed stream of updates $u_1^1, \ldots, u_m^1$ (which holds with probability at least $1-\delta/\lambda$), we know that $A_1(t) = (1 \pm \eps/8) g(f^{(t)})$ for each $t < t_1$ and that $y_0 = (1 \pm \eps/2) A_1(t)$. Thus, by the first part of Lemma \ref{lemma:flip_number_for_sketch_switching}, $y_0 = (1 \pm \eps) g(f^{(t)})$ for any $0 \leq t < t_1$.
Furthermore, by the strong tracking, at time $t=t_1$ the output we provide $y_1 = A_1(t_1)$ is a $(1\pm\eps/8)$-approximation of the desired value $g(f^{(t_1)})$.

At this point, $\cal A$ ``switches'' to the instance $A_2$, and presents $y_1$ as its output as long as $y_1 = (1 \pm \eps/2) A_2(t)$. Recall that randomness of the adversary is already fixed, and consider the sequence of updates obtained by concatenating $u_1^1, \ldots, u_{t_1}^{1}$ as defined above (these are the updates already sent by the adversary) with the sequence $u^2_{t_1+1}, \ldots, u^2_{m}$ to be sent by the adversary if the output from time $t=t_1$ onwards would always be $y_1$. We condition on the $\eps/8$-strong $g$-tracking guarantee on $A_2$ holding for this fixed sequence of updates, noting that this is the point where the randomness of $A_2$ is revealed. Set $t = t_2$ as the first value of $t$ (if exists) for which $A_2(t) = (1 \pm \eps/2) y_1$ does not hold. We now have, similarly to above, $y_1 = (1 \pm \eps) g(f^{(t)})$ for any $t_1 \leq t < t_2$, and $y_2 = (1 \pm \eps/8) g(f^{(t_2)})$.

The same reasoning can be applied inductively for $A_\rho$, for any $\rho \in [\lambda]$, to get that (provided $\eps/8$-strong $g$-tracking holds for $A_{\rho}$) at any given time, the current output we provide to the adversary $y_\rho$ is within a $(1 \pm \eps)$-multiplicative factor of the correct output for any of the time steps $t = t_{\rho}, t_{\rho}+1, \ldots, \min\{t_{\rho+1}-1,m\}$. 
Taking a union bound, we get that with probability at least $1-\delta$, all instances provide $\eps/8$-tracking (each for its respective fixed sequence), yielding the desired $(1\pm \eps)$-approximation of our algorithm.

It remains to verify that this strategy succeeds in handling all $m$ elements of the stream (and does not exhaust its pool of algorithm instances before then). Indeed, this follows immediately from Lemma \ref{lemma:flip_number_for_sketch_switching} applied with $\bar u = ((g(f^{(0)}), \ldots, g(f^{(m)}))$, $\bar v = (g(f^{(0)}), A_1(1), \ldots, A_1(t_1), A_2(t_1+1), \ldots, A_2(t_2), \ldots)$, and $\bar w$ being the output that our algorithm $\cal A$ provides ($y_0 = g(f^{(0)})$ until time $t_1-1$, then $y_1$ until time $t_2-1$, and so on). Observe that indeed $\bar w$ was generated from $v$ exactly as described in the statement of Lemma \ref{lemma:flip_number_for_sketch_switching}.
\end{proof}

\subsection{The Bounded Computation Paths Technique}
With our sketch switching technique, we showed that maintaining multiple instances of a non-robust algorithm to estimate a function $g$, and switching between them when the rounded output changes, is a recipe for a robust algorithm to estimate $g$. We next provide another recipe, which keeps only one instance, whose success probability for any fixed stream is very high; it relies on the fact that if the flip number is small, then the total number of fixed streams that we should need to handle is also relatively small, and we will be able to union bound over all of them. Specifically, we show that any non-robust algorithm for a function with bounded flip number can be modified into an adversarially robust one by setting the failure probability $\delta$ small enough. 

\begin{lemma}[Computation Paths]\label{lem:computationpaths}
Fix $g \colon \R^n \to \R$ and suppose that the output of $g$ uses $\log T$ bits of precision (see Remark \ref{remark:bit_prec}).
Let $A$ be a streaming algorithm that for any $\eps, \delta > 0$ satisfies the $(\eps, \delta)$-strong $g$-tracking property on the frequency vectors $f^{(1)}, \ldots, f^{(m)}$ of any particular fixed stream. 
Then there is a streaming algorithm $A'$ satisfying the following.
\begin{enumerate}
    \item $A'$ is an adversarially robust algorithm for $(1+\eps)$-approximating $g(f^{(t)})$ in all steps $t \in [m]$, with success probability $1-\delta$.
    \item The space complexity and running time of $A'$ as above (with parameters $\eps$ and $\delta$) are of the same order as the space and time of running $A$ in the static setting with parameters $\eps/8$ and
$\delta_0 = \delta / \left(\binom{m}{\lambda}T^{O(\lambda)}\right)$, where $\lambda = \lambda_{\eps/8, m}(g)$.
\end{enumerate}
\end{lemma}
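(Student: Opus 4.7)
The algorithm $A'$ will run a single instance of the static algorithm $A$ with parameters $(\eps/8, \delta_0)$ for a tiny $\delta_0$ to be chosen, feed it every stream update as it arrives, and expose to the adversary the \emph{lazy-rounded} sequence of raw outputs in exactly the manner prescribed by Lemma \ref{lemma:flip_number_for_sketch_switching}: hold the previously displayed value $w_{t-1}$ as long as $A$'s current estimate lies in the window $(1 \pm \eps/2) w_{t-1}$, and refresh to the current estimate otherwise. By Yao's principle I may assume the adversary is deterministic, so the entire transcript (raw outputs, displayed outputs, and stream updates) becomes a deterministic function of $A$'s internal randomness $r$.

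The key structural observation is that the family of length-$m$ displayed sequences that are \emph{consistent} with $A$ achieving $(\eps/8)$-accuracy throughout is not too large. Indeed, by Lemma \ref{lemma:flip_number_for_sketch_switching}, whenever the raw outputs of $A$ approximate $g$ to within $1\pm\eps/8$ at every step, the displayed sequence has $0$-flip number at most $\lambda := \lambda_{\eps/8,m}(g)$; equivalently, it consists of at most $\lambda$ constant blocks, each block taking one of at most $T$ possible values (since $g$'s output has $\log T$ bits of precision, and $A$'s output may be assumed truncated to the same precision, as the excerpt notes). A direct count therefore bounds the family $\mathcal{P}$ of such ``legal'' displayed sequences by $\binom{m}{\lambda} T^{O(\lambda)}$. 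Because the adversary is deterministic, each $w \in \mathcal{P}$ induces a unique stream $s(w)$, so the set $\mathcal{S} = \{s(w) : w \in \mathcal{P}\}$ satisfies $|\mathcal{S}| \leq \binom{m}{\lambda} T^{O(\lambda)}$. Choosing $\delta_0 = \delta / |\mathcal{S}|$ matches the claimed space and time bounds, and a union bound over $\mathcal{S}$ guarantees that with probability at least $1-\delta$ the event $G$ holds, namely that $A$ satisfies the $(\eps/8)$-strong $g$-tracking property simultaneously for every $s \in \mathcal{S}$.

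The main obstacle is that the realized stream $s^*$ is itself a random object that depends on $A$'s random string through its outputs, which creates a circular dependence: I would like to invoke $G$ to conclude that $A$ is accurate on $s^*$, but $G$ only gives accuracy on streams already in $\mathcal{S}$. I plan to break this circularity by a step-by-step induction on $t$. The inductive hypothesis is that $A$'s raw output is $(\eps/8)$-accurate at every $t' \leq t-1$. Under this hypothesis, Lemma \ref{lemma:flip_number_for_sketch_switching} applied to the prefix certifies that $w^*_{[1,t-1]}$ already has $0$-flip number at most $\lambda$; extending it to full length by repeating $w^*_{t-1}$ yields a sequence $\tilde w^{(t)} \in \mathcal{P}$. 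The corresponding stream $\tilde s^{(t)} = s(\tilde w^{(t)})$ lies in $\mathcal{S}$ and agrees with $s^*$ on its first $t$ updates, because the deterministic adversary's $t$-th update is a function only of the first $t-1$ displayed outputs, which coincide between the two scenarios. Since $A$'s output at time $t$ depends only on the first $t$ stream updates and on $r$, it is identical when fed $s^*$ or $\tilde s^{(t)}$, and so event $G$ applied to $\tilde s^{(t)}$ delivers $(\eps/8)$-accuracy at step $t$, closing the induction. A final application of Lemma \ref{lemma:flip_number_for_sketch_switching} then promotes pointwise $(\eps/8)$-accuracy of $A$ along $s^*$ into the required $(1\pm\eps)$-approximation guarantee for the displayed outputs, yielding both parts of the lemma.
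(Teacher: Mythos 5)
Your proposal is correct and takes the same high-level route as the paper: run a single instance of $A$ at error $\delta_0 = \delta/|\mathcal{S}|$, lazy-round its raw outputs as in Lemma~\ref{lemma:flip_number_for_sketch_switching}, bound $|\mathcal{S}|$ by counting displayed sequences of $0$-flip number at most $\lambda$, and union bound. The one place you go beyond the paper's exposition is worth noting: the paper simply concludes ``the proof follows by applying Lemma~\ref{lemma:flip_number_for_sketch_switching} to each stream in $\mathcal{S}$,'' leaving implicit the resolution of the circular dependence between the realized stream and $A$'s internal randomness. Your step-by-step induction — at each time $t$, matching the realized prefix with the length-$t$ prefix of the stream induced by extending $w^*_{[0,t-1]}$ by repetition, and then invoking the union-bound event $G$ on that fixed stream in $\mathcal{S}$ — is precisely the argument needed to make that leap rigorous, and it is the same mechanism that powers the sketch-switching proof. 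So this is not a different proof but a more carefully justified rendering of the paper's own; the only caveat is the minor point you implicitly rely on, namely that the $\eps/8$-flip number of any realized prefix is bounded by $\lambda_{\eps/8,m}(g)$, which holds since prefixes of streams in $\mathcal{C}$ extend to length-$m$ streams in $\mathcal{C}$.
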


\paragraph{The Algorithm for Computation Paths.} The algorithm $A'$ simply runs a single instance of the basic algorithm $A$ with a smaller error probability. The outputs it provides to the adversary are  rounded as in the sketch switching technique. 

Specifically, $A'$ runs by emulating $A$ with parameters $\eps/8$ and $\delta_0$. Assuming that the output sequence of the emulated $A$ up to the current time $t$ is $v_0, \ldots, v_t$, it generates $w_t$ in exactly the way described in Lemma \ref{lemma:flip_number_for_sketch_switching}: set $w_0 = v_0$, and for any $i > 0$, if $w_{i-1} \in (1 \pm \eps/2) v_i$ then $w_i = w_{i-1}$, and otherwise $w_i = v_i$. The output provided to the adversary at time $t$ would then be $w_t$.

\begin{proof}
As in the proof of Lemma \ref{lem:sketchswitch}, we may assume the adversary to be deterministic. This means, in particular, that the output sequence we provide to the adversary fully determines its stream of updates $(a_1, \Delta_1), \ldots, (a_m, \Delta_m)$.  
Take $\lambda = \lambda_{\eps/8, m}(g)$. Consider the collection of all possible output sequences (with $\log T$ bits of precision) whose $0$-flip number is at most $\lambda$, and note that the number of such sequences is at most $\binom{m}{\lambda} T^{O(\lambda)}$. 
Each output sequence as above uniquely determines a corresponding stream of updates for the deterministic adversary; let $\mathcal{S}$ be the collection of all such streams.

Pick $\delta_0 = \delta / |\mathcal{S}|$. Taking a union bound, we conclude that with probability $1-\delta$, $A$ (instantiated with parameters $\eps/8$ and $\delta_0$) provides an $\eps/8$-strong $g$-tracking guarantee for all streams in $\mathcal{S}$. The proof follows by applying Lemma \ref{lemma:flip_number_for_sketch_switching} to each stream in $\mathcal{S}$.
%
\end{proof}

\begin{remark} [Bit precision of output]
\label{remark:bit_prec}
For the purposes of this paper, we typically think of the bit precision as $O(\log n)$ (for example, in $F_p$-estimation, there are $\poly(n)$ possible outputs). Since we also generally assume that $m = \poly(n)$, the expression for $\delta_0$ is of the form $\delta_0 = \delta / n^{\Theta(\lambda)}$ in this case. We note that while reducing the bit precision of the output slightly improves the bound on $\delta_0$, this improvement becomes negligible for any streaming algorithm whose dependence in the error probability $\delta$ is logarithmic or better; this covers all situations where we apply Lemma \ref{lem:computationpaths} in this paper.
\end{remark}


\section{$F_p$-Estimation}
\label{sec:Fp}
\label{sec:Fp-estimation}
In this section, we introduce our adversarially robust $F_p$ moment estimation algorithms.
Recall that $F_p$ is given by $\|f\|_p^p = \sum_{i} |f_i|^p$ for $p > 0$. For $p=0$, the $F_0$ moment, or the number of distinct elements, is the number of non-zero coordinates in $f$, that is, $\|f\|_0 = |\{ i \in [n] : f_i \neq 0 \}|$. Recall that in Corollary \ref{cor:fpflip}, we bounded the flip number of the $F_p$ moment in insertion-only streams for any fixed $p > 0$ by $O(\max\{p,1\} \cdot \eps^{-1} \log n)$. By using our sketch switching argument, the strong $F_p$ tracking guarantees of \cite{blasiok2017continuous} as stated in Lemma \ref{lem:fpstrong}, we obtain our first result for $0 < p \leq 2$.

\begin{theorem}[$F_p$-estimation by sketch switching]\label{thm:fpswitch}
Fix any $0 < \eps,\delta \leq 1$ and $0 < p \leq 2$. There is a streaming algorithm for the insertion-only adversarial model which, with probability $1-\delta$, successfully outputs at each step $t \in [m]$ a value $R^t$ such that $R^t= (1 \pm \eps) \|f^{(t)}\|_p$. The space used by the algorithm is $$O\left( \frac{1}{\eps^3} \log n \log \eps^{-1} (\log \eps^{-1}  + \log \delta^{-1}  + \log \log n) \right).$$
\end{theorem}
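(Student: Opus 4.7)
The plan is to apply the sketch switching framework of Lemma \ref{lem:sketchswitch} to the strong $F_p$-tracking algorithm of Blasiok et al.\ (Lemma \ref{lem:fpstrong}), using the flip-number bound from Corollary \ref{cor:fpflip}. In detail: I would take $g(f) = \|f\|_p^p$, whose $(\eps/8, m)$-flip number for insertion-only streams is $\lambda = O(\eps^{-1} \log n)$ for every $0 < p \leq 2$. I would instantiate the base algorithm with accuracy parameter $\Theta(\eps)$ and failure probability $\delta/\lambda$, take a $p$-th root of its output to go from an approximation of $\|f\|_p^p$ to one of $\|f\|_p$ (since $p$ is a constant in $(0,2]$, a multiplicative $(1 \pm \Theta(\eps))$-approximation transfers cleanly), and aggregate via Algorithm \ref{alg:sample}. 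Correctness and a final success probability of $1-\delta$ follow immediately from Lemma \ref{lem:sketchswitch}: via a union bound over the $\lambda$ instances, every fine instance will satisfy strong tracking on the (per-instance fixed) sequence of updates assigned to it by the adversarial game, and Lemma \ref{lemma:flip_number_for_sketch_switching} then guarantees that the published outputs are $(1\pm\eps)$-approximations at every step.

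The naive accounting gives a space bound of
\[
\lambda \cdot O\!\left(\eps^{-2}\log n \cdot (\log \eps^{-1} + \log (\lambda/\delta) + \log \log n)\right) = O\!\left(\eps^{-3}\log^2 n \cdot (\log \eps^{-1} + \log \delta^{-1} + \log \log n)\right),
\]
where I absorbed $\log \lambda = O(\log \eps^{-1} + \log \log n)$ into the logarithmic factors. This is a factor $\log n / \log \eps^{-1}$ worse than the bound claimed in the theorem, so the main obstacle is precisely to shave one $\log n$ factor from the sketch-switching cost and replace it by $\log \eps^{-1}$.

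I would address this by exploiting the observation that each active instance of the switched algorithm only needs to provide accurate tracking on a narrow multiplicative window of $F_p$-values — specifically, the $(1\pm\Theta(\eps))$ window centered at the activation value — before sketch switching transfers control to the next instance. Within such a window, only $O(\log \eps^{-1})$ scales of resolution are required, so one does not need to pay the full $O(\log n)$ dynamic-range overhead of generic strong $F_p$-tracking for each of the $\lambda$ instances. Concretely, I would implement a two-layer scheme: (i) a single cheap $O(\log n)$-bit coarse estimator, maintained globally across all instances, that keeps a constant-factor approximation of $\|f^{(t)}\|_p^p$ at all times and lets each fine instance ``recenter'' to the right scale; and (ii) for each of the $\lambda$ fine instances, a stripped-down Blasiok-style strong-tracking algorithm whose bit precision and level count are truncated to the $O(\log \eps^{-1})$ relevant scales around the coarse estimate, using $O(\eps^{-2}\log\eps^{-1}\cdot(\log\eps^{-1}+\log\delta^{-1}+\log\log n))$ bits each. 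Multiplying by $\lambda = O(\eps^{-1}\log n)$ yields the stated space bound.

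The hard part will be verifying that the truncated fine instances retain the strong-tracking guarantee that Lemma \ref{lem:sketchswitch} depends on — in particular, that restricting to an $O(\log \eps^{-1})$-scale window does not break the chaining or level-based analysis underlying Lemma \ref{lem:fpstrong}, and that the interaction between the global coarse estimator (which is \emph{itself} adaptive in the adversarial game) and the switched fine instances can be analyzed cleanly, perhaps by also sketch-switching the coarse estimator at a much coarser flip rate and paying only $O(\log n)$ bits for it.
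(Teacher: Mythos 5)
Your first step — plugging Lemma \ref{lem:fpstrong}, Corollary \ref{cor:fpflip}, and Lemma \ref{lem:sketchswitch} together to get the naive bound $O(\eps^{-3}\log^2 n(\log\eps^{-1}+\log\delta^{-1}+\log\log n))$ — is exactly the paper's starting point, and you correctly identify that the entire problem is to replace one factor of $\log n$ by $\log\eps^{-1}$. (A small side remark: the theorem bounds $\|f\|_p$, and the switch between estimating $\|f\|_p^p$ and $\|f\|_p$ is indeed innocuous, as you note.) From there, however, you propose a genuinely different optimization than the paper, and yours has a concrete gap.

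You try to keep the full $\lambda = O(\eps^{-1}\log n)$ instances but shrink each instance's cost from $O(\eps^{-2}\log n(\cdots))$ to $O(\eps^{-2}\log\eps^{-1}(\cdots))$ by ``truncating bit precision and level count to the $O(\log\eps^{-1})$ relevant scales around a coarse estimate.'' The problem is what the $\log n$ factor in Lemma \ref{lem:fpstrong} actually buys. In $p$-stable-style $F_p$ sketches, each counter is a signed sum $\langle a_i, f\rangle$ over the whole history, whose magnitude can be $\poly(n)$; storing it takes $\Theta(\log n)$ bits regardless of how narrow a multiplicative window of $\|f\|_p$ you care to track over. That $\log n$ is per-counter precision, not a dynamic-range-over-levels count, so truncating to $O(\log\eps^{-1})$ scales does not obviously remove it. You would have to ``recenter'' each counter around its value at the start of the window — but that anchor is itself a $\Theta(\log n)$-bit quantity, so you haven't saved anything unless you reset the sketch to zero, which is a different move. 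You also correctly flag, but do not resolve, the interaction between your (adaptive) coarse estimator and the switched fine instances. So as written this is a plan with a hole rather than a proof.

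The paper's fix attacks the other factor and is substantially simpler: reduce the number of sketches, not their individual size. Keep only $\lambda' = \Theta(\eps^{-1}\log\eps^{-1})$ instances and cycle through them modularly; each time control passes to an instance that was already used, \emph{restart} it with fresh randomness on the remaining suffix of the stream (so line $10$ of Algorithm \ref{alg:sample} becomes $\rho \leftarrow \rho + 1 \pmod{\lambda'}$). The correctness argument of Lemma \ref{lem:sketchswitch} carries over unchanged except that a reused instance $A_\rho$, restarted at step $j$, now tracks $\|f^{(t)} - f^{(j-1)}\|_p$ rather than $\|f^{(t)}\|_p$. But by the time $A_\rho$ is consulted again at some step $t'$, the algorithm has cycled through all $\lambda'$ sketches, so $\|f^{(t')}\|_p \geq (1+\eps)^{\lambda'} \|f^{(j)}\|_p = \frac{100}{\eps}\|f^{(j)}\|_p$; monotonicity in the insertion-only model then gives $\|f^{(t')} - f^{(j-1)}\|_p = (1 \pm \eps/100)\|f^{(t')}\|_p$, so the restarted sketch's $(1\pm\eps/10)$ estimate of the suffix is still a $(1\pm\eps)$ estimate of the whole vector after a constant rescaling of $\eps$. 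This gives $\lambda' \cdot O(\eps^{-2}\log n(\log\eps^{-1}+\log\delta^{-1}+\log\log n)) = O(\eps^{-3}\log n\log\eps^{-1}(\log\eps^{-1}+\log\delta^{-1}+\log\log n))$ with no need to open up the internals of the base tracking algorithm, which is what makes it clean.
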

\begin{proof}
By an application of Lemma \ref{lem:sketchswitch} along with the flip number bound of Corollary  \ref{cor:fpflip} and the strong tracking algorithm of Lemma \ref{lem:fpstrong}, we immediately obtain a space complexity of $$O\left( \frac{1}{\eps^3} \log^2 n (\log \eps^{-1}  + \log \delta^{-1} + \log \log n) \right).$$ We now describe how the factor of $\frac{1}{\eps}\log n$, coming from running $\lambda_{\eps,m} = \Theta(\frac{1}{\eps}\log n)$ independent sketches in Lemma \ref{lem:sketchswitch}, can be improved to $\frac{1}{\eps}\log \eps^{-1}$.

To see this, we change Algorithm \ref{alg:sample} in the following way. Instead of $\Theta(\frac{1}{\eps}\log n)$ independent sketches, we use $\lambda \leftarrow \Theta(\frac{1}{\eps}\log \eps^{-1} )$ independent sketches, and change line $10$ to state $\rho \leftarrow \rho + 1 \pmod \lambda$. Each time we change $\rho$ to $\rho+1 \pmod \lambda$ and begin using the new sketch $A_{\rho+1 \pmod \lambda}$, we \textit{completely restart} the algorithm $A_\rho$ with new randomness, and run it on the remainder of the stream (or until it is restarted again after looping through all $\lambda$ sketches). The proof of correctness in Lemma \ref{lem:sketchswitch} is completely unchanged, except for the fact that now $A_\rho$ is run only on a \textit{sub-interval} $a_j,a_{j+1},\dots,$ of the stream, starting from the time step $j$ where $A_\rho$ is reinitialized and ending at the next time that $A_{\rho}$ is reinitialized. Specifically, at each time step $t \geq j$, $A_\rho$ will produce a $(1 \pm \eps)$ estimate of $\|f^{(t)} - f^{(j-1)}\|_p$ instead of $\|f^{(t)}\|_p$. However, since the sketch will not be used again until a step $t'$ where $\|f^{(t')} \|_p \geq (1+\eps)^{ \lambda}\|f^{(j)}\|_p =  \frac{100}{\eps} \|f^{(j)}\|_p$, it follows that only an $\eps$ fraction of the $\ell_p$ mass was missed by $A_\rho$. In particular, $\|f^{(t')} - f^{(j-1)}\|_p = (1 \pm \eps/100) \|f^{(t')}\|_p$, and thus by giving a $(1 \pm \eps/10)$ approximation of $\|f^{(t')} - f^{(j-1)}\|_p$, the algorithm $A_\rho$ gives the desired $(1 \pm \eps)$
approximation of the underlying $\ell_p$ norm, which is the desired result after a constant factor rescaling of $\eps$. Note that this argument could be used for the $L_0$ norm, or any $L_p$ norm for $p \geq 0$, using an $F_p$ strong tracking algorithm for the relevant $p$. 
\end{proof}

\begin{remark}[The restart trick]
The above proof improves a $\log n$ factor to a $\log{1/\eps}$ one by maintaining independent copies of the sketch in a cyclic manner, where old copies are restarted with fresh randomness (rather than scrapped entirely). This trick works because the $F_p$-value cannot decrease in insertion-only streams, and turns out useful in many insertion-only streaming problems where one wishes to estimate a non-decreasing quantity; we shall see a few examples throughout the paper. Indeed, as long as the previous estimate (using the old randomness) of a certain copy is only, say, a $\eps/10$-fraction of the current estimate, the restart does not majorly effect the output.
\end{remark}

While for most values of $\delta$, the above theorem has better space complexity than the computation paths reduction, for the regime of very small failure probability it is actually preferable to use the latter, as we now state. 

\begin{theorem}[$F_p$-estimation for small $\delta$]\label{thm:fpcompdelta}
Fix any $0 < \eps < 1$, $0 < p \leq 2$, and $\delta < n^{-C \frac{1}{\eps}\log n}$ for a sufficiently large constant $C>1$. There is a streaming algorithm for the insertion-only adversarial model which, with probability $1-\delta$,  successfully outputs at each step $t \in [m]$ a value $R^t$ such that $R^t= (1 \pm \eps) \|f^{(t)}\|_p$. The required space is $O\left( \frac{1}{\eps^2} \log n \log \delta^{-1} \right)$ bits.
\end{theorem}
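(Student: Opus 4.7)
The plan is to obtain Theorem 4.2 as a direct application of the Computation Paths reduction (Lemma \ref{lem:computationpaths}) to the static strong $F_p$-tracking algorithm of Lemma \ref{lem:fpstrong}, leveraging the flip number bound from Corollary \ref{cor:fpflip}. The overall strategy is: (i) run the static tracking algorithm with a sufficiently small failure probability $\delta_0$; (ii) round its outputs as prescribed by Lemma \ref{lemma:flip_number_for_sketch_switching}; and (iii) verify that, in the regime $\delta < n^{-C(\log n)/\eps}$, the reduced failure probability $\delta_0$ only causes a constant-factor blowup in $\log \delta_0^{-1}$, so that the final space bound collapses to $O(\eps^{-2} \log n \log \delta^{-1})$.

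Concretely, I would first fix $g(f) = \|f\|_p^p$ (estimating the $p$-th moment is equivalent, up to a constant-factor rescaling of $\eps$, to estimating the norm). By Corollary \ref{cor:fpflip}, the $(\eps/8, m)$-flip number satisfies $\lambda = \lambda_{\eps/8,m}(g) = O(\eps^{-1} \log n)$. Since at each step the output can be stored with $O(\log n)$ bits of precision, we have $T = \mathrm{poly}(n)$, so $\binom{m}{\lambda} T^{O(\lambda)} = n^{O(\eps^{-1} \log n)}$. Lemma \ref{lem:computationpaths} then reduces the problem to running an $(\eps/8, \delta_0)$-strong $F_p$-tracking algorithm, where $\delta_0 = \delta / n^{O(\eps^{-1}\log n)}$, i.e., $\log \delta_0^{-1} = \log \delta^{-1} + O(\eps^{-1} \log^2 n)$.

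Next, I would plug this into Lemma \ref{lem:fpstrong}, which provides such a static algorithm using
\[
O\!\left(\frac{\log n}{\eps^2}\bigl(\log \eps^{-1} + \log \delta_0^{-1} + \log\log n\bigr)\right)
\]
bits of space. The main (and only) technical point is to verify that the assumption $\delta < n^{-C (\log n)/\eps}$ for sufficiently large $C$ makes the additive overhead in $\log \delta_0^{-1}$ negligible: indeed, that assumption gives $\log \delta^{-1} \geq C \eps^{-1} \log^2 n$, which dominates the $O(\eps^{-1} \log^2 n)$ additive term as well as the lower-order $\log \eps^{-1}$ and $\log\log n$ terms. Hence $\log \delta_0^{-1} = \Theta(\log \delta^{-1})$, and the total space simplifies to $O(\eps^{-2} \log n \log \delta^{-1})$.

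I don't anticipate a real obstacle here: the whole argument is a straightforward composition of three prior ingredients (flip-number bound, static strong tracker, computation-paths reduction), and the only subtle step is checking that the inflation from $\delta$ to $\delta_0$ is absorbed into the hypothesis on $\delta$. The one thing worth double-checking is that Lemma \ref{lem:fpstrong}'s dependence on $\log \delta^{-1}$ is indeed additive inside the parentheses (not multiplicative with $\log n$), so that after the reduction the $\log \delta^{-1}$ factor appears only once in the final bound — this is crucial for matching the claimed $O(\eps^{-2} \log n \log \delta^{-1})$ rather than something like $O(\eps^{-2}\log^2 n \log \delta^{-1})$, and it is exactly what Lemma \ref{lem:fpstrong} provides.
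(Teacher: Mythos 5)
Your proof is correct and follows essentially the same route as the paper: apply Lemma \ref{lem:computationpaths} with the flip-number bound $\lambda = O(\eps^{-1}\log n)$ from Corollary \ref{cor:fpflip}, and observe that the hypothesis $\delta < n^{-C(\log n)/\eps}$ makes the inflation from $\delta$ to $\delta_0 = \delta/n^{O(\lambda)}$ only a constant factor in $\log \delta_0^{-1}$. The one cosmetic difference is the choice of static ingredient: you feed in the strong-tracking algorithm of Lemma \ref{lem:fpstrong} (\cite{blasiok2017continuous}), whereas the paper instead cites the one-shot $O(\eps^{-2}\log n\log\delta^{-1})$ algorithm of \cite{kane2010exact} (implicitly upgraded to tracking by a union bound over the $m$ steps); in this very-small-$\delta$ regime both yield the identical $O(\eps^{-2}\log n\log\delta^{-1})$ bound, so the difference is immaterial.
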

The proof is a direct application of Lemma \ref{lem:computationpaths}, along with the flip number bound of Corollary \ref{cor:fpflip}, and the $O(\eps^{-2} \log n \log \delta^{-1})$ static $F_p$ estimation algorithm of \cite{kane2010exact}. Indeed, note that the flip number is $\lambda = O (\log n / \eps)$ and that for small enough values of $\delta$ as in the lemma, one has $\log(m^\lambda / \delta) = \Theta(\log(1/\delta))$.


Next, we show that for turnstile streams with $F_p$ flip number $\lambda$, we can estimate $F_p$ with error probability $\delta = n^{-\lambda}$. The space requirement of the algorithm is optimal for algorithms with such failure probability $\delta$, which follows by an $\Omega(\eps^{-2} \log n \log \delta^{-1})$ lower bound for turnstile algorithms \cite{jayram2013optimal}, where the hard instance in question has small $F_p$ flip number.\footnote{The hard instance in \cite{jayram2013optimal} is a stream where $O(n)$ updates are first inserted and then deleted, thus the flip number is at most twice the $F_p$ flip number of an insertion-only stream.}


\begin{theorem}[$F_p$-estimation for $\lambda$-flip number turnstile streams]\label{thm:fpturn}
Let $\mathcal{S}_\lambda$ be the set of all turnstile streams with $F_p$ flip number at most $\lambda \geq \lambda_{\eps,m}(\|\cdot\|_p^p)$ for any $0 < p \leq 2$. Then there is an adversarially robust streaming algorithm for the class $\mathcal{S}_\lambda$ of streams that, with probability $1-n^{-C\lambda}$ for any constant $C>0$, outputs at each time step a value $R^t$ such that $R^t = (1 \pm \eps)\|f\|_p^p$. The space used by the algorithm is $O(\eps^{-2} \lambda \log^2 n )$.
\end{theorem}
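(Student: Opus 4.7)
The plan is to apply the bounded computation paths technique (Lemma~\ref{lem:computationpaths}) to a standard static turnstile $F_p$-estimation algorithm, calibrating the failure probability so that the $n^{\Theta(\lambda)}$ blow-up required by the reduction is absorbed into the allowed error budget $\delta = n^{-C\lambda}$.

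First, I recall that for any $0 < p \leq 2$ there is a classical static turnstile $F_p$-estimation algorithm (for instance, Indyk's $p$-stable sketch or the Kane--Nelson--Woodruff sketch) using $O(\eps^{-2}\log n \log \delta^{-1})$ bits of space with one-shot failure probability $\delta$. Upgrading this to an $(\eps,\delta)$-strong $F_p$-tracking algorithm costs only a union bound over the $m = \poly(n)$ steps of the stream, which replaces $\delta$ by $\delta/m$ in the one-shot guarantee. Since $\log m = O(\log n)$, this blow-up is absorbed into the $\log n$ factor and we obtain an $(\eps,\delta)$-strong $F_p$-tracking algorithm with space $L(\eps,\delta) = O(\eps^{-2}\log n \log \delta^{-1})$.

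Next, I plug this strong tracking algorithm into Lemma~\ref{lem:computationpaths}. The bit precision of the output is $\log T = O(\log n)$ (as recorded in Remark~\ref{remark:bit_prec}), and by hypothesis the $(\eps/8, m)$-flip number of $\|\cdot\|_p^p$ on the class $\mathcal{S}_\lambda$ is at most $\lambda$. The lemma therefore requires running the static algorithm with parameters $\eps/8$ and
\[
\delta_0 \;=\; \frac{\delta}{\binom{m}{\lambda}\, T^{O(\lambda)}} \;=\; \frac{\delta}{n^{O(\lambda)}}.
\]
Choosing $\delta = n^{-C\lambda}$ for any constant $C>0$ (by taking the implicit constant large enough) gives $\delta_0 = n^{-C'\lambda}$ for some constant $C'$, and in particular $\log \delta_0^{-1} = O(\lambda \log n)$.

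Substituting back into the space bound, the total space usage of the resulting adversarially robust algorithm is
\[
L(\eps/8,\delta_0) \;=\; O\!\left(\eps^{-2}\log n \cdot \lambda \log n\right) \;=\; O\!\left(\eps^{-2}\lambda \log^2 n\right),
\]
as claimed. The main subtle point is to verify that the flip-number hypothesis on the class $\mathcal{S}_\lambda$ is what the computation paths lemma needs: the lemma bounds the number of distinct adversary-generated streams by $\binom{m}{\lambda} T^{O(\lambda)}$ through the $0$-flip number of the \emph{rounded} output, and Lemma~\ref{lemma:flip_number_for_sketch_switching} together with our assumption $\lambda \geq \lambda_{\eps/8,m}(\|\cdot\|_p^p)$ ensures this rounded sequence has flip number at most $\lambda$ on every stream in $\mathcal{S}_\lambda$. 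This is the only place where the turnstile nature of the stream enters, and the hypothesis is exactly tailored to make the union bound go through.
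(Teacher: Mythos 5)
Your proposal is correct and takes essentially the same route as the paper, which itself just observes that the result ``follows by simply applying Lemma~\ref{lem:computationpaths}, along with the $O(\eps^{-2}\log n\log\delta^{-1})$ bit turnstile algorithm of~\cite{kane2010exact}.'' You fill in the details the paper leaves implicit (the union-bound upgrade from one-shot to strong tracking, and the calibration $\delta = n^{-C\lambda} \Rightarrow \log\delta_0^{-1} = O(\lambda\log n)$), and these are all accurate.
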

\begin{proof}
The proof follows by simply applying Lemma \ref{lem:computationpaths}, along with the $O(\eps^{-2} \log n \log \delta^{-1})$ bit turnstile algorithm of \cite{kane2010exact}.
\end{proof}

In addition, we show that the $F_p$ moment can also be robustly estimated for $p > 2$. In this case, it is preferable to use our computation paths reduction, because the upper bounds for $F_p$ moment estimation for large $p$ yield efficiency gains when setting $\delta$ to be small.

\begin{theorem}[$F_p$-estimation, $p > 2$, by Computation Paths]\label{thm:F_p_estim_largep}
Fix any $\eps,\delta >0$, and any constant $p > 2$. Then there is a streaming algorithm for the insertion-only adversarial model which, with probability $1-n^{-(c \log n )/\eps}$ for any constant $c>1$, successfully outputs at every step $t \in [m]$ a value $R^t$ such that $R^t= (1 \pm \eps) \|f^{(t)}\|_p$. The space used by the algorithm is $O(n^{1-2/p}  ( \eps^{-3} \log^2 n + \eps^{-6/p} ( \log^2 n )^{2/p} \log n ) )$.
\end{theorem}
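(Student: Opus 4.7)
The plan is to invoke the computation paths framework (Lemma \ref{lem:computationpaths}) on top of the best known static $F_p$-estimation algorithm for $p > 2$, leveraging the small flip number of $F_p$ in insertion-only streams.

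First I would quote Corollary \ref{cor:fpflip}, which gives $\lambda := \lambda_{\eps/8, m}(\|\cdot\|_p^p) = O(p\eps^{-1}\log n)$ in the insertion-only setting; since $p$ is treated as a constant, this is $\lambda = O(\eps^{-1} \log n)$. Combined with $T, m = \poly(n)$, Lemma \ref{lem:computationpaths} tells us that to obtain an adversarially robust $(1\pm \eps)$-estimator for $F_p$ with overall failure probability $\delta$, it suffices to run a static $(\eps/8)$-strong $F_p$-tracking algorithm with failure probability
\[
\delta_0 = \delta / \bigl(\tbinom{m}{\lambda} T^{O(\lambda)}\bigr) = \delta \cdot n^{-\Theta(\lambda)} = \delta \cdot n^{-\Theta(\eps^{-1}\log n)}.
\]
Choosing $\delta = n^{-(c \log n)/\eps}$ for a sufficiently large constant $c > 1$ makes $\delta_0$ only a constant factor smaller than $\delta$, so $\log \delta_0^{-1} = \Theta(\eps^{-1}\log n)$.

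Next, I would instantiate the static algorithm of Ganguly \cite{ganguly2018high} (cited in the table for the $p>2$ row), whose static strong-tracking space, when run with error probability $\delta_0$, has the form
\[
O\!\left(n^{1-2/p}\bigl(\eps^{-2}\log n \cdot \log \delta_0^{-1} + \eps^{-4/p}(\log n)^{2/p}(\log\delta_0^{-1})^{2/p}\log n\bigr)\right)
\]
(obtaining strong tracking from a one-shot algorithm costs at most an extra $\log m = O(\log n)$ factor in the failure probability, which is absorbed into the constants of $\delta_0$). Substituting $\log\delta_0^{-1} = \Theta(\eps^{-1}\log n)$ gives the two summands
\[
n^{1-2/p}\cdot \eps^{-3}\log^2 n \qquad \text{and}\qquad n^{1-2/p}\cdot \eps^{-6/p}(\log^2 n)^{2/p}\log n,
\]
matching the stated bound.

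The main obstacle I anticipate is purely bookkeeping: confirming that the chosen static algorithm for $p>2$ actually delivers the required $(\eps/8, \delta_0)$-strong tracking guarantee (as opposed to the one-shot guarantee) with the claimed dependencies on $\eps$ and $\log \delta_0^{-1}$, and that the $\log \delta_0^{-1}$ factor enters with exponent at most $1$ in the first summand and at most $2/p$ in the second. Assuming this (it follows by standard repetition-and-median arguments from the one-shot version, with only an extra $\log n$ folded into $\delta_0$), the robustness statement and probability guarantee follow immediately from Lemma \ref{lem:computationpaths}, and the space bound is exactly what we computed above. No adversarial analysis beyond the black-box application of Lemma \ref{lem:computationpaths} is needed.
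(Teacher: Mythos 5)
Your overall plan is exactly the paper's: apply the computation paths reduction (Lemma \ref{lem:computationpaths}) to Ganguly's static $F_p$-estimator for $p>2$, using the flip number bound $\lambda = O(\eps^{-1}\log n)$ from Corollary \ref{cor:fpflip}. However, there are two arithmetic errors in the middle of your argument that happen to cancel exactly, which is why the final answer comes out right.

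First, your computation of $\log\delta_0^{-1}$ is off by a factor of $\log n$. With $\delta = n^{-(c\log n)/\eps}$ you have $\log\delta^{-1} = (c\log n/\eps)\cdot \log n = c\,\eps^{-1}\log^2 n$, not $\Theta(\eps^{-1}\log n)$. Moreover $\delta_0 = \delta\cdot n^{-\Theta(\eps^{-1}\log n)}$ is not ``a constant factor smaller'' than $\delta$; the ratio $\delta_0/\delta$ is superpolynomially small. What is true is that $\log\delta_0^{-1} = \Theta(\log\delta^{-1}) = \Theta(\eps^{-1}\log^2 n)$. Second, the space bound you ascribe to Ganguly is inflated. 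The bound quoted in the paper is $O\bigl(n^{1-2/p}(\eps^{-2}\log\delta^{-1} + \eps^{-4/p}\log^{2/p}\delta^{-1}\log n)\bigr)$, whereas you wrote $O\bigl(n^{1-2/p}(\eps^{-2}\log n\cdot\log\delta_0^{-1} + \eps^{-4/p}(\log n)^{2/p}(\log\delta_0^{-1})^{2/p}\log n)\bigr)$, which carries an extra $\log n$ in the first summand and an extra $(\log n)^{2/p}$ in the second. It looks as though you took the table's $\delta=1/n$ instance $O(n^{1-2/p}(\eps^{-2}\log n + \eps^{-4/p}\log^{2/p+1}n))$ and multiplied by powers of $\log\delta_0^{-1}$ rather than replacing the occurrences of $\log n$ that already encode $\log\delta^{-1}$.

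Because the first error undercounts $\log\delta_0^{-1}$ by a factor of $\log n$ and the second overcounts the space bound by $\log n$ (respectively $(\log n)^{2/p}$ after the exponent $2/p$), the two mistakes cancel term by term, yielding the stated $O\bigl(n^{1-2/p}(\eps^{-3}\log^2 n + \eps^{-6/p}(\log^2 n)^{2/p}\log n)\bigr)$. The correct derivation, as in the paper, uses $\log\delta_0^{-1} = \Theta(\eps^{-1}\log^2 n)$ together with the unmodified Ganguly formula; you should redo the two steps with those corrected quantities.
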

\begin{proof}
We use the insertion-only $F_p$ estimation algorithm of \cite{ganguly2018high}, which achieves \[\left(n^{1-2/p} \left( \eps^{-2} \log \delta^{-1} + \eps^{-4/p} \log^{2/p} \delta^{-1} \log n \right) \right)\] bits of space in the turnstile (and therefore insertion-only) model. We can set $\delta = \delta/m$ to union bound over all steps, making it a strong $F_p$ tracking algorithm with $$O\left(n^{1-2/p} \left( \eps^{-2}\log (n \delta^{-1}) + \eps^{-4/p} \log^{2/p} (n\delta^{-1} ) \log n \right) \right)$$ bits of space. Then by Lemma \ref{lem:computationpaths} along with the flip number bound of Corollary \ref{cor:fpflip}, the claimed space complexity follows. 
\end{proof}

\section{Distinct Elements Estimation}
\label{sec:F0}
We now demonstrate how our sketch switching technique can be used to estimate the number of \textit{distinct elements}, also known as $F_0$ estimation, in an adversarial stream. In this case, since there exist  static $F_0$ strong tracking algorithms \cite{blasiok2018optimal} which are more efficient than repeating the sketch $\log \delta^{-1}$ times, it will be preferable to use our sketch switching technique. 

\begin{theorem}[Robust Distinct Elements by Sketch Switching]\label{thm:distinct_elements_sketch_switching}
There is an algorithm which, when run on an adversarial insertion-only stream, produces at each step $t \in [m]$ an estimate $R^t$ such that $R^t = (1 \pm \eps)\|f^{(t)}\|_0$ with probability at least $1-\delta$. The space used by the algorithm is $O(\frac{ \log \eps^{-1}}{\eps} (\frac{\log \eps^{-1} + \log \delta^{-1} + \log \log n}{\eps^2} + \log n) )$ bits.
\end{theorem}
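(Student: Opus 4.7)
The plan is to combine the sketch switching framework (Lemma \ref{lem:sketchswitch}) with the static strong $F_0$-tracking algorithm of Lemma \ref{lem:f0strong}, and then apply the restart (cyclic reuse) trick from the proof of Theorem \ref{thm:fpswitch} in order to shave a $\log n / \log \eps^{-1}$ factor off the number of instances.

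A direct application of Lemma \ref{lem:sketchswitch} with the flip number bound $\lambda_{\eps/8,m}(\|\cdot\|_0) = O(\eps^{-1} \log n)$ from Corollary \ref{cor:fpflip}, instantiating each internal algorithm with the $(\eps/8, \delta/\lambda)$-strong $F_0$-tracker from Lemma \ref{lem:f0strong}, yields total space
\[ O\!\left(\frac{\log n}{\eps}\right) \cdot O\!\left(\frac{\log \eps^{-1} + \log \delta^{-1} + \log\log n}{\eps^2} + \log n\right), \]
which is larger than the claimed bound by roughly a factor of $\log n / \log \eps^{-1}$. The task is therefore to reduce the multiplicative flip-number overhead from $\Theta(\eps^{-1} \log n)$ to $\Theta(\eps^{-1} \log \eps^{-1})$ while preserving correctness.

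To do this, I would modify Algorithm \ref{alg:sample} exactly as in the proof of Theorem \ref{thm:fpswitch}: keep only $\lambda' = \Theta(\eps^{-1} \log \eps^{-1})$ independent instances $A_1, \ldots, A_{\lambda'}$, and iterate the active-instance pointer as $\rho \leftarrow (\rho \bmod \lambda') + 1$; whenever the pointer cycles onto an instance $A_\rho$, we completely discard its state and re-initialize it with fresh randomness on the suffix of the stream starting at the current step $j$. The key observation, analogous to the one used for $F_p$, is monotonicity: $\|f^{(t)}\|_0$ is non-decreasing along an insertion-only stream. Between two consecutive reinitializations of the same $A_\rho$, say at steps $j$ and $j'$, the sketch-switching rule guarantees that the estimated $F_0$ has increased by a factor of at least $(1+\eps/2)^{\lambda'} = \eps^{-\Theta(1)}$, so that $\|f^{(j-1)}\|_0 \le \eps \cdot \|f^{(t)}\|_0$ for all $t \in [j, j']$. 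Consequently a $(1 \pm \eps/10)$-approximation of $\|f^{(t)} - f^{(j-1)}\|_0$ produced by the freshly initialized $A_\rho$ (which is what the static strong-tracking guarantee of Lemma \ref{lem:f0strong} gives, applied to the suffix stream) is within $(1 \pm \eps)$ of $\|f^{(t)}\|_0$ after rescaling $\eps$ by a constant.

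The rest of the correctness argument is unchanged from Lemma \ref{lem:sketchswitch}: fix a deterministic adversary by Yao's principle, and for each of the $\lambda'$ instances condition on the $\eps/8$-strong $F_0$-tracking guarantee holding on the particular fixed suffix stream that the adversary would send if our output were frozen at the value revealed at the time of its (re-)initialization; a union bound over the $\lambda'$ copies with per-copy failure probability $\delta/\lambda'$ gives overall success probability $1-\delta$. The final space usage is $\lambda' \cdot O\!\left(\frac{\log\log n + \log(\lambda'/\delta)}{\eps^2} + \log n\right) = O\!\left(\frac{\log \eps^{-1}}{\eps}\!\left(\frac{\log \eps^{-1} + \log \delta^{-1} + \log\log n}{\eps^2} + \log n\right)\right)$, matching the claim. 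The main technical care is in verifying that the replacement of ``one fresh instance per switch'' by ``cyclic reuse with restart'' is compatible with the inductive adversarial-stream argument of Lemma \ref{lem:sketchswitch}; this reduces to the missing-mass bound guaranteed by monotonicity and the gap between consecutive restarts.
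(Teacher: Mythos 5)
Your proposal matches the paper's proof step for step: apply sketch switching (Lemma \ref{lem:sketchswitch}) with the strong $F_0$-tracker of Lemma \ref{lem:f0strong} and the flip-number bound of Corollary \ref{cor:fpflip}, then cut the number of instances from $\Theta(\eps^{-1}\log n)$ to $\Theta(\eps^{-1}\log\eps^{-1})$ via the cyclic-reuse-with-restart trick from the proof of Theorem \ref{thm:fpswitch}. One small phasing slip in your algorithm description: the restart of $A_\rho$ should occur when the active pointer moves \emph{away} from $\rho$ (just after $A_\rho$'s output has been exposed to the adversary), not when the pointer ``cycles onto'' $\rho$ — otherwise the freshly reset instance would have no accumulated data at the moment it is queried. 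Your subsequent analysis, which requires the $F_0$ value to grow by a $(1+\eps/2)^{\lambda'} = \Omega(1/\eps)$ factor between an instance's restart and its next use so that the missed prefix accounts for only an $\eps$-fraction of the distinct elements, correctly assumes this phasing and is sound.
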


\begin{proof}
We use the insertion-only distinct elements \textit{strong} tracking algorithm of \cite{blasiok2018optimal}. Specifically, the algorithm of
 \cite{blasiok2018optimal} uses space $O(\frac{\log \delta_0^{-1} + \log \log n}{\eps^2} + \log n)$, and with probability $1-\delta_0$, successfully returns an estimate $R^t$ for every step $t \in [m]$ such that $R^t = (1 \pm \eps)\|f^{(t)}\|_0$  in the non-adversarial setting. Then by an application of Lemma \ref{lem:sketchswitch}, along with the flip number bound of $O(\log n/\eps)$ from Corollary \ref{cor:fpflip}, we obtain the space complexity with a factor of $\frac{\log n}{\eps}$ blow-up after setting $\delta_0 = \Theta(\delta \frac{\eps}{\log n})$. This gives a complexity of $O(\frac{ \log n}{\eps} (\frac{\log \eps^{-1} + \log \delta^{-1} + \log \log n}{\eps^2} + \log n) )$. To reduce the extra $\log n$-factor to a $\log \eps^{-1}$ factor, we just apply the same argument used in the proof of Theorem \ref{thm:fpswitch}, which shows that by restarting sketches it suffices to keep only $O(\eps^{-1} \log \eps ^{-1} )$ copies.
 \end{proof}

\subsection{Fast Distinct Elements Estimation}
As noted earlier, there are many reasons why one may prefer one of the reductions from Section \ref{sec:framework} to the other. In this section, we will see such a motivation. Specifically, we show that adversarially robust $L_0$ estimation can be accomplished with extremely fast update time using the computation paths reduction of Lemma \ref{lem:computationpaths}. 

 First note that the standard approach to obtaining failure probability $\delta$ is to repeat the estimation algorithm $\log \delta^{-1}$ times independently, and take the median output. However, this blows up the update time by a factor of $\log \delta^{-1}$. Thus black-box applying Lemma \ref{lem:computationpaths} by setting $\delta$ to be small can result in a larger update time. To improve upon this, we will introduce an insertion-only distinct elements estimation algorithm, with the property that the runtime dependency on $\delta^{-1}$ is very small (roughly $\log^2 \log \delta^{-1}$). Thus applying Lemma \ref{lem:computationpaths} on this algorithm results in a very fast robust streaming algorithm. 

\begin{lemma}\label{lem:fastlp}
There is a streaming algorithm which, with probability $1-\delta$, returns a $(1 \pm \eps)$ multiplicative estimate of the number of distinct elements in an insertion-only data stream. The space required is $O(\frac{1}{\eps^2} \log  n (\log  \log n + \log \delta^{-1}))$,\footnote{We remark that it is possible to optimize the $\log n$ factor to $O(\log \delta^{-1} + \log \eps^{-1} + \log \log n)$ by hashing the identities stored in the lists of the algorithm to a domain of size $\poly(\delta^{-1} ,\eps^{-1}, \log n)$. However, in our application we will be setting $\delta \ll 1/n$, and so the resulting adversarially robust algorithm would actually be \textit{less} space efficient. } and the worst case running time per update is $O\left(\left(\log^2 \log \frac{\log n}{\delta}\right) \cdot \left(\log \log \log  \frac{\log n}{\delta}\right)\right)$.
\end{lemma}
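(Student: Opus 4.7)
My plan is to sidestep the standard ``$O(\log\delta^{-1})$ independent repetitions then median'' construction (which would blow up the update time by $\log \delta^{-1}$) and instead maintain a single enlarged bottom-$k$ (min-hash) sketch supported by a priority queue over a small universe. I would set $k = \Theta(\eps^{-2}(\log\log n + \log\delta^{-1}))$ and pick a hash function $h : [n] \to [U]$ of sufficient independence, where $U$ is a small polynomial in $k$ and $\log n$. The algorithm then maintains the set $S$ of the $k$ smallest values of $h(a_t)$ seen so far in the stream (collapsing duplicates) and outputs either $|S|$ if fewer than $k$ distinct hashes have appeared, or $kU/\tau$ with $\tau = \max(S)$ otherwise. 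The resulting space is exactly $k \cdot O(\log n) = O(\eps^{-2}\log n(\log\log n+\log\delta^{-1}))$ bits, matching the claim.

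For the correctness step I would condition on no hash collisions among the distinct stream elements, so that $\tau$ is essentially the $k$-th smallest of $D = \|f\|_0$ uniform samples from $[U]$, and then apply a Chernoff-type bound on the indicators $\{h(a) \le kU/((1\pm\eps)D)\}$ to obtain $\widehat{D} \in (1\pm\eps)D$ with failure probability at most $\delta/2$, provided $k = \Omega(\eps^{-2}\log\delta^{-1})$. The no-collision event I would handle by taking $U$ to be a large enough polynomial in $k\log n$ and $h$ to be $\Theta(\log\log n+\log\delta^{-1})$-wise independent, which absorbs the remaining $\delta/2$ of the failure budget; the extra $\log\log n$ contribution to $k$ is precisely what is needed to make the collision-free event hold with probability $1-\delta/2$ against a stream of length $\poly(n)$.

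For the update-time part, I would store $S$ as a van Emde Boas tree (or $y$-fast trie) over $[U]$, together with a small secondary hash table for duplicate detection. Each update would compute $h(a_t)$, compare it to $\tau$ in $O(1)$ and reject if larger, then perform a membership/insert/evict sequence on the van Emde Boas structure, each operation taking $O(\log\log U)$ worst-case time. Since $\log U = \Theta(\log((\log n)/\delta) + \log\eps^{-1})$, this matches the claimed update bound, with the extra $\log\log\log$ factor absorbing the cost of the auxiliary dictionary and of a single evaluation of the highly independent hash.

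I expect the main obstacle to be simultaneously satisfying (i) enough independence of $h$ to support the Chernoff tail at failure probability $\delta$, (ii) near-constant-time worst-case evaluation of $h$, and (iii) a universe $[U]$ small enough to make van Emde Boas meet the target update time, all while keeping the total space within $O(\eps^{-2}\log n(\log\log n+\log\delta^{-1}))$ bits. This should be manageable by plugging in a constant-time high-independence hash family stored in a seed of $O(\log n)$ bits, so that both the seed and the dictionary overhead are subsumed by the $\log n$ factor already present in the space budget.
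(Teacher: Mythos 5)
Your proposal takes a genuinely different algorithmic route from the paper's (a bottom-$k$/KMV sketch with a van Emde Boas priority queue, versus the paper's level-sampling scheme with $O(\log n)$ lists $L_0,\dots,L_\ell$), but it has a gap exactly at the step that the paper's proof is designed to address.

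The gap is your assumption that you can ``plug in a constant-time high-independence hash family stored in a seed of $O(\log n)$ bits.'' To make the Chernoff-type tail bound decay at rate $\exp(-\Omega(d))$ you need $d$-wise independence with $d = \Theta(\log\log n + \log\delta^{-1})$, and any such family requires a seed of length $\Omega(d\log U)$ bits; the standard degree-$d$ polynomial hash then takes $\Theta(d)$ arithmetic operations per evaluation. There is no known construction that simultaneously achieves $d$-wise independence for this growing $d$, $O(1)$ (or near-constant) worst-case evaluation, and seed/table size $O(\log n)$; the fast-evaluation constructions (Siegel-style, double tabulation, etc.) pay $n^{\Omega(1)}$ or similar in space, which would blow your budget. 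The paper's proof recognizes this obstacle explicitly: it batches $d$ consecutive updates, uses fast multipoint polynomial evaluation (Proposition~\ref{prop:fastmulti}) to compute all $d$ hashes in $O(d\log^2 d\log\log d)$ total work, spreads that work across the $d$ steps to get $O(\log^2 d\log\log d)$ worst-case update time, and then handles the resulting $d$-step reporting lag by running a deterministic exact counter for the first $O(\eps^{-1}d)$ distinct elements. Your plan is missing this entire amortization-plus-fallback machinery, and without it the hash-evaluation cost alone gives update time $\Theta(\log\log n + \log\delta^{-1})$ rather than the claimed $\mathrm{polyloglog}$ bound.

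A secondary issue: you say you will condition on ``no hash collisions among the distinct stream elements'' while taking $U = \poly(k\log n)$. Since the number of distinct elements can be as large as $n$, avoiding all pairwise collisions with good probability would require $U = \Omega(n^2)$, not $\poly(k\log n)$. The standard bottom-$k$ analysis avoids this by only arguing about the ranks of the bottom $\Theta(k)$ hash values rather than global collision-freeness, so the estimator itself is salvageable, but as written this conditioning is incorrect, and you still need to store the $\Theta(\log n)$-bit identities (not just the short hashes) to deduplicate repeated stream items, which your small-universe vEB structure does not by itself provide.

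The rest of the proposal (choice of $k$, the space accounting, the role of the $\log\log n$ term) is reasonable and in the same spirit as the paper; the crux you must supply is a concrete mechanism to evaluate a $\Theta(\log\log n + \log\delta^{-1})$-wise independent hash in the claimed worst-case time, which is precisely the multipoint-evaluation-with-deamortization trick that constitutes the technical heart of the paper's proof.
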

We note that previously, the best known update time for insertion-only distinct elements estimation is the algorithm of \cite{kane2010optimal}, which obtains $O(1)$-update time in $O(\eps^{-2} + \log n)$ space with constant failure probability $\delta$. Thus, to obtain small error probability $\delta$, one would need to repeat the entire algorithm $O(\log \delta^{-1})$ times, causing a blow-up of $O(\log \delta^{-1})$ in the update time. 
Before presenting our proof of Lemma \ref{lem:fastlp}, we state the following proposition which will allow for the fast evaluation of $d$-wise independent hash functions.

\begin{proposition}[\cite{von2013}, Ch. 10]\label{prop:fastmulti}
	Let $R$ be a ring, and let $p \in R[x]$ be a degree $d$ univariate polynomial over $R$. Then given distinct $x_1,x_2,\dots,x_d \in R$, all the values $p(x_1),p(x_2),\dots,p(x_d)$ can be computed using $O(d \log^2 d \log \log d)$ operations over $R$.
\end{proposition}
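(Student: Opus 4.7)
The plan is to use the classical subproduct tree technique combined with fast polynomial arithmetic. First, I would construct a balanced binary tree of depth $\lceil \log_2 d \rceil$ whose leaves correspond to the evaluation points $x_1, \ldots, x_d$ (padding with dummy leaves if $d$ is not a power of two). At each internal node spanning indices $i, i+1, \ldots, j$, I would store the monic polynomial $M_{i,j}(x) = \prod_{k=i}^{j}(x - x_k)$, built bottom-up by multiplying the two polynomials held at the node's children. Invoking a fast polynomial multiplication algorithm costing $M(n) = O(n \log n \log \log n)$ ring operations for degree-$n$ inputs---such as the Cantor--Kaltofen variant of Sch\"onhage--Strassen, which works over an arbitrary commutative ring with unity---the work at level $\ell$ (counting from the leaves) is $(d/2^\ell) \cdot M(2^\ell) = O(d \cdot \ell \log \ell)$, and summing over the $O(\log d)$ levels yields a total of $O(d \log^2 d \log \log d)$ operations for constructing the tree.

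Next, I would traverse the tree from the root downward, carrying at each node the remainder $r_{i,j} := p \bmod M_{i,j}$. At the root this equals $p$ itself (since $\deg p \le d$). To descend from a node holding $r_{i,j}$ to its two children, I compute $r_{i,m} = r_{i,j} \bmod M_{i,m}$ and $r_{m+1,j} = r_{i,j} \bmod M_{m+1,j}$ via fast Euclidean division. Crucially, each $M_{i,j}$ is monic by construction, so polynomial division can be performed in $O(M(n))$ ring operations via Newton-iteration inversion of the reversed divisor (which has constant term $1$, and hence is invertible modulo a power of $x$ without any ring divisions). At a leaf corresponding to $x_k$, the carried polynomial is $p \bmod (x - x_k) = p(x_k)$, producing the desired output. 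The level-by-level cost analysis of this descent mirrors that of the subproduct tree construction and again totals $O(d \log^2 d \log \log d)$ ring operations, yielding the claimed bound.

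The main obstacle, and the step requiring the most care, is establishing the fast monic-divisor polynomial division primitive in $O(M(n))$ ring operations over an arbitrary commutative ring; this is standard but subtle, relying on the observation that Newton iteration for inverting a power series modulo $x^k$ needs no ring inversions when the constant term is $1$, which is precisely the situation after reversing a monic divisor. Once both the fast-multiplication and fast-monic-division primitives are accepted as black boxes (both appearing in earlier chapters of \cite{von2013}), combining the two $O(d \log^2 d \log \log d)$ passes yields the final complexity. A minor technical point to handle is the padding when $d$ is not a power of two: choosing formal dummy roots (or simply treating missing leaves as the constant polynomial $1$) does not affect the asymptotic bound, since the total leaf count remains $\Theta(d)$.
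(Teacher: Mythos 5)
Your proposal is correct and is precisely the standard argument from the cited reference (\cite{von2013}, Ch.\ 10): build the subproduct tree with Cantor--Kaltofen multiplication in $O(d\log^2 d\log\log d)$ ring operations, then push $p$ down the tree by fast monic division (Newton iteration on the reversed divisor, which needs no ring inversions), for the same cost again. The paper itself simply cites this result without proof, so there is nothing to compare beyond noting that your reconstruction matches the textbook proof; the only hair to split is that when $\deg p = d$ the remainder at the root is not literally $p$, but since it agrees with $p$ at every $x_k$ and the extra division costs only $O(d \log d \log\log d)$, this changes nothing.
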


\begin{proof}[Proof of Lemma \ref{lem:fastlp}]
	We describe the algorithm here, as stated in Algorithm \ref{alg:f0fast}. 
	
	\begin{algorithm}[ht]
		Initialize Lists $L_0,L_1,\dots,L_\ell \leftarrow \emptyset$, for $\ell$ chosen such that $n^2 \leq 2^\ell \leq n^3$.
		$B \leftarrow \Theta(\frac{1}{\eps^2}( \log \log n + \log \delta^{-1}))$, $\; d \leftarrow \Theta(\log \log n + \log \delta^{-1})$\\
		Initialize $d$-wise independent hash function $H: [n] \to [2^\ell]$. \\
		\While{Receive update $a_i \in [n]$}{
			Let $j$ be such that $ 2^{\ell - j - 1} \leq H(a_i) < 2^{\ell - j}$ \\
			\If{$L_j$ has not been deleted}{
				Add $a_i$ to the list $L_j$ if it is not already present.\\
			}
			If $|L_j| > B$ for any $j$, delete the list $L_j$, and never add any items to it again. \\
		}
		Let $i$ be the largest index such that $|L_i| \geq \frac{1}{5}B$.\\
		Return $2^{i+1}|L_i|$ as the estimate of $\|f\|_0$\\
		\caption{Fast non-adversarial distinct elements estimation. }
		\label{alg:f0fast}
	\end{algorithm}
	
	We initialize lists $L_0,L_1,\dots,L_\ell \leftarrow \emptyset$, where $\ell$ is set so that $n^2 \leq 2^\ell \leq  n^3$. We also choose a hash function $H: [n] \to [2^\ell]$. The lists $L_i$ will store a set of identities $L_i \subset [n]$ which have occurred in the stream. We also set $B \leftarrow \Theta(\frac{1}{\eps^2}( \log \log n + \log \delta^{-1}))$. For now, assume that $H$ is fully independent.
	
	At each step when we see an update $a_i \in [n]$ (corresponding to an update which increments the value of $f_i$ by one), we compute $j$ such that $2^{\ell - j -1} \leq H(a_i) \leq 2^{\ell - j}$. Note that this event occurs with probability $2^{-(j+1)}$. Then we add the $O(\log n)$-bit identity $a_i$ to the list $L_j$ if $|L_j| < B$. Once $|L_k| = B$ for any $k \in [\ell]$, we delete the entire list $L_k$, and never add an item to $L_k$ again. We call such a list $L_k$ \textit{saturated}. At the end of the stream, we find the largest value $i$ such that $\frac{1}{5}B \leq |L_i|$, and output $2^{i+1} |L_i|$ as our estimate of $\|f\|_0$. 
	
	We now analyze the above algorithm. Let $i_0$ be the smallest index such that $\ex{|L_{i_0}| } \leq \|f\|_0 2^{-(i_0+1)} <\frac{1}{5(1+\eps)}B$. Note here that $\ex{|L_k|} = 2^{-(k+1)}\|f\|_0$ for any $k \in [\ell]$. By a Chernoff bound, with probability $1-\exp(-\Omega(-\eps^2 B)) < 1 - \delta^2/\log(n)$ we have that $|L_{i_0}| < \frac{1}{5}B$. We can then union bound over all such indices $i \geq  i_0$. This means that we will not output the estimate used from any index $i \geq i_0$. Similarly, by a Chernoff bound we have that $|L_{i_0 - 1}| = (1 \pm \eps) \|f\|_0 2^{-i_0 } < \frac{2}{5} B$ and $|L_{i_0 - 2}| = (1 \pm \eps) \|f\|_0 2^{-i_0 + 1}$, and moreover we have $\frac{2}{5(1+\eps)}B \leq \|f\|_0 2^{-i_0 + 1}  \leq \frac{4}{5}B$, meaning that the output of our algorithm will be either $|L_{i_0 - 1}|2^{i_0}$ or $|L_{i_0 - 2}|2^{i_0 -1}$, each of which yields a $(1 \pm \eps)$ estimate. Now note that we cannot store a fully independent hash function $H$, but since we only needed all events to hold with probability $1-\Theta(\delta^2/\log(n))$, it suffices to choose $H$ to be a $d$-wise independent hash function for $d = O(\log \log n + \log \delta^{-1})$, which yields Chernoff-style tail inequalities with a decay rate of $\exp(-\Omega(d))$ (see e.g. Theorem 5 of \cite{schmidt1995chernoff}).

	Next we analyze the space bound. Trivially, we store at most $O(\log n)$ lists $L_i$, each of which stores at most $B$ identities which require $O(\log n)$ bits each to store, yielding a total complexity of $O( \frac{1}{\eps^2} \log^2 n (\log \log n + \log \delta^{-1} ))$. We now show however that at any given step, there are at most $O(B )$ many identities stored in all of the active lists. To see this, let $i_0 <i_1< \dots < i_s$ be the time steps such that $\|f^{(i_j)}\|_0 = 2^{j+1}\cdot B$, and note that $s \leq \log(n)+1$. Note that before time $i_0$, at most $B$ identities are stored in the union of the lists. First, on time step $i_j$ for any $j \in [s]$, the expected size of $|L_{j - 2}|$ is at least $2|B|$ (had we never deleted saturated lists), and, with probability $1-(\delta/\log n)^{10}$ after a union bound, it holds that $|L_{j'}|$ is saturated for all $j' \leq j -2$. Moreover, note that the expected number of identities written to lists $L_{j'}$ with $j' \geq j-1$ is $\|f^{(i_j)}\|_0 \sum_{\nu \geq 1} 2^{-j+1+\nu} \leq 2B$, and is at most $4B$ with probability at least $1-(\delta/\log n)^{10}$ (using the $d$-wise independence of $H$). We conclude that on time step $i_j$, the total space being used is $O(B \log n)$ with probability at least $1-(\delta/\log n)^{10}$, so we can union bound to obtain that this space holds over all such steps $i_j$ for $j \in [s]$.
	
	Next, we must analyze the space usage at steps $\tau$ for $i_j < \tau < i_{j+1}$. Note that the number of new distinct items which occur over all such time steps $\tau$ is at most $2^{j+1} \cdot B$ by definition. Since we already conditioned on the fact that $|L_{j'}|$ is saturated for all $j' \leq j-2$, it follows that each new item is written into a list with probability at most $2^{-j}$. Thus the expected number of items which are written into lists within times $\tau$ satisfying $i_j < \tau < i_{j+1}$ is $2^{j+1} \cdot B \cdot 2^{-j} = 2 B$ in expectation, and at most $8B$ with probability $1- (\delta/\log n)^{10}$ (again using the $d$-wise independence of $H$). Conditioned on this, the total space used in these steps is at most $O(B \log n ) = O(\frac{1}{\eps^2} \log n (\log \log n + \log \delta))$ in this interval, and we then can union bound over all such $O(\log n)$ intervals, which yields the desired space.

	Finally, for the update time, note that at each stream update $a_i \in [n]$, on the first step of the algorithm, we compute the value of a $d$-wise independent hash function $H$. Na\"ively, computing a $d$-wise independent hash function requires $O(d)$ arithmetic operations (in the standard RAM model), because $H$ in this case is just a polynomial of degree $d$ over $\Z$. On the other hand, we can batch sequences of $d = O(\log \log n + \log \delta^{-1})$ computations together, which require an additive $O(d\log n) = O(\log n(\log \log n + \log \delta^{-1}))$ bits of space at any given time step to store (which is dominated by the prior space complexity). Then by Proposition \ref{prop:fastmulti}, all $d$ hash function evaluations can be carried out in $O(d \log^2 d \log \log d) = O(d \log^2 (\log \frac{\log n }{\delta}) \log \log \log \frac{\log n}{\delta})$ time. The work can then be evenly distributed over the following $d$ steps, giving a worst case update time of   $O(\log^2 (\log \frac{\log n }{\delta}) \log \log \log \frac{\log n}{\delta})$. Note that this delays the reporting of the algorithm for the contribution of updates by a total of $d$ steps, causing an additive $d$ error. However, this is only an issue if $d \geq \eps \|f\|_0$, which occurs only when $\|f\|_0 \geq \frac{1}{\eps} d$. Thus for the first $D =O( \eps^{-1} d)$ distinct items, we can store the non-zero items exactly (and deterministically), and use the output of this deterministic algorithm. The space required for this is $O(\eps^{-1} \log(n) (\log \log n + \log \delta^{-1})$, which is dominated by the space usage of the algorithm overall. After $D$ distinct items have been seen, we switch over to using the output of the randomized algorithm described here. Finally, the only other operation involves adding an identity to at most one list per update, which is $O(1)$ time, which completes the proof.
\end{proof}

We can use the prior result of Lemma \ref{lem:fastlp}, along with our argument for union bounding over adversarial computation paths of Lemma \ref{lem:computationpaths} and the flip number bound of Corollary \ref{cor:fpflip}, which results in an adversarially robust streaming algorithm for distinct elements estimation with extremely fast update time. 

\begin{theorem}\label{thm:fast}
There is a streaming algorithm which, with probability $1-n^{-(C/\eps) \log n}$ for any constant $C \geq 1$,  when run on an adversarially chosen insertion-only data stream, returns a $(1 \pm \eps)$ multiplicative estimate of the number of distinct elements at every step in the stream. The space required is $O(\frac{1}{\eps^3} \log^3  n )$, and the worst case running time is $O\left( \left(\log^2 \frac{\log n}{\eps} \right) \cdot \left(\log \log \frac{\log n}{\eps} \right)\right)$ per update.
\end{theorem}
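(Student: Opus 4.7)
The plan is to apply the computation paths framework of Lemma \ref{lem:computationpaths} on top of the fast static $F_0$ algorithm of Lemma \ref{lem:fastlp}, exploiting the latter's doubly-logarithmic update-time dependence on $\delta^{-1}$.

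First I would fix the relevant parameters. By Corollary \ref{cor:fpflip}, the flip number for $F_0$ in insertion-only streams is $\lambda := \lambda_{\eps/8, m}(\|\cdot\|_0) = O(\eps^{-1}\log m) = O(\eps^{-1}\log n)$, using $m = \poly(n)$. Output precision is $T = \poly(n)$, so Lemma \ref{lem:computationpaths} tells us to run the underlying strong $F_0$-tracking algorithm with failure probability
\[
\delta_0 \;=\; \frac{\delta}{\binom{m}{\lambda}\,T^{O(\lambda)}},
\]
whence $\log \delta_0^{-1} = O(\lambda \log n + \log \delta^{-1}) = O(\eps^{-1}\log^2 n + \log \delta^{-1})$. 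Setting $\delta = n^{-(C/\eps)\log n}$ for the desired constant $C\ge 1$ yields $\log \delta_0^{-1} = O(\eps^{-1}\log^2 n)$.

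Next I would upgrade Lemma \ref{lem:fastlp} (which is stated as a point-query algorithm) to a strong $F_0$-tracking algorithm by a trivial union bound, replacing $\delta_0$ by $\delta_0/m$; this only changes $\log \delta_0^{-1}$ by an additive $\log m = O(\log n)$, which is absorbed. Plugging this $\delta_0$ into the space bound of Lemma \ref{lem:fastlp} gives
\[
O\!\left(\tfrac{1}{\eps^2}\log n\,(\log\log n + \log \delta_0^{-1})\right) \;=\; O\!\left(\tfrac{\log^3 n}{\eps^3}\right),
\]
matching the claimed space. For the update time, the dependence on $\delta$ in Lemma \ref{lem:fastlp} is $O\bigl((\log^2 \log \tfrac{\log n}{\delta_0})\cdot(\log\log\log\tfrac{\log n}{\delta_0})\bigr)$. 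Since $\log \log(\log n/\delta_0) = O(\log\log n + \log \eps^{-1}) = O(\log\tfrac{\log n}{\eps})$, this simplifies to $O\bigl((\log^2 \tfrac{\log n}{\eps})\cdot(\log\log\tfrac{\log n}{\eps})\bigr)$, as stated. Correctness is then immediate from Lemma \ref{lem:computationpaths}: with probability $1-\delta$ the emulated static algorithm provides $(\eps/8)$-strong tracking on every one of the $\binom{m}{\lambda}T^{O(\lambda)}$ fixed input streams the deterministic adversary can possibly induce, and the rounding of outputs described in Lemma \ref{lemma:flip_number_for_sketch_switching} converts this into a $(1\pm\eps)$-approximation at every step.

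The main obstacle — really the entire motivation for introducing Lemma \ref{lem:fastlp} — is that the static subroutine must have update time that depends only doubly-logarithmically on $\delta^{-1}$. The usual ``run $O(\log \delta^{-1})$ independent copies and take the median'' boosting would inflate the per-update cost by a multiplicative $\log \delta_0^{-1} = \Theta(\log^2 n/\eps)$, completely destroying the target running time. It is precisely the batched $d$-wise-independent hash evaluation via Proposition \ref{prop:fastmulti}, which amortizes the cost of a degree-$d$ polynomial evaluation over $d$ stream steps with worst-case $O(\log^2 d\,\log\log d)$ per update, that keeps the update time essentially unchanged once we plug in the very small $\delta_0$ demanded by the computation-paths reduction.
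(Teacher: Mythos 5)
Your proposal is correct and follows essentially the same route the paper takes: Lemma \ref{lem:fastlp} (the new fast static $F_0$ algorithm) composed with the computation-paths reduction of Lemma \ref{lem:computationpaths} and the flip-number bound of Corollary \ref{cor:fpflip}. Your write-up actually spells out more of the bookkeeping than the paper's one-sentence proof does (the $\log\delta_0^{-1}=O(\eps^{-1}\log^2 n)$ calculation, absorbing the union bound over $m$ steps into $\delta_0$, and simplifying the iterated-log update time), and it correctly identifies the key point that the computation-paths route is essential here because median-boosting would blow up the per-update time by $\Theta(\log\delta_0^{-1})$; only minor nit: Lemma \ref{lem:fastlp} is a one-shot estimator rather than a point-query algorithm, but your union-bound upgrade to tracking handles it correctly either way.
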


\section{Heavy Hitters}
\label{sec:HH}

In this section, we study the popular \textit{heavy-hitters} problem in data streams. The heavy hitters problem tasks the algorithm with recovering the most frequent items in a data-set. Stated simply, the goal is to report a list $S$ of items $f_i$ that appear least $\tau$ times, meaning $f_i \geq \tau$, for a given threshold $\tau$. Generally, $\tau$ is parameterized in terms of the $L_p$ norm of the frequency vector $f$, so that $\tau = \eps \|f\|_p$. For $p>2$, this problem is known to take polynomial space \cite{AlonMS96,bar2004information}. Thus, the strongest such guarantee that can be given in sub-polynomial space is known as the $L_2$ guarantee: 
\begin{definition}\label{def:HH}
A streaming algorithm is said to solve the $(\eps,\delta)$-heavy hitters problem with the $L_2$ guarantee if the algorithm, when run on a stream with frequency vector $f \in \R^n$,  outputs a set $S \subset [n]$ such that with probability $1-\delta$ the following holds: for every $i \in [n]$ if $|f_i| \geq \eps \|f\|_2$ then $i \in S$, and if $|f_i| \leq (\eps/2) \|f\|_2$ then $i \notin S$. 
\end{definition}
\noindent
 We also introduce the related task of $(\eps,\delta)$-point queries.
\begin{definition}
A streaming algorithm is said to solve the $(\eps,\delta)$ point query problem with the $L_2$ guarantee if with probability $1-\delta$, at every time step $t \in [m]$, for each coordinate $i \in [n]$ it can output an estimate $\wh{f}^t_i$ such that $|\wh{f}^t_i - f^{(t)}_i| \leq \eps \|f^{(t)}\|_2$. Equivalently, it outputs a vector $\widehat{f}^t \in \R^n$ such that $\|f^{(t)} - \widehat{f}^t\|_\infty \leq \eps \|f^{(t)}\|_2$.\footnote{We note that a stronger form of error is possible, called the \textit{tail guarantee}, which does not count the contribution of the top $1/\eps^2$ largest coordinates to the error $\eps \|f\|_2$. We restrict to the simpler version of the $L_2$ guarantee.}
\end{definition}

 Notice that for any algorithm that solves the $(\eps,\delta)$-point query problem, if it also has estimates $R^t = (1 \pm \eps/10)\|f^{(t)}\|_2$ at each time step $t \in [m]$, then it immediately gives a solution to the $(\eps,\delta)$-heavy hitters problem by just outputting all $i \in [n]$ with $\tilde{f}^t_i > (3/4)\eps R^t$. Thus solving $(\eps,\delta)$-point queries, together with $F_2$ tracking, is a stronger property. In the following, we say that $\widehat{f}^t$ is $\eps$-correct at time $t$ if $\|f^{(t)} - \widehat{f}^t\|_\infty \leq \eps \|f^{(t)}\|_2$.
 
In this section, we demonstrate how this fundamental task of point query estimation can be accomplished robustly in the adversarial setting.  Note that we have already shown how $F_2$ tracking can be accomplished in the adversarial model, so our focus will be on point queries. Our algorithm relies on a similar sketch switching technique as used in Lemma \ref{lem:sketchswitch}, which systematically hides randomness from the adversary by only publishing a new estimate $\widehat{f}^t$ when absolutely necessary. To define what is meant by ``absolutely necessary", we will first need the following proposition.

\begin{proposition}\label{prop:HH}
Suppose that $\wh{f}^t \in \R^n$ is $\eps$-correct at time $t$ on an insertion-only stream, and let $t_1 > t$ be any time step such that $\|f^{(t_1)} - f^{(t)}\|_\infty \leq \eps \|f^{(t)}\|_2$. Then $\wh{f}^t$ is $2\eps$-correct at time $t_1$.
\end{proposition}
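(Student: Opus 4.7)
The plan is to prove this via a straightforward triangle inequality, leveraging the insertion-only assumption at the final step. First I would write
\[
\|f^{(t_1)} - \wh{f}^t\|_\infty \;\leq\; \|f^{(t_1)} - f^{(t)}\|_\infty + \|f^{(t)} - \wh{f}^t\|_\infty.
\]
The first term on the right is bounded by $\eps \|f^{(t)}\|_2$ by hypothesis, and the second term is also bounded by $\eps \|f^{(t)}\|_2$ because $\wh{f}^t$ is $\eps$-correct at time $t$. Combining these two bounds gives $\|f^{(t_1)} - \wh{f}^t\|_\infty \leq 2\eps \|f^{(t)}\|_2$.

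The remaining step is to replace $\|f^{(t)}\|_2$ by $\|f^{(t_1)}\|_2$ on the right-hand side. This is where the insertion-only hypothesis enters: since updates satisfy $\Delta_i > 0$, the frequencies satisfy $0 \leq f^{(t)}_i \leq f^{(t_1)}_i$ coordinatewise for $t_1 > t$, and hence $\|f^{(t)}\|_2 \leq \|f^{(t_1)}\|_2$. Substituting yields $\|f^{(t_1)} - \wh{f}^t\|_\infty \leq 2\eps \|f^{(t_1)}\|_2$, which is exactly the statement that $\wh{f}^t$ is $2\eps$-correct at time $t_1$. There is no real obstacle here; the proof is essentially immediate once the triangle inequality is applied and the monotonicity of $\|f^{(\cdot)}\|_2$ in insertion-only streams is invoked.
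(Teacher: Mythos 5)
Your proof is correct and follows the same triangle-inequality argument as the paper's, including the final use of monotonicity of $\|f^{(\cdot)}\|_2$ in insertion-only streams (which the paper uses implicitly in its last inequality). No differences worth noting.
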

\begin{proof}
    $
        \|\wh{f}^{t} - f^{(t_1)}\|_\infty \leq \|\wh{f}^{t} - f^{(t)}\|_\infty + \|f^{(t_1)} - f^{(t)}\|_\infty \leq \eps \|f^{(t)}\|_2 + \eps \| f^{(t)}\|_2  \leq 2\eps \|f^{(t_1)}\|_2$. 
\end{proof}


To prove the main theorem of Section \ref{sec:HH}, we will need the classic \textit{count-sketch} algorithm for finding $L_2$ heavy hitters of Charikar et al.~\cite{charikar2004finding}, which solves the more general point query problem in the static setting with high probability. 

\begin{lemma}[\cite{charikar2004finding}]\label{lem:countsketch}
There is a streaming algorithm in the non-adversarial insertion-only model which solves the $(\eps,\delta)$-point query problem, using  $O(\frac{1}{\eps^2}\log n \log \frac{n}{\delta} )$ bits of space. 
\end{lemma}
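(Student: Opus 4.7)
The plan is to implement the standard CountSketch data structure and prove its point-query guarantee is tight enough to match the stated bound. The sketch keeps $d$ independent rows, each a hash table of width $w$, where $w=\Theta(\eps^{-2})$ and $d=\Theta(\log(n/\delta))$. Row $r$ uses a pairwise-independent hash $h_r\colon [n]\to[w]$ together with a 4-wise-independent sign $s_r\colon [n]\to\{\pm 1\}$, both stored in $O(\log n)$ bits via small seeds. Updates are processed in the obvious way: on update $(a_t,\Delta_t)$, add $s_r(a_t)\Delta_t$ to bucket $h_r(a_t)$ in every row. The estimator for coordinate $i$ at time $t$ is the median over $r$ of $s_r(i)\cdot T_r[h_r(i)]$.

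The single-row analysis is standard. One writes the bucket value as $f^{(t)}_i$ plus a noise term that is a sign-randomized sum of $f^{(t)}_j$ over the (small, pairwise-independent) set of $j \neq i$ that collide with $i$ under $h_r$. Independence of the signs makes the noise mean-zero, and a direct second-moment computation bounds its variance by $\|f^{(t)}\|_2^2/w$. Choosing $w$ a sufficiently large constant times $\eps^{-2}$ and invoking Chebyshev then gives an error of at most $\eps\|f^{(t)}\|_2$ with probability at least $2/3$ in a single row.

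Next I apply the median trick across the $d$ independent rows. A Chernoff bound ensures that, except with probability $\exp(-\Omega(d))$, strictly more than half of the rows produce estimates within $\eps\|f^{(t)}\|_2$ of $f^{(t)}_i$; whenever that happens, their median must also lie in this window. Setting $d=\Theta(\log(n/\delta))$ drives this failure probability below $\delta/(nm)$, after which a union bound over all $n$ coordinates and all $m=\poly(n)$ time steps yields the full point-query guarantee simultaneously, with probability at least $1-\delta$.

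The space bound is then immediate: each of the $dw$ counters holds an integer bounded in magnitude by $mM=\poly(n)$, hence fits in $O(\log n)$ bits, and the hash/sign seeds are a lower-order additive term, for a total of $O(dw\log n)=O(\eps^{-2}\log n \log(n/\delta))$ bits. The main subtle point in the proof is not the variance calculation (that is routine) but rather ensuring the error bound holds \emph{for every coordinate at every time step}, which is what forces the extra $\log n$ factor inside $d$; in the adversarial setting later in Section~\ref{sec:HH}, this static union bound will be insufficient and must be replaced by the sketch-switching framework, but the cited static lemma itself follows from the above.
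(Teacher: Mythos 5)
Your proof is correct and gives the standard CountSketch argument for the point-query guarantee. Note that the paper does not prove Lemma~\ref{lem:countsketch} itself: it simply cites \cite{charikar2004finding} as a black box, so there is no internal proof to compare against. Your self-contained derivation matches what that reference (and the folklore around it) establishes. A few small remarks. First, pairwise independence of the signs already suffices for the variance computation; you only need $\mathbb{E}[s_r(j)s_r(k)]=0$ for $j\neq k$ and $s_r(i)^2=1$, so invoking 4-wise independence is harmless but stronger than necessary (the original Charikar et al.\ paper uses 4-wise independence for their heavy-hitter analysis with the tail guarantee, which the paper's simplified Definition~\ref{def:HH} does not require). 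Second, your union bound over $n$ coordinates and $m$ time steps is exactly what makes the cited lemma a ``tracking''-style point-query guarantee, and since $m=\poly(n)$ this costs only a constant factor inside $d=\Theta(\log(n/\delta))$, so the $O(\eps^{-2}\log n\log(n/\delta))$ space bound follows as you state. Finally, you correctly identify that this static union bound is not adversarially robust and must be combined with sketch switching in Section~\ref{sec:HH}; that observation is orthogonal to the lemma itself but shows you understand why the paper treats this as a non-adversarial primitive.
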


We are now ready to prove the main theorem of this section.

\begin{theorem}[$L_2$ point query and heavy hitters]\label{thm:HH}
Fix any $\eps,\delta>0$. 
	There is a streaming algorithm in the adversarial insertion-only model which solves the $(\eps, n^{-C})$ point query problem, and  also the $O(\eps,n^{-C})$-heavy hitters problem, for any constant $C>1$. The algorithm uses $O(\frac{\log \eps^{-1}}{\eps^3}\log^2 n)$ bits of space. 
\end{theorem}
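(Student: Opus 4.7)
The plan is to apply the sketch switching framework of Lemma~\ref{lem:sketchswitch}, adapted to the vector-valued output of the count-sketch algorithm (Lemma~\ref{lem:countsketch}), run in parallel with the adversarially robust $F_2$-tracker from Theorem~\ref{thm:fpswitch} which maintains a $(1 \pm \eps/10)$-approximation $R^t$ of $\|f^{(t)}\|_2$ at every step. I would maintain several independent instances of a $(c\eps, \delta_0)$-point query count-sketch for a small absolute constant $c$ and $\delta_0 = n^{-C'}$ with $C'$ sufficiently large; at any time one instance is ``active'' and its internal estimate $\widetilde{f}^t$ drives the switching rule below.

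The switching rule is the natural vector analogue of Lemma~\ref{lem:sketchswitch}, motivated by Proposition~\ref{prop:HH}. Let $\widehat{f}^{\mathrm{pub}}$ denote the currently published estimate, first displayed at the most recent switch time. I would continue publishing $\widehat{f}^{\mathrm{pub}}$ as long as $\|\widetilde{f}^t - \widehat{f}^{\mathrm{pub}}\|_\infty \leq c\eps R^t$, and trigger a switch the moment this is violated: reveal $\widetilde{f}^t$ as the new $\widehat{f}^{\mathrm{pub}}$ and advance to the next copy in the cycle. A triangle inequality combined with the $(c\eps)$-correctness of the active copy and the accuracy of $R^t$ then shows that the published estimate stays $O(\eps)$-correct in $\ell_\infty$ at every step, whenever all static guarantees hold. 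The $L_2$ heavy hitters output comes for free by reporting all $i$ with $|\widehat{f}^{\mathrm{pub}}_i| > \tfrac{3}{4}\eps R^t$: the threshold separation together with the point query guarantee handles both false positives and false negatives.

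For correctness against an adaptive adversary I would mimic the proof of Lemma~\ref{lem:sketchswitch}: by Yao's principle, take the adversary to be deterministic, and for each copy expose its randomness only at the moment its estimate is first published. Before that moment the stream seen by the copy is fully determined by the earlier published outputs together with the deterministic adversary's responses to a fixed hypothetical future output, so the static guarantee of Lemma~\ref{lem:countsketch} applies on that particular fixed sequence. A union bound across the copies, with $\delta_0$ scaled down by the number of copies, yields the target $n^{-C}$ failure probability.

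The main obstacle will be the quantitative space bound: achieving $O(\tfrac{\log \eps^{-1}}{\eps^3}\log^2 n)$ bits forces us to keep only $O(\eps^{-1}\log \eps^{-1})$ copies live simultaneously rather than one per switch, and a naive flip-number count can be much larger. To achieve this I would invoke the ``restart trick'' used in the proof of Theorem~\ref{thm:fpswitch}: cycle through $O(\eps^{-1}\log \eps^{-1})$ copies, reinitializing each with fresh randomness the moment it is switched off, so that from then on it estimates only the \emph{residual} stream. Because $\|f\|_2$ is monotone in insertion-only streams, by the time a restarted copy is reactivated the prefix it missed contributes at most $O(\eps)\|f^{(t)}\|_2$ in $\ell_\infty$, so its residual-stream estimate is still an $O(\eps)$-approximation of $f^{(t)}$. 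Calibrating the switching threshold constant $c$, the per-copy error parameter, and the restart cycle length so that $O(\eps^{-1}\log \eps^{-1})$ simultaneous copies suffice while the $O(\eps)$-correctness of the published estimate is preserved at every step is the most delicate part of the argument.
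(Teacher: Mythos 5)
Your overall strategy matches the paper's: run the robust $F_2$ tracker of Theorem~\ref{thm:fpswitch} alongside a cycle of count-sketch copies, expose each copy's randomness only at the moment its estimate is published, and use the restart trick to keep only $O(\eps^{-1}\log\eps^{-1})$ copies alive. The correctness-via-Yao argument and the heavy-hitters-by-thresholding step are also the same. The one place you diverge is the switching rule, and this is where your plan has a genuine quantitative gap. You switch whenever $\|\widetilde{f}^t - \widehat{f}^{\mathrm{pub}}\|_\infty > c\eps R^t$, i.e.\ whenever the currently active copy's internal estimate drifts away from the published one in $\ell_\infty$. Unwinding the triangle inequality, a switch at time $t$ (with the last switch at $t_\rho$) forces $\|f^{(t)} - f^{(t_\rho)}\|_\infty = \Omega(\eps)\|f^{(t)}\|_2$, and since the stream is insertion-only this gives only $\|f^{(t)}\|_2^2 \geq (1 + \Omega(\eps^2))\|f^{(t_\rho)}\|_2^2$. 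An adversary can saturate this: push a single fresh coordinate up to $\eps\|f\|_2$ between consecutive switches, forcing a switch while increasing $\|f\|_2^2$ by only a $(1+\Theta(\eps^2))$ factor each time. So your rule can trigger $\Theta(\eps^{-2}\log n)$ times, an extra factor of $1/\eps$ over what is budgeted.

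This breaks the restart trick as you invoke it. When a copy is cycled off and comes back after $O(\eps^{-1}\log\eps^{-1})$ switches, the $F_2$ norm has only grown by $(1+\Theta(\eps^2))^{O(\eps^{-1}\log\eps^{-1})} = 1 + O(\eps\log\eps^{-1})$, not by the $\Omega(1/\eps)$ factor needed for the missed prefix to contribute only an $\eps/100$ fraction of the $\ell_2$ (and hence $\ell_\infty$) mass. You would need $\Theta(\eps^{-2}\log\eps^{-1})$ copies for the prefix bound to kick in, costing an extra $1/\eps$ in space. The paper's fix is to drop the $\ell_\infty$-based trigger entirely and switch only when $R^t$ crosses a power of $(1+\eps/2)$; Proposition~\ref{prop:HH} together with $\|\cdot\|_\infty \leq \|\cdot\|_2$ and coordinate-wise monotonicity of insertion-only $f^{(t)}$ then guarantee that the stale $\widehat{f}^{\mathrm{pub}}$ stays $2\eps$-correct throughout each interval, while the number of switches is forced to be exactly $\Theta(\eps^{-1}\log n)$ and the restart cycle length can be $O(\eps^{-1}\log\eps^{-1})$. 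In short: the switch times should be driven by the robust scalar $F_2$ estimate, not by the hidden vector-valued sketch state.
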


\begin{proof}
Since we already know how to obtain estimates $R^t = (1 \pm \eps/100)\|f^{(t)}\|_2$ at each time step $t \in [m]$ in the adversarial insertion-only model within the required space, it will suffice to show that we can obtain estimates $\wh{f}^t$ which are $\eps$-correct at each time step $t$ (i.e., it will suffice to solve the point query problem).

	Let $1=t_1 <t_2 <\dots <t_T=m$ for $T =\Theta( \eps^{-1} \log n)$ be any set of time steps such that $\|f^{(t_{i+1})} -f^{(t_{i})} \|_2 \leq \eps \|f^{(t_{i})}\|_2$ for each $i \in [T-1]$.
	Then by Proposition \ref{prop:HH}, using that $\|f^{(t_{i+1})} -f^{(t_{i})} \|_\infty \leq \|f^{(t_{i+1})} -f^{(t_{i})} \|_2$, we know that if we output an estimate $\wh{f}^i$ which is $\eps$-correct for time $t_i$, then $\wh{f}^i$ will still be $2\eps$ correct at time $t_{i+1}$. Moreover, because the stream is insertion-only, the frequency vectors $f^{(t)}$ are coordinate-wise monotonically increasing over time. The latter implies that $\|f^{(t)} -f^{(t_{i})} \|_2 \leq \|f^{(t_{i+1})} -f^{(t_{i})} \|_2 $ for all $t \in [t_i, t_{i+1}]$, and therefore if $\wh{f}^i$ is $\eps$-correct for time $t_i$, then $\wh{f}^i$ will also be $2\eps$ correct at any time $t \in [t_i, t_{i+1}]$. Thus our approach will be to output vectors $\wh{f}^1,\wh{f}^2,\dots,\wh{f}^T$, such that we output the estimate $\wh{f}^i \in \R^n$ at all times $\tau$ such that $t_i \leq \tau < t_{i+1}$, and such that $\wh{f}^i$ is  $\eps$-correct for time $t_i$. 
	
	First, to find the time steps $t_i$, we run the adversarially robust $F_2$ estimator of Theorem \ref{thm:fpswitch}, which gives an estimate $R^t = (1 \pm \eps/100)\|f^{(t)}\|_2$ at each time step $t \in [m]$ with probability $1-n^{-C}$ for any constant $C>1$, and uses space $O(\eps^{-3} \log^2{n} \log \eps^{-1})$. Notice that this also gives the required estimates $R^t$ as stated above. By rounding down the outputs $R^t$ of this $F_2$ estimation algorithm to the nearest power of $(1+\eps/2)$, we obtain our desired points $t_i$. Notice that this also gives $T =\Theta( \eps^{-1} \log n)$ as needed, by the flip number bound of Corollary \ref{cor:fpflip}.
	
	Next, to obtain the desired $\eps$ point query estimators at each time step $t_i$, we run $T$ independent copies of the point query estimation algorithm of Lemma \ref{lem:countsketch}. At time $t_i$, we use the output vector of the $i$-th copy as our estimate $\wh{f}^i$, which will also be used without any modification on all times $\tau$ with $t_i \leq \tau < t_{i+1}$. Since each copy of the algorithm only reveals any of its randomness at time $t_i$, at which point it is never used again, by the same argument as in Lemma  \ref{lem:sketchswitch} it follows that each $\wh{f}^i$ will be $\eps$-correct for time $t_i$. Namely, since the set of stream updates on times $1,2,\dots,t_i$ are independent of the randomness used in the $i$-th copy of point-estimation algorithm, we can deterministically fix the updates on these time steps, and condition on the $i$-th copy of the non-adversarial streaming algorithm being correct on these updates. Therefore this algorithm correctly solves the $2\eps$ point query problem on an adversarial stream. The total space used is 
\[	O\left(	\eps^{-3} \log^2{n} \log \eps^{-1} + T\eps^{-2} \log^2{n} \right)
~.
\]
We now note that we can improve the space by instead running only $T' = O(\eps^{-1} \log \eps^{-1})$ independent copies of the algorithm of Lemma \ref{lem:countsketch}. Each time we use one of the copies to output the desired estimate $\wh{f}^i$, we completely restart that algorithm on the remaining suffix of the stream, and we loop modularly through all $T'$ copies of the algorithm, at each step using the copy that was least recently restarted to output an estimate vector. More formally, we keep copies $\mathcal{A}_1,\dots,\mathcal{A}_{T'}$ of the algorithm of Lemma \ref{lem:countsketch}. Each time we arrive at a new step $t_i$ and must produce a new estimate $\wh{f}^i$, we query the algorithm $\mathcal{A}_j$ that was \textit{least recently restarted}, and use the estimate obtained by that algorithm, along with the estimates $R^t$. 

  The same correctness argument will hold as given above, except now each algorithm, when used after being restarted at least once, will only be $\eps$-correct for the frequency vector defined by a sub-interval of the stream. However, by the same argument used in Theorem \ref{thm:fpswitch}, we can safely disregard the prefix that was missed by this copy of the algorithm, because it contains only an $\eps/100$-fraction of the total $L_p$ mass of the current frequency vector when it is applied again. Formally, if an algorithm $\mathcal{A}_j$ is used again at time $t_i$, and it was last restarted at time $\tau$, then by the correctness of our estimates $R^t$, the $L_2$ norm must have gone up by a factor of $(1+\eps)^{T'} = \frac{100}{\eps}$, so $\|f^{(\tau)}\|_2 \leq \eps /100\|f^{(t_i)}\|_2$. Moreover, we have that the estimate $\wh{f}^i$ produced by the algorithm $\mathcal{A}_j$ at time $t_i$ satisfies 
  $\|\wh{f}^i - (f^{(t_i)} - f^{(\tau)})\|_\infty \leq \eps \|f^{(t_i)} - f^{(\tau)}\|_2$. This follows from the fact that $(f^{(t_i)} - f^{(\tau)})$ is the frequency vector of the sub-stream on which the algorithm $\mathcal{A}_j$ has been run at time $t_i$, along with the $\eps$-correctness guarantee of the algorithm of Lemma \ref{lem:countsketch}. But then
  \begin{equation}
      \begin{split}
          \|\wh{f}^i - f^{(t_i)} \|_\infty & \leq\|\wh{f}^i - (f^{(t_i)} - f^{(\tau)})\|_\infty + \|f^{(\tau)}\|_\infty \\
     &\leq \eps \|f^{(t_i)} - f^{(\tau)}\|_2 + \|f^{(\tau)}\|_2 \qquad  \\ 
    &\leq \eps \left(\|f^{(t_i)}\|_2 + \| f^{(\tau)}\|_2\right)+ \eps/100 \|f^{(t_i)}\|_2\\ 
        &\leq \eps \|f^{(t_i)}\|_2(1+\eps)+ \eps/100 \|f^{(t_i)}\|_2 \\ 
         &\leq 2\eps \|f^{(t_i)}\|_2, 
      \end{split}
  \end{equation}
 where in the first line we added and subtracted $f^{(\tau)}$ and applied the triangle inequality, in the second line we used the fact that  $\|\wh{f}^i - (f^{(t_i)} - f^{(\tau)})\|_\infty \leq \eps \|f^{(t_i) }- f^{(\tau)}\|_2$ along with the fact that the $\ell_\infty$ norm is bounded by the $\ell_2$ norm, and in the third line we used the triangle inequality.   Thus $\wh{f}^i$ is still $2\eps$-correct at time $t_i$ for the full stream vector $f^{(t_i)}$. So by the same argument as above using Proposition \ref{prop:HH}, it follows that the output of the overall algorithm is always $4\eps$-correct for all time steps $\tau \in [m]$, and we can then re-scale $\eps$ by a factor of $1/4$. 
  Substituting the new number $T'$ of copies used into the above equation, we obtain the desired complexity.  
\end{proof}

\section{Entropy Estimation}
\label{sec:Entropy}

We now show how our general techniques developed in Section \ref{sec:framework} can be used to approximate the empirical Shannon entropy $H(f)$ of an adversarial stream. Recall that for a non-zero vector $f$, we have that $H(f) = - \sum_{i, f_i \neq 0} p_i \log( p_i)$, where $p_i = \frac{|f_i|}{\|f\|_1}$.  Also recall that for $\alpha > 0$, the $\alpha$-Renyi Entropy $H_\alpha(x)$ of $x$ is given by $H_\alpha(x) = \log\left( \frac{\|x\|_\alpha^\alpha}{\|x\|_1^\alpha}\right)/(1-\alpha)$. 

We begin with the following observation, which will allow us to consider multiplicative approximation of $2^{H(x)}$. Then, by carefully bounding the flip number of the Renyi entropy $H_\alpha$ for $\alpha$ close to $1$, we will be able to bound the flip number of $H$.

\begin{remark}
Note that any algorithm that gives an $\eps$-additive approximation of the Shannon Entropy $H(x):\R^n \to \R$ gives a $(1 \pm \eps)$ multiplicative approximation of $g(x) = 2^{H(x)}$, and vice-versa.
\end{remark}

\begin{proposition}[Theorem 3.1 of \cite{harvey2008sketching}]\label{prop:H}
Let $x \in \R^n$ be a probability distribution whose smallest non-zero value is at least $\frac{1}{m}$, where $m \geq n$. Let $0 < \eps < 1$ be arbitrary. Define $\mu = \eps /(4 \log m)$ and $\nu = \eps/(4 \log n \log m)$, $\alpha=1 + \mu/(16 \log(1/\mu))$ and $\beta = 1 + \nu/(16 \log(1/\nu))$. Then
\[ 1 \leq \frac{H_\alpha}{H} \leq 1+\eps  \text{     and    } 0 \leq H - H_\beta \leq   \eps. \]
\end{proposition}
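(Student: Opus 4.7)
The plan is to use a Taylor-expansion argument in which the parameter $\gamma := \alpha - 1$ (respectively $\beta - 1$) plays the role of a small perturbation, leveraging the hypothesis $p_i \geq 1/m$ to control the higher-order terms. First I would rewrite
\[
\sum_i p_i^\alpha \;=\; \sum_i p_i \cdot p_i^{\gamma} \;=\; \E_{i \sim p}\!\bigl[ 2^{-\gamma t_i} \bigr],
\]
where $t_i := -\log p_i \in [0, \log m]$ and $\E_{i \sim p}[t_i] = H$. A lower bound $\E[2^{-\gamma t_i}] \geq 2^{-\gamma H}$ is immediate from Jensen's inequality applied to the convex map $x \mapsto 2^{-\gamma x}$; taking $\log$ and dividing by $1 - \alpha = -\gamma$ gives the easy direction $H_\alpha \leq H$, which already yields $0 \leq H - H_\beta$ and supplies one half of the bound comparing $H$ and $H_\alpha$.

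For the matching direction I would bound $2^{-\gamma t_i}$ from above. The parameter choice $\gamma = \mu/(16 \log(1/\mu))$ with $\mu = \eps/(4 \log m)$ guarantees $\gamma (\log m)(\ln 2) \ll 1$, so the elementary inequality $e^x \leq 1 + x + x^2$ for $|x| \leq 1$ applies pointwise, giving $2^{-\gamma t_i} \leq 1 - \gamma (\ln 2)\,t_i + \gamma^2 (\ln 2)^2 t_i^2$. Taking $\E_{i \sim p}$ and using the crucial pointwise bound $t_i \leq \log m$ to estimate $\E[t_i^2] \leq (\log m)\,\E[t_i] = H \log m$ yields
\[
\sum_i p_i^\alpha \;\leq\; 1 - \gamma(\ln 2)\, H + \gamma^2 (\ln 2)^2 H \log m.
\]
Applying $\log_2(1+y) \leq y/\ln 2$ and dividing by $-\gamma$ produces the two-sided estimate
\[
H \;\geq\; H_\alpha \;\geq\; H \;-\; c\,\gamma (\log m)\, H
\]
for some absolute constant $c$.

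Substituting the parameters gives $\gamma \log m = \eps/(64 \log(1/\mu))$, which is much smaller than $\eps$, so the two-sided estimate rearranges (after absorbing constants) to the claimed $(1+\eps)$-factor approximation between $H$ and $H_\alpha$. The same computation carried out with $\gamma' := \beta - 1 = \nu/(16 \log(1/\nu))$ and $\nu = \eps/(4 \log n \log m)$ yields $H - H_\beta \leq c\,\gamma' (\log m)\, H$, and combining with the universal bound $H \leq \log n$ gives $H - H_\beta \leq c\,\eps/(64 \log(1/\nu)) \leq \eps$. The main obstacle I anticipate is calibrating the double-log factor in the denominator of $\gamma$: one needs $\log(1/\mu) = \Theta(\log\log m + \log(1/\eps))$ to be large enough to absorb the $\gamma^2 \log m$ contribution from the quadratic Taylor term. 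The step that prevents losing an extra $\log m$ factor is the tight estimate $\E[t_i^2] \leq (\log m)\, H$, rather than the trivial $\E[t_i^2] \leq (\log m)^2$; this is exactly what forces the appearance of $\log m$ (not $\log^2 m$) in the definition of $\mu$, and analogously $\log n \log m$ in the definition of $\nu$.
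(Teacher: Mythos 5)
The paper itself does not supply a proof of Proposition~\ref{prop:H}; it is stated with a direct citation to Theorem~3.1 of \cite{harvey2008sketching} and used as a black box, so there is no in-paper argument to compare against. Evaluated on its own merits, your Taylor-expansion argument is sound and its parameter calibration is correct. Writing $\sum_i p_i^\alpha = \E_{i\sim p}\bigl[2^{-\gamma t_i}\bigr]$ with $\gamma = \alpha-1$ and $t_i = -\log p_i$, Jensen's inequality gives $\sum_i p_i^\alpha \geq 2^{-\gamma H}$ and hence $H_\alpha \leq H$; the second-order bound $e^x \leq 1 + x + x^2$ (valid for $x \leq 0$, which is the relevant sign here) together with the key estimate $\E[t_i^2] \leq (\log m)\,\E[t_i] = H\log m$ --- which is precisely where the hypothesis $p_i \geq 1/m$ enters --- yields $H_\alpha \geq H\bigl(1 - O(\gamma\log m)\bigr)$. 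The substitution $\gamma\log m = \eps/(64\log(1/\mu))$ is correct and comfortably dominated by $\eps$, and for the $\beta$ branch the extra $\log n$ factor in $\nu$ is exactly what is needed to convert the multiplicative bound $H - H_\beta = O\bigl(\gamma'(\log m) H\bigr)$ into an additive $\eps$ via $H \leq \log n$. Your observation that $\E[t_i^2] \leq (\log m) H$ rather than the trivial $(\log m)^2$ is the step that keeps a single $\log m$ in the definition of $\mu$ is exactly right, and is the reason $\nu$ picks up $\log n\log m$ rather than $\log n\log^2 m$.

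One correction to the statement itself: as transcribed in this paper the first inequality reads $1 \leq H_\alpha/H \leq 1+\eps$, but Renyi entropy is non-increasing in its order, so $H_\alpha \leq H$ for $\alpha > 1$ and that ratio is at most $1$. Your derivation establishes $H_\alpha \leq H \leq (1+\eps)H_\alpha$, i.e.\ $1 \leq H/H_\alpha \leq 1+\eps$, which is the form the theorem actually takes in \cite{harvey2008sketching}; the ratio in the paper is inadvertently inverted. This transcription error has no effect downstream, since Proposition~\ref{prop:entropyflip} only invokes the $H_\beta$ half of the statement.
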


\begin{proposition}
\label{prop:entropyflip}
Let $g: \R^N \to R$ be $g(x) = 2^{H(x)}$, i.e., the exponential of the Shannon entropy. Then the $(\eps,m)$-flip number of $g$ for the insertion-only streaming model is $\lambda_{\eps,m}(g) = O(\frac{1}{\eps^2}\log^3 m (\log \log n + \log \eps^{-1}))$.
\end{proposition}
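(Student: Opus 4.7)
The plan is to reduce the $\eps$-flip number of $g = 2^H$ to the much-smaller-scale flip number of the ratio $h(x) = \|x\|_\beta^\beta / \|x\|_1^\beta$ for a value $\beta$ very slightly above $1$, and then exploit monotonicity of $\|x\|_\beta^\beta$ and $\|x\|_1^\beta$ in the insertion-only model.

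I would first apply Proposition~\ref{prop:H} with the approximation parameter set to $\eps/8$ (rather than $\eps$), obtaining $\nu = \Theta(\eps/(\log n \log m))$ and $\beta = 1 + \nu/(16\log(1/\nu))$ such that $0 \leq H - H_\beta \leq \eps/8$. A $(1 \pm \eps)$ multiplicative flip of $g$ forces $H$ to change additively by $\geq \log(1+\eps) \geq \eps/2$, which in turn forces $H_\beta$ to change by $\geq \eps/4$; so $\tilde g := 2^{H_\beta}$ undergoes at least a $(1 \pm \Omega(\eps))$ multiplicative flip. Writing $\tilde g(x) = h(x)^{1/(1-\beta)}$, a $(1 \pm \delta)$ flip of $\tilde g$ corresponds to only a $(1 \pm \Theta((\beta - 1)\delta))$ flip of $h$, because the exponent $1/(1-\beta)$ is very large in magnitude. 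Combining these two reductions reduces the task to bounding the $\Theta((\beta-1)\eps)$-flip number of $h$.

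For $h = N/D$ with $N(x) = \|x\|_\beta^\beta$ and $D(x) = \|x\|_1^\beta$, both are monotonically non-decreasing in the insertion-only model and both lie in $[1, \poly(n)]$ after the first update (using the preliminaries' bounds $\|x\|_\infty \leq M$ and $\log(mM) = O(\log n)$). Between two consecutive $\delta$-flip points of $h$, writing the multiplicative growth factors of $N$ and $D$ as $a, b \geq 1$, the flip condition $a/b \notin (1 \pm \delta)$ forces $\max(a,b) \geq 1 + \Omega(\delta)$. Partitioning the flips by whether $N$ or $D$ is the one that jumps, and using the telescoping bounds $\prod a_i, \prod b_i \leq \poly(n)$, I would obtain a $\delta$-flip-number bound of $O((\log n)/\delta)$ for $h$. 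Plugging in $\delta = \Theta((\beta-1)\eps)$ with $\beta - 1 = \Theta(\eps/(\log n \log m (\log\log n + \log\eps^{-1})))$, and using $\log m = \Theta(\log n)$ from the preliminaries, yields the claimed $O(\eps^{-2}\log^3 m (\log\log n + \log\eps^{-1}))$.

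The main obstacle is the $\eps^{-2}$ blow-up. It arises because $\beta$ must be chosen extremely close to $1$ for $H_\beta$ to approximate $H$ well additively, which makes the exponent $1/(1-\beta)$ very large, so flips of $\tilde g$ at scale $\eps$ correspond only to flips of $h$ at the much finer scale $\Theta((\beta-1)\eps) \approx \eps^2/\mathrm{polylog}$. Care is needed to balance the choice of $\beta$: small enough $\beta - 1$ to control the approximation error in the first step, but as large as possible to keep the scale blow-up in the second step manageable, so that Proposition~\ref{prop:H} is invoked at precisely the right scale.
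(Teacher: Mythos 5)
Your proposal is correct and follows essentially the same route as the paper: reduce via Proposition~\ref{prop:H} to the flip number of $2^{H_\beta}$, observe that the huge exponent $\frac{1}{1-\beta}$ collapses a $(1\pm\eps)$ flip of $2^{H_\beta}$ to a $(1\pm\Theta((\beta-1)\eps))$ flip of the ratio $\|x\|_\beta^\beta/\|x\|_1^\beta$, and then bound the latter's flip number using monotonicity of $\|x\|_1$ and $\|x\|_\beta$ in the insertion-only model together with the $\poly(n)$ range. The paper's proof is terser at the first reduction step (it simply asserts that it suffices to bound the flip number of $H_\beta$), so your explicit accounting with $\eps/8$ and the additive-to-multiplicative conversion fills in exactly the detail the paper leaves implicit.
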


The proof of the above proposition is given later in this section. We now state the main result on adversarially robust entropy estimation. An improved result is stated for the \emph{random oracle model} in streaming, which means that the algorithm is given random (read-only) access to an arbitrarily large string of random bits.

\begin{theorem}[Robust Additive Entropy Estimation] \label{thm:entropy}
There is an algorithm for $\eps$-additive approximation of entropy in the insertion-only adversarial streaming model which requires a total of $O(\frac{1}{\eps^4} \log^4 n(\log \log n + \log \eps^{-1}))$ bits of space assuming the random oracle model, and $O(\frac{1}{\eps^4} \log^6 n(\log \log n + \log \eps^{-1}))$ bits of space in the general insertion-only model. 
\end{theorem}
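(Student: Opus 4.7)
The plan is to invoke the sketch-switching framework (Lemma \ref{lem:sketchswitch}) applied to the function $g(x) = 2^{H(x)}$. By the remark preceding Proposition \ref{prop:H}, an $\eps$-additive approximation of $H$ is equivalent (up to rescaling $\eps$ by a constant) to a $(1 \pm \eps)$-multiplicative approximation of $g$, so it suffices to obtain an adversarially robust strong $g$-tracking algorithm. Proposition \ref{prop:entropyflip} already supplies the required flip-number bound $\lambda := \lambda_{\eps/8, m}(g) = O(\eps^{-2} \log^3 m (\log\log n + \log \eps^{-1}))$ for insertion-only streams.

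The next step is to obtain a static strong $g$-tracking algorithm. I would start from the static one-shot $\eps$-additive entropy estimators: \cite{clifford2013simple} using $O(\eps^{-2} \log^3 n)$ bits in the general insertion-only model, and \cite{jayaram2019towards} using $\tilde{O}(\eps^{-2} \log n)$ bits in the random oracle model. After rescaling $\eps$ by a constant, each gives $(1\pm \eps)$-multiplicative $g$-estimation at a fixed time step with probability $1-\delta'$. To lift this to strong $g$-tracking over a \emph{prescribed} fixed stream (which is what Lemma \ref{lem:sketchswitch} requires of each of its internal instances), set the failure probability to $\delta'/m$ and union-bound over the $m$ time steps; since both static algorithms have logarithmic dependence on $1/\delta'$, this costs only $\poly(\log n)$ extra factors in space.

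Now apply Lemma \ref{lem:sketchswitch} with parameter $\eps/8$ and error probability $\delta/\lambda$, maintaining $\lambda$ independent copies of the static strong $g$-tracker with fresh randomness. The resulting space is $O(L(\eps/8, \delta/\lambda) \cdot \lambda)$. In the general insertion-only model, the per-instance space is $L = O(\eps^{-2} \log^3 n)$ (the dependence on $\log(\lambda/\delta) = O(\log n)$ is absorbed into the polylogarithmic factor already present), so the total is
\[
O\!\left( \eps^{-2} \log^3 n \cdot \eps^{-2} \log^3 n \cdot (\log\log n + \log \eps^{-1}) \right) = O\!\left( \eps^{-4} \log^6 n \, (\log\log n + \log \eps^{-1}) \right).
\]
In the random-oracle model, the per-instance cost is instead $\tilde{O}(\eps^{-2} \log n)$, yielding the claimed $O(\eps^{-4} \log^4 n (\log\log n + \log \eps^{-1}))$ bound. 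Converting back from a $(1\pm \eps)$-multiplicative approximation of $g$ to an additive $\eps$-approximation of $H$ only rescales $\eps$ by a constant.

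The main obstacle in this argument is really Proposition \ref{prop:entropyflip} itself, not the assembly above. Unlike $F_p$-moments, the Shannon entropy is \emph{not} monotone in insertion-only streams (adding mass to the heaviest coordinate can lower $H$, while inserting a fresh element can raise it), so Proposition \ref{prop:monfpflip} does not apply directly; the proof of Proposition \ref{prop:entropyflip} presumably sandwiches $H$ between two Rényi entropies $H_\alpha$ and $H_\beta$ via Proposition \ref{prop:H} and bounds the flip number of each via a monotonicity argument on the relevant power-sum ratios. Note also that the restart trick from Theorem \ref{thm:fpswitch} is not used here because $g$ is non-monotone, which is why the flip number appears in full (rather than being replaced by $O(\eps^{-1} \log \eps^{-1})$ as in the $F_p$ sketch-switching bound).
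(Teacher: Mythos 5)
Your proposal is correct and follows exactly the same route as the paper: sketch switching (Lemma \ref{lem:sketchswitch}) applied to $g(x)=2^{H(x)}$, using the flip-number bound of Proposition \ref{prop:entropyflip} and the static estimators of Lemmas \ref{lem:clifford} and \ref{lem:jw19}, each converted to a tracking algorithm by setting $\delta<1/m$. Your observations that the restart trick is unavailable because $g$ is non-monotone and that the flip-number bound itself is the crux are both accurate and consistent with the paper.
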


To obtain our entropy estimation algorithm of Theorem \ref{thm:entropy}, we will first need to state the results for the state of the art non-adversarial streaming algorithms for additive entropy estimation. The first algorithm is a $O(\eps^{-2} \log^2 n)$-bit streaming algorithm for additive approximation of the entropy of a turnstile stream, which in particular holds for insertion-only streams. The second result is a $\tilde{O}(1/\eps^2)$ upper bound for entropy estimation in the insertion-only model when a random oracle is given.

\begin{lemma}[\cite{clifford2013simple}]\label{lem:clifford}
	There is an algorithm in the turnstile model that gives an $\eps$-additive approximation to the Shannon Entropy $H(f)$ of the stream. The failure probability is $\delta$, and the space required is $O(\frac{1}{\eps^2}\log^2 n \log \delta^{-1})$ bits.
\end{lemma}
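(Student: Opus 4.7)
The plan is to approximate the Shannon entropy by estimating a suitable $\alpha$-Renyi entropy for $\alpha$ very close to $1$, and to reduce the latter to estimating $F_\alpha$ and $F_1$ of the stream. First, by Proposition~\ref{prop:H}, there exists $\alpha = 1 + \mu$ with $\mu = \Theta(\eps / \log n)$ such that $H_\alpha$ is within additive $\eps$ of $H$, so it suffices to compute $H_\alpha = \log(F_\alpha / F_1^\alpha)/(1 - \alpha)$ to within additive $O(\eps)$.

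Concretely, I would estimate $F_1 = \|f\|_1$ using a standard turnstile $\ell_1$-stable (Cauchy) sketch, and estimate $F_\alpha$ using a turnstile $\ell_\alpha$-stable sketch for $\alpha$ slightly larger than $1$. Plugging the two estimates into the formula for $H_\alpha$ yields a candidate estimator. A standard median-of-means amplification then boosts the failure probability from constant to $\delta$ at a multiplicative $O(\log \delta^{-1})$ cost in space.

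The main obstacle is controlling the amplification of the multiplicative errors by the factor $1/(1 - \alpha) = 1/\mu = \Theta(\log n / \eps)$ in the denominator of the formula for $H_\alpha$. If one naively required $\tilde F_\alpha = (1 \pm \eps') F_\alpha$ and $\tilde F_1 = (1 \pm \eps') F_1$ with independent sketches, one would need $\eps' = \Theta(\eps\mu) = \Theta(\eps^2/\log n)$, producing the much weaker space bound of $O(\eps^{-4} \log^3 n \log \delta^{-1})$ bits. Overcoming this requires a refined joint estimator: the approach in \cite{clifford2013simple} maintains a single family of $\alpha$-stable sketches $Z_j = \sum_i f_i S_{j,i}$, and extracts entropy via a derivative-like combination of moments of the $Z_j$'s computed by a Taylor expansion centered at $\alpha = 1$. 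The leading first-order contributions cancel in a controlled way, so the variance of the resulting estimator scales with $\mu$ rather than with $1/\mu$. This is precisely what enables each sketch to use only $O(\eps^{-2}\log n)$ bits (accounting for $O(\log n)$ bits of precision per counter), giving the claimed $O(\eps^{-2} \log^2 n \log \delta^{-1})$ total space after amplification.

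The hardest step is the refined moment analysis that simultaneously avoids the error amplification and handles the subtleties of $\alpha$-stable distributions near $\alpha = 1$ (where the distribution degenerates to the Cauchy and its higher moments diverge, so one must work with fractional moments or truncation to regularize). A secondary issue is verifying that the sketch can be implemented in the turnstile model with $O(\log n)$-bit counters without losing the variance bound; this is handled by standard discretization and Nisan-style derandomization arguments commonly used for $p$-stable sketches.
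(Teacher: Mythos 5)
The statement you are asked to prove is a black-box citation (Lemma~\ref{lem:clifford} simply invokes~\cite{clifford2013simple}), so the paper offers no proof to compare against. Judged on its own merits, though, your sketch misattributes the method. You correctly identify the central obstacle: a naive reduction to independent estimates of $F_\alpha$ and $F_1$ (via Proposition~\ref{prop:H}) forces $\eps' = \Theta(\eps\mu)$ and balloons the space by a factor of $1/\mu^2 = \Theta(\log^2 n/\eps^2)$. But your description of how~\cite{clifford2013simple} escapes this---``a Taylor expansion centered at $\alpha=1$'' in which ``first-order contributions cancel''---is not what that paper does. That is closer to the multi-point interpolation scheme of Harvey, Nelson and Onak~\cite{harvey2008sketching}, which is a genuinely different algorithm with a different space bound. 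Clifford and Cosma do not pass through R\'enyi entropy or any $\alpha \neq 1$ at all.

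The actual mechanism in~\cite{clifford2013simple} uses \emph{maximally skewed} $1$-stable random variables (L\'evy skewness $\beta=-1$), which are not the Cauchy distribution and in particular have a finite one-sided exponential moment. The key identity is that for i.i.d.\ maximally skewed $1$-stable $X_1,\ldots,X_n$ with unit scale and a probability vector $p_i = f_i/\|f\|_1$, the sum $\sum_i p_i X_i$ is distributed as a single copy of $X_1$ shifted by $-H(p)$. Consequently $\mathbb{E}[\exp(-Y/\|f\|_1)]$, where $Y=\sum_i f_i X_i$ is the linear sketch, has a closed form proportional to $\exp(-H(f))$. The algorithm therefore averages $\exp(-Y_j/\widehat{F_1})$ over $O(\eps^{-2})$ independent sketches $Y_j$, pairs this with a separate (constant-accuracy suffices) $F_1$ estimate for the turnstile normalization, and recovers $H$ by taking a logarithm; the $O(\log\delta^{-1})$ factor comes from median amplification, and the $\log^2 n$ accounts for $O(\log n)$-bit discretized counters together with the derandomization of the stable variables. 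So the estimator targets $\exp(-H)$ directly with variance $O(\eps^2)$ and never incurs the $1/(1-\alpha)$ amplification---there is no cancellation of Taylor terms to analyze. If you intend to give a proof of this lemma rather than treat it as a black box, you should replace the ``refined joint estimator/Taylor expansion'' paragraph with the skewed-stable identity above, and drop the remark about Cauchy degeneration (the relevant distribution is totally skewed, so there is no divergence issue for the one-sided exponential moment being used).
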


\begin{lemma}[\cite{jayaram2019towards}]\label{lem:jw19}
	There is an algorithm in the insertion-only random oracle model that gives an $\eps$-additive approximation to the Shannon Entropy $H(f)$ of the stream. The failure probability is $\delta$, and the space required is $O(\frac{1}{\eps^2} (\log \delta^{-1}+ \log \log n + \log \eps^{-1}))$
\end{lemma}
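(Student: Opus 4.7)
The plan is to apply the sketch-switching framework (Lemma \ref{lem:sketchswitch}) to the function $g(f) := 2^{H(f)}$, using the static non-adversarial entropy estimation algorithms of Lemmas \ref{lem:clifford} and \ref{lem:jw19} as the base sketches. Since $H(f) \in [0, \log n]$, a $(1\pm\eps)$-multiplicative estimate of $g$ yields an $O(\eps)$-additive estimate of $H = \log_2 g$, and vice versa, so it suffices to robustly approximate $g$ in the multiplicative sense; this is the reduction that makes the framework applicable at all.

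The required flip number is supplied directly by Proposition \ref{prop:entropyflip}: setting $\lambda := \lambda_{\eps/8, m}(g) = O(\eps^{-2} \log^3 n (\log\log n + \log\eps^{-1}))$, using that $\log m = \Theta(\log n)$. To fit the ``$(\eps,\delta)$-strong $g$-tracking'' hypothesis of Lemma \ref{lem:sketchswitch}, I first upgrade each base algorithm to a strong tracker in the standard way by running it with failure parameter $\delta/m$ and union-bounding over the $m = \poly(n)$ time steps; this costs only an extra additive $\log m = O(\log n)$ in the slot occupied by $\log \delta^{-1}$.

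Plugging into Lemma \ref{lem:sketchswitch} with $\lambda$ independent copies, each instantiated at failure probability $\delta/\lambda$, the total space is $O(\lambda \cdot L(\eps/8, \delta/\lambda))$. In the random oracle model, Lemma \ref{lem:jw19} gives $L = O(\eps^{-2}(\log n + \log\log n + \log\eps^{-1} + \log(\lambda/\delta))) = O(\eps^{-2} \log n)$ for $\delta$ at least, say, $1/\poly(n)$; multiplying by $\lambda$ yields the claimed $O(\eps^{-4} \log^4 n (\log\log n + \log\eps^{-1}))$. In the general insertion-only setting, Lemma \ref{lem:clifford} gives $L = O(\eps^{-2} \log^2 n (\log n + \log(\lambda/\delta))) = O(\eps^{-2} \log^3 n)$, and multiplying by $\lambda$ yields the claimed $O(\eps^{-4} \log^6 n (\log\log n + \log\eps^{-1}))$.

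The main conceptual obstacle is that, unlike the monotone $F_p$ setting of Theorem \ref{thm:fpswitch}, Shannon entropy can both rise and fall as elements are inserted (e.g., a very skewed distribution that is gradually balanced, or a uniform one that is skewed by heavy insertions). Consequently the ``restart trick'' that there shaved a $\log n$ factor down to $\log\eps^{-1}$ does not appear to apply, and we cannot reduce the number of active copies below $\lambda$. The whole saving therefore hinges entirely on Proposition \ref{prop:entropyflip}, which itself goes through Proposition \ref{prop:H} by sandwiching $H$ between the Renyi entropies $H_\alpha$ and $H_\beta$ for $\alpha,\beta$ very close to $1$ and bounding the flip number of $\exp(H_\alpha),\exp(H_\beta)$ in the insertion-only model; that is the step where I would expect the calculation to be the most delicate, but since the proposition is already stated, the remainder of the proof is a mechanical composition of the framework with the base algorithms.
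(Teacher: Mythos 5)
The statement you were asked to prove is Lemma \ref{lem:jw19}, which is a \emph{static} (non-adversarial) entropy estimation result: an insertion-only, random-oracle-model algorithm with failure probability $\delta$ using $O(\eps^{-2}(\log \delta^{-1} + \log\log n + \log \eps^{-1}))$ bits. What you have written is instead a proof of Theorem \ref{thm:entropy}, the \emph{adversarially robust} entropy estimation theorem, whose bounds are $O(\eps^{-4}\log^4 n(\log\log n + \log\eps^{-1}))$ and $O(\eps^{-4}\log^6 n(\log\log n + \log\eps^{-1}))$. These are different statements with very different bounds: the lemma's space has no $\log n$ factor at all (only a $\log\log n$), while your conclusion carries $\log^4 n$ or $\log^6 n$ factors. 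Worse, your argument explicitly takes Lemma \ref{lem:jw19} as one of its two base sketches, so read as a proof of that lemma it is circular.

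In the paper, Lemma \ref{lem:jw19} is not proved at all; it is imported as a black box from \cite{jayaram2019towards}, and the logical dependence runs in the opposite direction from your write-up: the static lemma is an ingredient that, together with Lemma \ref{lem:clifford}, the flip-number bound of Proposition \ref{prop:entropyflip}, and sketch switching (Lemma \ref{lem:sketchswitch}), yields Theorem \ref{thm:entropy}. As a proof of \emph{that} theorem your argument essentially matches the paper's: the paper likewise sets the base algorithms' failure probability below $1/m$ to obtain tracking and then applies sketch switching with $\lambda = O(\eps^{-2}\log^3 n(\log\log n + \log\eps^{-1}))$, and your remark that the restart trick is unavailable because entropy is non-monotone is a reasonable account of why the full factor of $\lambda$ is paid. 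But no composition of the robustification framework with other streaming primitives can produce the $O(\eps^{-2}(\log\delta^{-1}+\log\log n+\log\eps^{-1}))$ bound of the lemma itself; establishing that bound requires constructing the underlying static estimator from scratch, which is the content of \cite{jayaram2019towards} and is outside the scope of this paper.
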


We now give the proof of Proposition \ref{prop:entropyflip}, and then the proof of Theorem \ref{thm:entropy}. 

\begin{proof}[Proof of Proposition \ref{prop:entropyflip}]
	By Proposition $\ref{prop:H}$, it suffices to get a bound on the flip number of $H_\beta$ for the parameters $\beta = 1 + \nu/(16 \log(1/\nu))$ and $\nu = \eps/(4 \log n \log m)$. 
	Recall $g(x) = 2^{H_\beta(x)} = (\|x\|_\beta^\beta/\|x\|_1^\beta)^{1/(1-\beta)} = (\|x\|_1 / \|x\|_\beta)^{\frac{\beta}{\beta-1}} $, and define 
	\[\tau = \eps \cdot \frac{\beta-1}{\beta} = \Theta\left(\frac{\eps^2}{(\log^2 n )\cdot \left( \log \log n + \log \eps^{-1}\right)}\right).\]
	Then, to increase  $g(x)$ by a factor of $(1+\eps)$, one must increase $\|x\|_1 / \|x\|_\beta$ by a factor of $1 + \Omega(\tau)$. Since the stream is insertion-only, both $\|x\|_1$ and $\|x\|_\beta$ are non-decreasing in the stream. Therefore, for the ratio to increase by a factor of $1 + \Omega(\tau)$, it must be that $\|x\|_1$ itself increases by a factor of at least $1 + \Omega(\tau)$. 
		Similarly, for $g(x)$ to decrease by a factor of $1+\eps$, this would requires $\|x\|_\beta$ to increase by a factor of $1 +\Omega(\tau)$.
	
	In summary, if for time steps $1\leq t_1 < t_2\leq m$ of the stream we have $g(f^{(t_2)}) > (1+\eps)g(f^{(t_1)})$, then it must be the case that $\|f^{(t_2)}\|_1 > (1 + \Omega(\tau) )\|f^{(t_1)}\|_1 $. Similarly, if we had  $g(f^{(t_2)}) < (1-\eps)g(f^{(t_1)})$, then it must be the case that $\|f^{(t_2)}\|_\beta > (1 + \Omega(\tau) )\|f^{(t_1)}\|_\beta $. Since $\|f^{(m)}\|_\beta \leq \|f^{(m)}\|_1 \leq Mn$ and $\|\cdot \|_1, \|\cdot\|_\beta$ are monotone for insertion-only streams, it follows that each of them can increase by a factor of $(1+ \Omega(\tau))$ at most $O(\frac{1}{\tau} \log n) = O(\frac{\log^3 n (\log \log n + \log \eps^{-1}) }{\eps^2})$ times during the stream, which completes the proof since $\log n = \Theta(\log m)$.
\end{proof}

\begin{proof}[Proof of Theorem \ref{thm:entropy}]
	The proof follows directly from an application of Lemma \ref{lem:sketchswitch}, using the non-adversarial algorithms of Lemmas \ref{lem:clifford} and \ref{lem:jw19}, as well as the flip number bound of Lemma \ref{prop:entropyflip}. Note that to turn the algorithms of Lemmas \ref{lem:clifford} and \ref{lem:jw19} into tracking algorithms, one must set $\delta < 1/m$, which yields the stated complexity.
\end{proof}

\section{Bounded Deletion Streams}
\label{sec:BoundedDel}
In this section, we show how our results can be used to obtain adversarially robust streaming algorithms for the \textit{bounded-deletion model}, introduced in \cite{jayaram2018data}. The bounded deletion model serves as an intermediate model between the turnstile and insertion-only model. Motivated by common lower bounds for turnstile streams, which utilize seemingly unrealistic hard instances that insert a large number of items before deleting nearly all of them, bounded deletion streams are possibly a more representative model for real-world data streams. Intuitively, a bounded deletion stream is one where the $F_p$ moment of the stream is a $\frac{1}{\alpha}$ fraction of what the $F_p$ moment would have been had all updates been replaced with their absolute values,  meaning that the stream does not delete off an arbitrary amount of the $F_p$ weight that it adds over the course of the stream. Formally, the model is as follows.

\begin{definition}
Fix any $p \geq 1$ and $\alpha \geq 1$. 
	A data stream $u_1,\dots,u_m$, where $u_i = (a_i, \Delta_i) \in [n] \times \{1,-1\}$ are the updates to the frequency vector $f$, is said to be an $F_p$ $\alpha$-bounded deletion stream if at every time step $t \in [m]$  we have 
$	\|f^{(t)}\|_p^p \geq \frac{1}{\alpha} \sum_{i=1}^n (\sum_{t' \leq t : a_{t'} = i} |\Delta_{t'}|)^p$.
\end{definition}
Specifically, the $\alpha$-bounded deletion property says that the $F_p$ moment $\|f^{(t)}\|_p^p$ of the stream is at least  $\frac{1}{\alpha} \|h^{(t)}\|_p^p$, where $h$ is the frequency vector of the stream with updates $u_i' = (a_i ,\Delta_i')$ where $\Delta_i' = |\Delta_i|$ (i.e., the absolute value stream). 
Note here that the model assumes unit updates, i.e., we have $|\Delta_i| = 1$ for each $i \in [m]$, which can be accomplished without loss of generality with respect to the space complexity of algorithms, by simply duplicating integral updates into unit updates. 

In \cite{jayaram2018data}, the authors show that for $\alpha$-bounded deletion streams, a factor of $\log n$ in the space complexity of turnstile algorithms can be replaced with a factor of $\log \alpha$ for many important streaming problems. In this section, we show another useful property of bounded-deletion streams: norms in such streams have bounded flip number. We use this fact to design adversarially robust streaming algorithms for data streams with bounded deletions.

\begin{lemma}\label{lem:flipbd}
Fix any $p \geq 1$.
	The $\lambda_{\eps,m}(\| \cdot \|_p)$ flip number of the $L_p$ norm of a $\alpha$-bounded deletion stream is at most $O(p\frac{\alpha}{\eps^p}\log n)$.
\end{lemma}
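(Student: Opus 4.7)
}
Let $h^{(t)}$ denote the frequency vector of the absolute-value stream (replacing every $\Delta_t$ by $|\Delta_t|$). The plan is to reduce the flip number bound for $\|f\|_p$ to a monotone-growth bound for $\|h\|_p^p$, which is non-decreasing throughout any stream (since each $h_i$ is non-decreasing and $p > 0$). The $\alpha$-bounded deletion property then links the two: $\|f^{(t)}\|_p^p \geq \|h^{(t)}\|_p^p / \alpha$ at all times.

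First I would record two elementary ingredients. (i) For $a,b \geq 0$ and $p \geq 1$, convexity of $x\mapsto x^p$ yields $(a+b)^p \geq a^p + b^p$, and hence, applying this coordinate-wise to $h^{(t_2)} = h^{(t_1)} + \Delta h$ where $\Delta h := h^{(t_2)} - h^{(t_1)} \geq 0$,
\[
\|h^{(t_2)}\|_p^p \;\geq\; \|h^{(t_1)}\|_p^p + \|\Delta h\|_p^p.
\]
(ii) The reverse triangle inequality together with $|f^{(t_2)}_i - f^{(t_1)}_i| \leq \Delta h_i$ (since every stream update contributes in absolute value at least as much to $h$ as to $f$) gives
\[
\bigl|\,\|f^{(t_2)}\|_p - \|f^{(t_1)}\|_p\,\bigr| \;\leq\; \|f^{(t_2)} - f^{(t_1)}\|_p \;\leq\; \|\Delta h\|_p.
\]

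Now I would take any two consecutive flip times $t_1 < t_2$, i.e.\ $\|f^{(t_1)}\|_p \notin (1 \pm \eps)\|f^{(t_2)}\|_p$. Checking the two possible cases, this implies $\bigl|\|f^{(t_2)}\|_p - \|f^{(t_1)}\|_p\bigr| \geq \eps \cdot \min(\|f^{(t_1)}\|_p,\|f^{(t_2)}\|_p)$. Combining with (ii), the monotonicity $\|h^{(t_1)}\|_p \leq \|h^{(t_2)}\|_p$, and the bounded-deletion inequality,
\[
\|\Delta h\|_p^p \;\geq\; \eps^p \min\bigl(\|f^{(t_1)}\|_p^p, \|f^{(t_2)}\|_p^p\bigr) \;\geq\; \frac{\eps^p}{\alpha}\,\min\bigl(\|h^{(t_1)}\|_p^p,\|h^{(t_2)}\|_p^p\bigr) \;=\; \frac{\eps^p}{\alpha}\,\|h^{(t_1)}\|_p^p.
\]
Plugging this into (i) gives the key growth inequality
\[
\|h^{(t_2)}\|_p^p \;\geq\; \Bigl(1 + \tfrac{\eps^p}{\alpha}\Bigr)\,\|h^{(t_1)}\|_p^p.
\]

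Finally I would bound the number of such multiplicative jumps. Since updates are unit and $m = \poly(n)$, at all times $1 \leq \|h^{(t)}\|_p^p \leq m^p = n^{O(p)}$ (after the first insertion), so $\log \|h^{(m)}\|_p^p = O(p \log n)$. The number of consecutive $(1 + \eps^p/\alpha)$-growths is therefore at most
\[
\frac{O(p \log n)}{\log(1 + \eps^p/\alpha)} \;=\; O\!\left(\frac{p\,\alpha\,\log n}{\eps^p}\right),
\]
using $\log(1 + x) = \Theta(x)$ for small $x$. Adding a $+1$ for the initial transition from the all-zeros vector gives the claimed bound.

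I don't anticipate a serious obstacle; the only care needed is step (i), where convexity of $x^p$ for $p\geq 1$ is essential (the statement would fail for $p<1$, consistent with the lemma's hypothesis), and ensuring that the ``$\min$'' of the two $f$-norms is always at least $\|h^{(t_1)}\|_p^p/\alpha$, which uses both bounded deletion and monotonicity of $\|h\|_p^p$.
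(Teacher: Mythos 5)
Your proof is correct and takes essentially the same route as the paper's: reduce to the monotone growth of $\|h^{(t)}\|_p^p$, use the bounded-deletion inequality to transfer an $\eps$-sized gap in $\|f\|_p$ to an $\eps^p/\alpha$-sized gap in $\|h\|_p^p$, and exploit superadditivity of $\|\cdot\|_p^p$ on nonnegative vectors for $p\geq 1$. The only cosmetic difference is in the first step, where you lower-bound the gap by $\eps\min(\|f^{(t_1)}\|_p,\|f^{(t_2)}\|_p)$ while the paper uses $\eps\|f^{(t_{i+1})}\|_p$ directly (the quantity the flip condition is scaled by); both lead to the same growth inequality $\|h^{(t_{i+1})}\|_p^p \geq (1+\eps^p/\alpha)\|h^{(t_i)}\|_p^p$.
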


\begin{proof} Let $h$ be the frequency vector of the stream with updates $u_i' = (a_i ,\Delta_i')$ where $\Delta_i' = |\Delta_i|$. Note that $h$ is then the frequency vector of an insertion-only stream.
	Now let $0 \leq t_1< t_2<\dots < t_k \leq m$ be any set of time steps such that $\|f^{(t_i)}\|_p \notin (1 \pm \eps) \|f^{(t_{i+1})}\|_p$ for each $i \in [k-1]$. Since by definition of the $\alpha$-bounded deletion property, we have $\|f^{(t)}\|_p \geq \frac{1}{\alpha^{1/p}}\|h^{(t)}\|_p$ for each $t \in [m]$, it follows that 
	\begin{equation}
	\begin{split}
	\| f^{(t_{i+1})} - f^{(t_i)}\|_p &\geq  \left| \|f^{(t_{i+1})}\|_p - \| f^{(t_i)}\|_p \right| 
	\geq \eps\|f^{(t_{i+1})}\|_p  
	\geq \frac{\eps}{\alpha^{1/p}}\|h^{(t_{i+1})}\|_p 
	\geq \frac{\eps}{\alpha^{1/p}}\|h^{(t_{i})}\|_p 
	\end{split}
	\end{equation}
	where in the last inequality we used the fact that $h$ is an insertion-only stream. Now since the updates to $h$ are the absolute value of the updates to $f$, we also have that $\|  h^{(t_{i+1})} - h^{(t_i)}\|_p^p \geq \|  f^{(t_{i+1})} - f^{(t_i)}\|_p^p \geq \frac{\eps^p}{\alpha}\|h^{(t_{i})}\|_p^p$. Thus 
	\begin{equation}
	\begin{split}
	\|h^{(t_{i+1})}\|_p^p & =   \|h^{(t_i)} +  \left( h^{(t_{i+1})} - h^{(t_i)} \right)\|_p^p   
	\geq  \|h^{(t_i)}\|_p^p + \| h^{(t_{i+1})} - h^{(t_i)} \|_p^p   
	\geq  (1+\frac{\eps^p}{\alpha})\|h^{(t_i)}\|_p^p
	\end{split}
	\end{equation}
	where in the second inequality, we used the fact that $\|X + Y\|_p^p \geq \|X\|_p^p + \|Y\|_p^p$ for non-negative integral vectors $X,Y$ when $p \geq 1$. Thus $\|h^{(t_{i+1})}\|_p^p$ must increase by a factor of $(1 + \eps^p/\alpha)$ from $\|h^{(t_{i})}\|_p^p$ whenever $\|f^{(t_i)}\|_p \notin (1 \pm \eps) \|f^{(t_{i+1})}\|_p$. Since $\| 0 \|_p^p = 0$, and $\|h^{(m)}\|_p^p \leq M^p n \leq n^{cp}$ for some constant $c>0$, it follows that this can occur at most $O(p\frac{\alpha}{\eps^p}\log n)$ many times. Thus $k = O(p\frac{\alpha}{\eps^p}\log n)$, which completes the proof. 
\end{proof}

We now use our computation paths technique of Lemma \ref{lem:computationpaths}, along with the space optimal turnstile $F_p$ estimation algorithm of \cite{kane2010exact}, to obtain adversarially robust algorithms for $\alpha$-bounded deletion streams. Specifically, we show that we can estimate the $F_p$ moment of a bounded deletion stream robustly. We remark that once $F_2$ moment estimation can be done, one can similarly solve the heavy hitters problem in the robust model using a similar argument as in Section \ref{sec:HH}, except without the optimization used within the proof of Theorem \ref{thm:HH} which restarts sketches on a suffix of the stream. The resulting space would be precisely an $(\frac{\alpha}{\eps}\log n)$-factor larger than the space stated in Theorem \ref{thm:HH}. 


\begin{theorem}\label{thm:bd}
Fix $p \in [1,2]$, $\alpha \geq 1$, and any constant $C>1$. Then there is an adversarially robust $F_p$ estimation algorithm for $\alpha$-bounded deletion streams which, with probability $1-n^{-C}$, returns at each time step $t \in [m]$ an estimate $R^t$ such that $R^t = (1 \pm \eps) \|f^{(t)}\|_p^p$. The space used by the algorithm is $O( \alpha \eps^{-(2+p)} \log^3 n)$.
\end{theorem}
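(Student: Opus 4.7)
The plan is to invoke the computation-paths framework (Lemma~\ref{lem:computationpaths}) applied to the space-optimal turnstile $F_p$-estimator of Kane--Nelson--Woodruff \cite{kane2010exact}, using the bounded-deletion flip-number estimate of Lemma~\ref{lem:flipbd} to control the union bound. The three ingredients are: (i) a flip-number bound for the functional $g(f) = \|f\|_p^p$ on $\alpha$-bounded deletion streams, (ii) a static strong $F_p$-tracking algorithm with excellent dependence on $\log \delta^{-1}$, and (iii) the black-box robustification from Section~\ref{sec:framework}.

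First, I would translate Lemma~\ref{lem:flipbd}, which is phrased for $\|\cdot\|_p$, into a statement about $\|\cdot\|_p^p$. A $(1\pm \eps)$ multiplicative change in $\|f\|_p^p$ is equivalent to a $(1\pm \Theta(\eps/p))$ change in $\|f\|_p$, so $\lambda_{\eps,m}(\|\cdot\|_p^p) \leq \lambda_{\Theta(\eps/p),m}(\|\cdot\|_p) = O(p \cdot \alpha \cdot (p/\eps)^p \log n)$. Since $p \in [1,2]$ is a constant, this gives a flip number of $\lambda = O(\alpha \eps^{-p} \log n)$.

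Next, I would take the turnstile $F_p$-estimator of \cite{kane2010exact}, which uses $O(\eps^{-2} \log n \log \delta^{-1})$ bits and succeeds with probability $1-\delta$ on any fixed stream. Union bounding over the $m = \poly(n)$ time steps by setting the failure probability to $\delta/m$ gives an $(\eps,\delta)$-strong $F_p$-tracking algorithm using $O(\eps^{-2} \log n \cdot (\log \delta^{-1} + \log n))$ bits. This satisfies the hypothesis of Lemma~\ref{lem:computationpaths}. Now I apply that lemma: with target robust failure probability $n^{-C}$ and output precision $O(\log n)$, the required static failure probability is $\delta_0 = n^{-C} / (\binom{m}{\lambda} n^{O(\lambda)}) = n^{-O(\lambda)}$, so $\log \delta_0^{-1} = O(\lambda \log n) = O(\alpha \eps^{-p} \log^2 n)$. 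Plugging in yields total space $O(\eps^{-2} \log n \cdot \alpha \eps^{-p} \log^2 n) = O(\alpha \eps^{-(2+p)} \log^3 n)$, which matches the claimed bound.

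I do not anticipate a serious obstacle: bounded-deletion streams are a special case of turnstile streams, so \cite{kane2010exact} applies off-the-shelf, and the only care needed is the routine bookkeeping in the flip-number rescaling (from $\eps$-flip of $\|\cdot\|_p^p$ to $\Theta(\eps/p)$-flip of $\|\cdot\|_p$) and the inclusion of the $\log m$ factor when upgrading the one-shot estimator to strong tracking. Both are absorbed into the $\log n$ factors of the final bound, and no additional structural property of bounded-deletion streams beyond Lemma~\ref{lem:flipbd} is required.
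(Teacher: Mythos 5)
Your proposal is correct and follows essentially the same route as the paper: invoke the Kane--Nelson--Woodruff turnstile $F_p$-estimator, upgrade it to $(\eps,\delta_0)$-strong tracking by a union bound over time steps, and then apply the computation-paths reduction (Lemma~\ref{lem:computationpaths}) with the bounded-deletion flip-number bound (Lemma~\ref{lem:flipbd}). The one place where you are actually more careful than the paper is in noting that Lemma~\ref{lem:flipbd} is phrased for $\|\cdot\|_p$ while the theorem is about $\|\cdot\|_p^p$, and handling the $\Theta(\eps/p)$ rescaling explicitly; the paper cites the lemma without remarking on this, though the conclusion is the same since $p\in[1,2]$ is a constant.
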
   

\begin{proof}
	We use the turnstile algorithm of \cite{kane2010exact}, which gives an estimate $R^t = (1 \pm \eps)\|f^{(t)}\|_p^p$ at a single point $t \in [m]$ with probability $1-\delta$, using $O(\eps^{-2} \log n \log \delta^{-1})$ bits of space. Then for any $\delta_0 \in (0,1)$, we can run this algorithm with failure parameter $\delta = \delta_0/\text{poly}(m)$, and union bound over all steps, to obtain that $R^t = (1 \pm \eps)\|f^{(t)}\|_p^p$  at all time steps $t \in [m]$ with probability $1- \delta_0 $. Thus, this gives a $(\eps,\delta_0)$-strong $F_p$ tracking algorithm using $O(\eps^{-2} \log n \log (n/\delta_0) )$ bits of space. The theorem then follows from applying Lemma \ref{lem:computationpaths}, setting the failure probability to be $n^{-C}$, along with the flip number bound of Lemma \ref{lem:flipbd}. 
\end{proof}

\section{Adversarial Attack Against the AMS Sketch}\label{sec:AMS}

It was shown by \cite{HardtW13} that linear sketches can be vulnerable to adaptive adversarial attacks in the turnstile model, where both insertions and deletions are allowed (see Subsection \ref{subsec:related}). 
In this section, we demonstrate that algorithms based on linear sketching can in some cases be susceptible to attacks even in the \emph{insertion-only} model; Specifically, we show this for the well known Alon-Matias-Szegedy (AMS) sketch \cite{AlonMS96} for estimating the $L_2$ norm of a data stream. 
To this end, we describe an attack fooling the AMS sketch into outputting a value which is not a good approximation of the norm $\|f\|_2^2$ of the frequency vector. Our attack provides an even stronger guarantee: for any $r \geq 1$ and an AMS sketch with $r/\eps^2$ rows, our adversary needs to only create $O(r)$ adaptive stream updates before it can fool the AMS sketch into outputting an incorrect result.

We first recall the AMS sketch for estimating the $L_2$ norm. The AMS sketch generates (implicitly) a random matrix $A \in \R^{t \times n}$ such that the entries $A_{i,j} \sim \{-1,1\}$ are i.i.d.\ Rademacher.\footnote{In fact, the AMS sketch works even if the entries within a row of $A$ are only $4$-wise independent. Here, we show an attack against the AMS sketch if it is allowed to store a fully independent sketch $A$.} The algorithm stores the sketch $A f^{(j)}$ at each time step $j$, and since the sketch is linear it can be updated throughout the stream: $A f^{(j+1)} = A f^{(j)} + A \cdot e_{i_{j+1}} \Delta_{j+1}$ where $(i_{j+1} ,\Delta_{j+1})$ is the $j+1$-st update. The estimate of the sketch at time $j$ is $\frac{1}{t}\|Af^{(j)}\|_2^2$, which is guaranteed to be with good probability a $(1 \pm \eps)$ estimate of $\|f^{(j)}\|_2^2$ in non-adversarial streams if $t = \Theta(\eps^{-2})$. 

We now describe our attack. Let $S$ be a $t \times n$ Alon-Matias-Szegedy sketch. Equivalently, $S_{i,j}$ is i.i.d.\ uniformly distributed in $\{ -t^{-1/2}, t^{-1/2}\}$, and the estimate of AMS is $\|Sf^{(j)}\|_2^2$ at the $j$-th step. The protocol for the adversary is as follows. In the following, we let $e_i \in \R^n$ denote the standard basis vector which is zero everywhere except the $i$-th coordinate, where it has the value $1$.

\begin{algorithm}[!ht]
	\caption{Adversary for AMS sketch} \label{alg:AMS}
    $w \leftarrow C\cdot \sqrt{t} \cdot e_1$\\
    \For{ $i=2,\dots,m$   }{
    old $\leftarrow$ $\|Sw\|_2^2$\\
    $w \leftarrow w +  e_i$ \\
    new $\leftarrow \|Sw\|_2^2$ \\ 
     \If{ $\text{new} - \text{old}< 1$}
     {$w \leftarrow w + e_i$\\
     }
     \ElseIf{ $\text{new} - \text{old} =1$}
     {with probability $1/2$, set $w \leftarrow w + e_i$\\
     }
    }
	
\end{algorithm}
Note that the vector $w$ in Algorithm~\ref{alg:AMS} is always equal to the current frequency vector of  the stream, namely $w =  f^{(j)}$ after the $j$-th update. Note that the Algorithm~\ref{alg:AMS} can be implemented by an adversary who only is given the estimate $\|Sw\|_2^2 = \|Sf^{(j)}\|_2^2$ of the AMS sketch after every step $j$ in the stream. 
To see this, note that the adversary begins by inserting the first item $(i_1,\Delta_1) = (1, C\cdot \sqrt{t} )$ for a sufficiently large constant $C$. Next, for $i=2,\dots,n$, it inserts the item $i \in [n]$ once if doing so increases the estimate of AMS by more than $1$. If the estimate of AMS is increased by less than $1$, it inserts the item $i$ twice (i.e., it inserts an update $(i,2) \in [n] \times \Z$). Lastly, if inserting the item $i \in [n]$ increases the estimate of AMS by \textit{exactly} 1, the adversary chooses to insert $i \in [n]$ once with probability $1/2$, otherwise it inserts $i \in [n]$ twice. 

We now claim that at the end of a stream of $m = O(t)$ updates, with good probability $\|Sf^{(m)}\|_2^2 \notin (1 \pm \eps) \|f^{(m)}\|_2^2$ (note that, at the end of the stream, $w = f^{(m)})$. In fact, we show that regardless of the number of rows $t$ in the AMS sketch, we force the AMS to give a solution that is not even a $2$-approximation. 

\begin{theorem}\label{thm:AMS}
Let $S \in \R^{t \times n}$ be an AMS sketch (i.i.d. Rademacher matrix scaled by $t^{-1/2}$), where $1 \leq t < n/c$ for some constant $c$. Suppose further that the adversary performs the adaptive updates as described in Algorithm \ref{alg:AMS}. Then with probability $9/10$, by the $m$-th stream update for some $m  = O(t)$, the AMS estimate $\|Sf^{(m)}\|_2^2$ of the norm  $\|f^{(m)}\|_2^2$ of the frequency vector $f$ defined by the stream fails to be a $(1 \pm 1/2)$ approximation of the true norm $\|f^{(m)}\|_2^2$. Specifically, we will have $\|Sf^{(m)}\|_2^2  < \frac{1}{2} \|f^{(m)}\|_2^2$. 
\end{theorem}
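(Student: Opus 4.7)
The plan is to track the signed discrepancy $D_i := \|Sw_i\|_2^2 - \|w_i\|_2^2$ between the AMS estimate and the true squared norm after step $i$, and show that the adversary drives $D_i \leq -\tfrac12 \|w_m\|_2^2$ after $m = \Theta(t)$ updates. Writing $X_i := \langle Sw_{i-1}, Se_i\rangle$ and using $\|Se_i\|_2^2 = t \cdot t^{-1} = 1$ exactly, a single insertion of $e_i$ changes $\|Sw\|_2^2$ by $2X_i+1$ and $\|w\|_2^2$ by $1$, while two consecutive insertions change them by $4X_i+4$ and $4$ respectively. The adversary's rule (insert once if $X_i \geq 0$, twice if $X_i < 0$) therefore produces the exact increment
\[
  D_i - D_{i-1} \,=\, 2X_i\,\mathbf{1}[X_i \geq 0] + 4X_i\,\mathbf{1}[X_i < 0] \,=\, 2X_i - 2\max(-X_i, 0).
\]

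Conditional on $v_{i-1} := Sw_{i-1}$, the variable $X_i = \sum_{k=1}^t (v_{i-1})_k S_{k,i}$ is a symmetric sum of independent $\pm t^{-1/2}$ Rademachers, so $\E[X_i \mid v_{i-1}] = 0$ and $\E[\max(-X_i,0) \mid v_{i-1}] = \tfrac12 \E[|X_i| \mid v_{i-1}]$. Khintchine's inequality gives $\E[|X_i| \mid v_{i-1}] \geq c\|v_{i-1}\|_2/\sqrt{t}$ for an absolute constant $c > 0$, and hence $\E[D_i - D_{i-1} \mid v_{i-1}] \leq -c\|v_{i-1}\|_2/\sqrt{t}$. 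The initial condition $w_0 = C\sqrt{t}\,e_1$ gives $\|v_0\|_2 = C\sqrt{t}$, so the first few steps each contribute $-\Omega(C)$ to $\E[D_i]$.

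To turn this local drift into a global bound, I need to certify that $\|v_i\|_2$ stays at scale $\sqrt{t}$ for the whole run. A direct computation of $\E[\|v_i\|_2^2 - \|v_{i-1}\|_2^2 \mid v_{i-1}] = -\E[|X_i|] + 5/2$ shows the sketch norm has a mean-reverting drift: negative when $\|v_{i-1}\|_2 \gg \sqrt{t}$ and positive when $\|v_{i-1}\|_2 \lesssim \sqrt{t}$. I introduce the stopping time $\tau$ as the first step with $\|v_i\|_2 \notin [c_1\sqrt{t},\,c_2\sqrt{t}]$ for constants $0 < c_1 < C < c_2$, and combine the mean-reverting drift with subgaussian tails for $X_i$ to show $\Pr[\tau \leq m] \leq 1/100$ for $m \leq Kt$ with $K$ an appropriately chosen constant. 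On $\{\tau > m\}$ each conditional increment of $D_i$ has mean at most $-cc_1$, giving $\E[D_{m\wedge\tau}] \leq -cc_1 m = -\Omega(t)$.

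Finally, I apply Azuma--Hoeffding to the stopped martingale $D_{i\wedge\tau} - \E[D_{i\wedge\tau}]$: on $\{\tau > i-1\}$ each increment is $O(|X_i|+1) = O(\sqrt{\log m})$ with probability $1-m^{-2}$ (by subgaussian concentration of $X_i$ given $\|v_{i-1}\|_2 \leq c_2\sqrt{t}$), so the total deviation over $m$ steps is $O(\sqrt{m\log m}) = o(m)$. Combined with the mean bound, $D_m \leq -\Omega(m)$ with probability at least $9/10$. Since $\|w_m\|_2^2 = C^2 t + \sum_{i \geq 2} c_i^2 \leq C^2 t + 4m$, picking $K$ sufficiently large in terms of $C$ and $c$ yields $|D_m| \geq \tfrac12 \|w_m\|_2^2$, i.e., $\|Sw_m\|_2^2 \leq \tfrac12 \|w_m\|_2^2$. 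The main technical obstacle will be decoupling the drift of $D_i$ from the random walk $\|v_i\|_2$, since both are driven by the same $X_i$'s; the stopping time $\tau$ isolates the ``well-behaved'' phase of the process on which both the expected decrease of $D_i$ and the per-step concentration of $X_i$ hold uniformly, and verifying the two-sided control on $\|v_i\|_2$ rests jointly on the mean-reverting drift and the subgaussian tails of $X_i$.
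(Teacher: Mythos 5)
Your approach is genuinely different from the paper's, and more technically demanding. The paper tracks $s_i = \|Sw^i\|_2^2$ directly and proves the one-step bound $\E[s_{i+1}] \leq \E[s_i] + 5/2 - \tfrac{1}{\sqrt{2t}}\E[\sqrt{s_i}]$; it then argues by contradiction that if $\E[\sqrt{s_i}]$ stayed above $C\sqrt{t/200}$ for $C^2t+2$ steps (with $C>200$), the recursion would force $\E[s_i]<-1$, which is impossible, so some step has $\E[\sqrt{s_i}]<C\sqrt{t/200}$ and Markov's inequality finishes. This is a pure expectation/contradiction argument: no stopping times, no Azuma--Hoeffding, no trajectory-level control of $\|v_i\|_2$. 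Your reformulation in terms of the discrepancy $D_i$ is elegant (the per-step drift $-\E[|X_i|]$ is cleaner than the raw $s_i$ recursion), but it trades the paper's ``existence of a good time step $i$'' shortcut for an $\Omega(m)$ whole-trajectory martingale-concentration argument that is substantially harder to close.

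The concrete gap is precisely the one you flag but do not resolve: the stopping-time threshold $c_1$ is pinned between two conflicting constraints. For the drift to eventually overcome $\tfrac12\|w_m\|_2^2 \approx \tfrac12(C^2t+\Theta(m))$ over $m=\Theta(t)$ steps you need $c\,c_1 > 2$ (with your worst-case bound $\|w_m\|_2^2\le C^2t+4m$; slightly less if you concentrate $\|w_m\|_2^2$), hence $c_1 \gtrsim 2\sqrt2$. But the process $\|v_i\|_2$ is mean-reverting toward the level where $\E[|X_i|]=5/2$, which, depending on the actual Khintchine ratio $\E[|X_i|]/(\|v_{i-1}\|_2/\sqrt t) \in [1/\sqrt2,1]$, can be as low as $2.5\sqrt t$ --- below your required $c_1\sqrt t$. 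In that regime $\Pr[\tau\le m]$ is close to $1$, not $\le 1/100$, and the proof as written breaks. The fix is to observe that exiting through the \emph{lower} boundary is a win --- $s_\tau < c_1^2 t \le \tfrac12 C^2 t \le \tfrac12\|w^\tau\|_2^2$ once $C^2 \ge 2c_1^2$ --- so one should condition on the exit direction rather than require $\tau>m$ outright. Without this case split, the argument does not go through; with it, the whole stopping-time machinery largely collapses to the paper's simpler observation that $s_i$ must dip below $\tfrac12 C^2 t$ while $\|w^i\|_2^2$ never does. Also note that $D_{i\wedge\tau}-\E[D_{i\wedge\tau}]$ is not a martingale (the centering is deterministic, not predictable); you want to center by the Doob decomposition $\sum_{j\le i\wedge\tau}\E[D_j-D_{j-1}\mid\mathcal F_{j-1}]$, which is routine but needs to be stated.
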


\begin{proof}
For $j = 2,3\dots$ we say that the $j$-th \textit{step} of Algorithm \ref{alg:AMS} is the step in the for loop where the parameter $i$ is equal to $j$, and we define the first step to just be the state of the stream after line $1$ of Algorithm \ref{alg:AMS}.
Let $w^i$ be the state of the frequency vector at the end of the $i$-th step of the for loop in Algorithm \ref{alg:AMS}, let $y^i = Sw^i$ be the AMS sketch at this step, and let $s_i = \|Sw^i\|_2^2$ be the estimate of AMS at the same point. Note that we have $w^1 = C \cdot \sqrt{t} \cdot e_1$ for a sufficiently large constant $C$, and thus $s_1 = C^2 t$. 
That is, already on the first step of the algorithm we have $\|w^1\|_2^2 = C^2 t$, and moreover since the stream is insertion-only, we always have $\|w^i\|_2^2 \geq C^2 t$. Thus, it suffices to show that with good probability, at some time step $i \geq 2$ we will have $s_i < C^2 t/2$.

First, note that at any step $i=2,3,\dots$, if we add $e_{i+1}$ to the stream once, we have $s_{i+1} = \|y^i + Se_{i+1}\|_2^2 =\sum_{j=1}^t( (y_j^i)^2 + 2y_j^i S_{j,i+1} + 1/t) = s_i + 1 + 2\sum_{j=1}^t y_j^i S_{j,i+1}$. If we add $e_{i+1}$ twice, we have $s_{i+1} = \|y^i + 2Se_{i+1}\|_2^2  = s_i + 4 + 4\sum_{j=1}^t y_j^i S_{j,i+1}$. By definition of the algorithm, we choose to insert $e_{i+1}$ twice if $\|y^i + Se_{i+1}\|_2^2 - s_i = 1 + 2\sum_{j=1}^t y_j^i S_{j,i+1} < 1$, or more compactly whenever $\sum_{j=1}^t y_j^i S_{j,i+1} < 0$. If $\sum_{j=1}^t y_j^i S_{j,i+1} > 0$, we insert $e_{i+1}$ only once. Finally, if $\sum_{j=1}^t \allowbreak y_j^i S_{j,i+1} \allowbreak= 0$, we flip an unbiased coin, and choose to insert $e_{i+1}$ either once or twice with equal probability $1/2$. Now observe that the random variable $\sum_{j=1}^t y_j^i S_{j,i+1}$ is symmetric, since for any fixed $y^i$ the $S_{j,i+1}$'s are symmetric and independent. Thus, we have that
\begin{equation}
    \begin{split}
      \bex{\Big|\sum_{j=1}^t y_j^i S_{j,i+1} \Big| }   &= \bex{\sum_{j=1}^t y_j^i S_{j,i+1}  \; | \; Se_{i+1} \text{ inserted once} }\\
        &=  -\bex{\sum_{j=1}^t y_j^i S_{j,i+1}  \; | \; Se_{i+1} \text{ inserted twice} }. \\
    \end{split}
\end{equation}

Now recall that the vector $S_{*,i+1}$ given by the $(i+1)$-st column of $S$ is just an i.i.d. Rademacher vector scaled by $1/\sqrt{t}$. Thus, by Khintchine's inequality \cite{haagerup1981best}, we have that  $\ex{|\sum_{j=1}^t y_j^i S_{j,i+1} | } = \frac{1}{\sqrt{t}} \cdot \alpha \cdot \|y^i\|_2 =  \alpha \sqrt{s_i}/\sqrt{t}$ for some absolute constant $\alpha > 0$ (in fact, $\alpha \geq 1/\sqrt{2}$ suffices by Theorem 1.1 of \cite{haagerup1981best}). 
Putting these pieces together, the expectation of the estimate of AMS is then as follows:
\begin{equation}
    \begin{split}\label{eqn:amsexp}
      \E[s_{i+1}] &= \frac{1}{2} (s_i + 1 + 2\alpha \frac{\sqrt{s_i}}{\sqrt{t}}) + \frac{1}{2}(s_i + 4  - 4\alpha \frac{\sqrt{s_i}}{\sqrt{t}}) \\
&= s_i + 5/2- \alpha \sqrt{s_i/t} \\ 
&\leq s_i + 5/2-  \sqrt{s_i/2t} .
    \end{split}
\end{equation}
Where again the last line holds using the fact that $\alpha \geq 1/\sqrt{2}$. 
Thus $\ex{s_{i+1}} = \ex{s_i} + 5/2 - \ex{  \sqrt{s_i/2t}}$. 
First, suppose there exists some $i \leq  C^2 t + 2$ such that $\ex{  \sqrt{s_i}} < C\sqrt{t/200}$. This implies by definition that
$  \sum_j \sqrt{j} \cdot \pr{s_i = j} <     C\sqrt{t/200}$, thus 
\begin{equation}
   \sqrt{C^2t/2} \cdot \pr{s_i \geq C^2 t/2} \leq \sum_{j \geq C^2 t/2} \sqrt{j} \cdot \pr{s_i = j} 
     <  \sqrt{C^2 t/200}
\end{equation}
Which implies that $\pr{s_i \geq C^2 t/2} \leq 1/10$. Thus, at step $i$, we have $\pr{s_i < C^2 t/2} > 9/10$, and thus by time step $i$ we have fooled the AMS sketch with probability at least $9/10$. Thus, we can assume that for all $i=2,3,\dots,(C^2 t + 2)$ we have $\ex{  \sqrt{s_i}}  \geq C\sqrt{t/200}$. Setting $C>200$, we have that $\ex{s_{i+1}} < \ex{s_i} -1$ for all steps $i=2,3,\dots,(C^2 t + 2)$   However, since $s_1 = C^2 t$, this implies that $\ex{s_{C^2 t + 2}} < -1$, which is impossible since $s_j$ is always the value of a norm. This is a contradiction, which implies that such an $i$ with $i \leq  C^2 t + 2$ and $\pr{s_i \geq C^2 t/2} \leq 1/10$ must exist, demonstrating that we fool the AMS sketch by this step with probability $9/10$, which
completes the proof.
\end{proof}

\section{Optimal Distinct Elements via Cryptographic Assumptions}
\label{sec:crypto}
Estimating the number of distinct elements ($F_0$-estimation) in a data stream is a fundamental problem in databases, network traffic monitoring, query optimization, data mining, and more. After a long line of work, \cite{woodruff2004optimal,kane2010optimal} settled space (and time) complexities of $F_0$-estimation by giving an algorithm using $O(\varepsilon^{-2} + 	\log n)$ bits of space (with constant worst-case update time). The tracking version of this algorithm (where it outputs a correct estimate at each time step) takes memory $O(\varepsilon^{-2}(\log \eps^{-1} + \log \log n) + 	\log n)$ bits and is also optimal \cite{blasiok2018optimal}. 

However, these results only hold in the (standard) static setting. We show that using cryptographic tools (pseudorandom functions), we can transform this algorithm, using the same amount of memory to be robust in the adversarial setting as well, where the adversary is assumed to be {\em computationally bounded} (as opposed to our other results which have no assumptions on the adversary whatsoever).

The transformation actually works for a large class of streaming algorithms. Namely, any algorithm such that when given an element that appeared before, does not change its state at all (with probability 1). Since the $F_0$ tracking algorithm of \cite{blasiok2018optimal} has this property, we can black-box apply our results to this algorithm. 

First, we show how this transformation works assuming the existence of a truly random function, where the streaming algorithm has access to the function without needing to store it explicitly (the memory is free). This is known as the random oracle model. 
The model is appealing since we have different heuristic functions (e.g., SHA-256) that behave, as far as we can tell in practice, like random functions. Moreover, there is no memory cost when using them in an implementation, which is very appealing from a practical perspective.
Nevertheless, we discuss how to implement such a function with cryptographic tools (e.g., pseudorandom functions) while storing only a small secret key in the memory. 

\begin{theorem}[Distinct Elements by Cryptographic Assumptions]\label{thm:distinct_elements_crypto}
	In the random oracle model, there is an $F_0$-estimation (tracking) streaming algorithm in the adversarial setting, that for an approximation parameter $\varepsilon$ uses $O(\varepsilon^{-2}(\log 1/\varepsilon + \log \log n) + 	\log n)$ bits of memory, and succeeds with probability $3/4$.
	
	Moreover, given an exponentially secure pseudorandom function, and assuming the adversary has bounded running time of $n^c$, where  $c$ is fixed, the random oracle can be replaced with a concrete function and the total memory is $O(\varepsilon^{-2}(\log 1/\varepsilon + \log \log n) + c\log n)$.
\end{theorem}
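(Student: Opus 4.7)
The plan is to exploit a structural property of the Bła\.siok $F_0$-tracking algorithm from Lemma~\ref{lem:f0strong}: it is \emph{idempotent on repeats}, meaning its internal state (and therefore its output) does not change, with probability $1$, when it processes an element it has already seen. This is intrinsic to the KMV/smallest-$k$-style sketches on which that algorithm is based. Given this property, the idea is to pre-compose $A$ with a random function so that the adversary cannot predict which updates correspond to ``new'' elements in $A$'s view, converting adaptive streams into effectively static ones at essentially no space cost.

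Concretely, I would let $A$ be the static tracking algorithm of Lemma~\ref{lem:f0strong} instantiated with failure probability $1/8$ on universe $[N]$ where $N=n^{3}$, and let $\pi:[n]\to[N]$ be a uniformly random function (free in the random oracle model). The robust algorithm $A'$ feeds $\pi(a_t)$ to $A$ on each update and outputs $A$'s current estimate. Note that the universe blow-up changes the $\log n$ term in $A$'s space only by a constant factor.

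The key lemma I need is a coupling between the adversarial interaction and a static run: for any (even unbounded) adversary $\mathcal{A}$, the distribution of the output transcript $(R_1,\dots,R_m)$ produced in the adversarial game against $A'$ is identical to the distribution of $A$'s output transcript on a static stream where each genuinely new element receives a fresh uniform $\pi$-hash. I would prove this by induction on $t$. By idempotence, the outputs up through step $t$ are a deterministic function of $A$'s internal coins together with $\pi(b_1),\dots,\pi(b_{j_t})$, where $b_1,\dots,b_{j_t}$ is the list of distinct elements in order of first appearance in $a_1,\dots,a_t$. The adversary's choice of $a_{t+1}$ depends on $\pi$ only through these $j_t$ queried values; hence, whenever the adversary introduces a brand-new element $b_{j_t+1}$, its hash $\pi(b_{j_t+1})$ is independent of everything in the adversary's view and uniform on $[N]$. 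A union bound over $\binom{m}{2}\le n^{2}$ pairs shows that $\pi$ is injective on the distinct elements seen except with probability $\le 1/n$, and conditioned on this event the sequence of hashes fed to $A$ is distributed exactly as in a static run on $j_t$ distinct inputs. The static correctness guarantee of $A$ from Lemma~\ref{lem:f0strong} then transfers to $A'$, with a total success probability of $1-1/8-1/n\ge 3/4$.

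For the cryptographic derandomization, I would follow the standard PRF-replacement recipe: choose a uniform key $k$ of length $\Theta(c\log n)$ and replace $\pi$ by $F_k$ for an exponentially secure pseudorandom function $F$. Any time-$n^c$ adversary noticeably more successful against $F_k$ than against a truly random $\pi$ would yield a distinguisher contradicting the exponential security of $F$; hence the success probability shifts by at most a negligible amount. The stored memory is $A$'s state plus the PRF key, i.e.\ $O(\varepsilon^{-2}(\log 1/\varepsilon+\log\log n)+c\log n)$ bits. I expect the main technical obstacle to be the precise formulation and verification of the idempotent-on-repeats property for the specific tracking algorithm of Lemma~\ref{lem:f0strong}, ensuring that its transition on a repeat is deterministically a no-op (not merely w.h.p.) and that its internal randomness is drawn independently of $\pi$; once these details are nailed down, the inductive coupling and the PRF step are essentially routine.
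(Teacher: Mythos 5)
Your proposal is correct and takes essentially the same route as the paper's proof: both precompose the idempotent-on-repeats $F_0$-tracking algorithm of Lemma~\ref{lem:f0strong} with a random hash (you use a random function with range $n^3$ plus a collision union bound, the paper uses a random permutation but notes that a wide random function suffices), both establish by induction that the adversarial transcript is distributed as in a static run, both invoke the tracking guarantee to cover the adversary's adaptive choice of stream length, and both finish with the standard PRF-replacement argument with a $\Theta(c\log n)$-bit key.
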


\begin{proof}
	For simplicity, in the following proof, we assume that we have a random permutation. We note that the proof with a random function is exactly the same conditioned on not having any collisions. If the random function maps the universe to a large enough domain (say of size at least $m^2$) then there will be no collisions with high probability. Thus, it suffices to consider permutations.
	The solution is inspired by the work of \cite{NaorY15} (which had a similar adaptive issue in the context of Bloom filters).
	Let $\Pi$ be a truly random permutation, and let $S$ be a tracking steaming algorithm with parameter $\varepsilon$. Let $L(\varepsilon,n)$ be the memory consumption of the algorithm. We construct an algorithm $S'$ that works in the adversarial setting as follows. Upon receiving an element $x$ the algorithm $S'$ computes $x'=\Pi(x)$ and feeds it to $S$. The output of $S'$ is exactly the output of $S$. Notice that applying $\Pi$ to the stream does not change the number of distinct elements.
	
	We sketch the proof. Assume towards a contradiction that there is adaptive adversary $A'$ for $S'$.
	Consider the adversary $A'$ at some point in time $t$, where the stream is currently $x_1,\ldots,x_t$.
	It has two options: (i) it can choose an element $x_i$, where $i \in [t]$ that appeared before, or (ii) it could choose a new element $x^* \notin \{x_1,\ldots,x_i\}$. Since the state of $S'$ does not change when receiving duplicate items, and also does not change the number of distinct elements, option (i) has no effect on the success probability of $A'$. Thus, in order to gain a chance of winning, $A'$ must submit a new query. Thus, we can assume without loss of generality that $A'$ submits only distinct elements.
	
	For such an adversary $A'$ let $D_t$ be the distribution over states of $S'$ at time $t$. Let $D'_t$ be the distribution over states of $S'$ for the fixed sequence $1,2,\ldots,t$. We claim that $D_t \equiv D'_t$ (identical distributions) for every $t \in [n]$. We show this by induction. The first query is non-adaptive, denote it by $x_1$. Then, since $\Pi$ is a random permutation, we get that $\Pi(1) \equiv\Pi(x_1)$ which is what is fed to $S$. Thus, the two distribution are identical. Assume it holds for $t-1$. Consider the next query of the adversary (recall that we assumed that this is a new query). Then, for any $x_t$ (that has not been previously queried by $\Pi$) the distribution of $\Pi(x_t) \equiv \Pi(t)$, and therefore we get that $D_t \equiv D'_t$.
	
	Given the claim above, we get that $A'$ is equivalent to a static adversary $A$ that outputs $1,2,\ldots,k$ for some $k \in [n]$. However, the choice of $k$ might be adaptive. We need to show that $S'$ works for all $k$ simultaneously. Here we use the fact that $S$ was a tracking algorithm (and thus also $S'$), which means that $S'$ succeeds on every time step. Thus, for the stream $1,2,\ldots,m$, the algorithm $S'$ succeeds at timestamp $k$, which consists of $k$ distinct elements. Thus, if there exists an adaptive choice of $k$ that would make $S'$ fail, then there would exist a point in time, $k$, such that $S'$ fails at $1,\ldots,k$. Since $S$ is tracking, such a point does not exist (w.h.p.).
	
	For the second part of the theorem, we note that we can implement the random function using an exponentially secure pseudorandom function (see \cite{Goldreich05} for the precise definition and discussion). For a key $K$ of size $\lambda$, the pesudorandom function $F_K(\cdot)$ looks random to an adversary that has oracle access to $F_K(\cdot)$ and runs in time at most $2^{\gamma\lambda}$ for some constant $\gamma>0$. Let $A$ be an adversary that runs in time at most $n^c$. Then, we set $O(\lambda=1/\gamma \cdot c \cdot \log n)$ and get that $A$ cannot distinguish between $F_K(\cdot)$ and the truly random function except when a negligible probability event occurs (i.e., the effect on $\delta$ is negligible and hidden in constants). Indeed, if $A$ would be able to succeed against $S'$ when using the oracle $F_K(\cdot)$, but, as we saw, it does not succeed when using a truly random function, then $A'$ could be used to break the security of the pseudorandom function.
		
	To complete the proof, we note that the only property of $A$ we needed was that when given an element in the stream that has appeared before, $A$ does not change its state at all. This property holds for many $F_0$ estimation algorithms, such as the one-shot $F_0$ algorithm of \cite{kane2010optimal}, and the $F_0$ tracking algorithm of \cite{blasiok2018optimal}. Thus we can simply use the $F_0$ tracking algorithm of \cite{blasiok2018optimal}, which results in the space complexity as stated in the theorem.
\end{proof}

\begin{remark}
		There are many different ways to implement such a pseudorandom function with exponential security and concrete efficiency. First, one could use heuristic (and extremely fast) functions such as AES or SHA256 (see also \cite{NaorY15} for a discussion on fast implementations of AES in the context of hash functions). Next, one can assume that the discrete logarithm problem (see \cite{mccurley1990discrete} for the precise definition) over a group of size $q$ is exponentially hard. Indeed, the best-known algorithm for the problem runs in time $O(\sqrt{q})$. Setting $q \ge 2^{\lambda}$ gets us the desired property for $\gamma=1/2$. 
\end{remark}

\section*{Acknowledgments}
The authors wish to thank Arnold Filtser for invaluable feedback, and the anonymous reviewers for many helpful suggestions. This work was done in part in the Simons Institute for the Theory of Computing.

\bibliographystyle{alpha}
\bibliography{ref-v2}


\end{document}